\documentclass[11pt]{article}
%\input{macro.tex}
%\usepackage{draftwatermark}
%\SetWatermarkText{DRAFT}
%\SetWatermarkScale{1}
% MATH packages
\usepackage{amssymb, latexsym, mathtools, amsthm}
\pagestyle{plain}
\begingroup
    \makeatletter
    \@for\theoremstyle:=definition,remark,plain\do{%
        \expandafter\g@addto@macro\csname th@\theoremstyle\endcsname{%
            \addtolength\thm@preskip\parskip
            }%
        }
\endgroup
\usepackage{enumerate}
\usepackage{pgf,tikz}
\usepackage{pdfsync}
\usepackage{txfonts}
\usepackage[T1]{fontenc}
\usepackage{booktabs}
 \usepackage{graphicx}
%COLORS
\definecolor{dnrbl}{rgb}{0,0,0.3}
\definecolor{dnrgr}{rgb}{0,0.3,0}
\definecolor{dnrre}{rgb}{0.5,0,0}
\usepackage[colorlinks=true, citecolor=dnrgr, linkcolor=dnrre, urlcolor=dnrre] {hyperref}
\usepackage{xcolor}
\usepackage{parskip}
\usepackage{tabularx, colortbl}

\usepackage{geometry}
 \geometry{
 left=26mm,
 right=26mm,
 top=31mm,
 bottom=31mm,  
 footskip=1.2cm
%left=29mm,
% right=29mm,
% top=34mm,
% bottom=34mm,  
% footskip=1.2cm
 } 

 \usepackage[scriptsize, up]{caption}
\usepackage{pgf,tikz}
\usetikzlibrary{decorations, decorations.pathmorphing}
\usetikzlibrary {shapes}

%THEOREM ENVIRONMENTSMinority
\theoremstyle{plain}
\newtheorem{thm}{Theorem}[section]

\newtheorem{prop}[thm]{Proposition}
\newtheorem{lem}[thm]{Lemma}
\newtheorem{coro}[thm]{Corollary}
\newtheorem{defi}[thm]{Definition}

\numberwithin{equation}{subsection}
\makeatletter

\let\c@table\c@figure
\makeatother

% STANDARD MATH NOTATIONS

\newcommand{\Nat}{\mathbb{N}}

\newcommand{\restr}{\upharpoonright}  %restriction
\newcommand{\un}{\uparrow} %undefined
\newcommand{\de}{\downarrow} %defined

\DeclarePairedDelimiter{\tuple}{\langle}{\rangle}

%Names
 
\newcommand{\DHMN}{Downey, Hirschfeldt, Miller, and Nies\ }

\newcommand{\BG}{Becher and Grigorieff \ }
\newcommand{\BFGM}{Becher, Figueira, Grigorieff, and Miller\ }
\newcommand{\KS}{Ku{\v{c}}era and Slaman\ }
\newcommand{\CHKW}{Calude, Hertling, Khoussainov, and Wang\ }
\newcommand{\DHN}{Downey, Hirschfeldt, and Nies\ }

\renewcommand{\DH}{Downey and Hirschfeldt\ }

%Symbol Definitions for Randomness
\newcommand{\ml}{Martin-L\"{o}f }
\newcommand{\pz}{$\Pi^0_1$\ }

\newcommand{\eg}{e.g.\ }
\newcommand{\ie}{i.e.\ }
\newcommand{\ce}{c.e.\ }

\newcommand{\lce}{left-c.e.\ }

\newcommand{\rce}{right-c.e.\ }

\newcommand{\pf}{prefix-free }

\renewenvironment{abstract}
 { \normalsize
  \list{}{
    \setlength{\leftmargin}{.0cm}%
    \setlength{\rightmargin}{\leftmargin}%
    }%
  \item {\bf \abstractname.} \relax}
 {\endlist}

\newcommand{\rcep}{$\mathbf{0}'$-right-c.e.\ }
\newcommand{\lcep}{$\mathbf{0}'$-left-c.e.\ }

\newcommand{\lcepp}{$\mathbf{0}^{(2)}$-left-c.e.\ }
\DeclarePairedDelimiter{\dbra}{\llbracket}{\rrbracket}
\DeclarePairedDelimiter{\extp}{\mathtt{EXT}(}{)} 
\DeclarePairedDelimiter{\dom}{\mathtt{DOM}(}{)} 
\DeclarePairedDelimiter{\fin}{\mathtt{FIN}(}{)} 
\DeclarePairedDelimiter{\comp}{\mathtt{COM}(}{)} 
\DeclarePairedDelimiter{\cofin}{\mathtt{COF}(}{)} 
 
\newcommand{\rceppp}{$\mathbf{0}^{(3)}$-right-c.e.\ }
\newcommand{\lceppp}{$\mathbf{0}^{(3)}$-left-c.e.\ }

  \makeatletter
\newtheorem*{rep@theorem}{\rep@title}
\newcommand{\newreptheorem}[2]{%
\newenvironment{rep#1}[1]{%
 \def\rep@title{#2 \ref{##1}}%
 \begin{rep@theorem}}%
 {\end{rep@theorem}}}
\makeatother
 \newreptheorem{thm}{Theorem}
\newreptheorem{lem}{Lemma}

\title{Random numbers as probabilities of machine behaviour 
\thanks{This research was partially supported by NSF DMS-1362273. Barmpalias was also supported by the 
1000 Talents Program for Young Scholars from the Chinese Government, grant no.\ D1101130,
the Chinese Academy of Sciences (CAS) and the Institute of Software of the CAS.
We thank Veronica Becher and the referees for many corrections and suggestions that improved the presentation of this article.}}
\author{George Barmpalias  \and Douglas Cenzer \and Christopher P.~Porter}
\date{\today}
\begin{document}
\maketitle
\begin{abstract}
A fruitful way of obtaining meaningful, possibly concrete,
algorithmically random numbers is to consider a potential behaviour of a Turing machine
and its probability with respect to a measure (or semi-measure) on the input space of 
binary codes. 
%For example, Chaitin's $\Omega$ is a well known \ml random number that is obtained by considering the halting probability of a universal prefix-free machine. In the last decade, similar examples have been obtained  for higher forms of randomness, i.e.\ randomness relative to strong oracles. 
%A notable result about Chaitin's $\Omega$ numbers was that they are exactly the
%\ml random reals with computably enumerable left cut (also known as c.e.\ reals).
In this work we obtain characterizations of the
algorithmically random reals in higher randomness classes, 
as probabilities of certain events that can happen when an oracle
universal machine runs probabilistically on a random oracle. Moreover we apply our analysis
to several machine models, including oracle Turing machines, 
prefix-free machines, and models for infinite online computation.
We find that in many cases the arithmetical complexity of a property is directly reflected in the 
strength of the algorithmic randomness of the probability with which it occurs, on any given universal machine.
On the other hand, we point to many examples where this does not happen and the probability is
a number whose algorithmic randomness is not the maximum possible (with respect to its arithmetical complexity).
Finally we find that, unlike the halting probability of a universal machine, the probabilities of 
more complex properties like totality, cofinality, computability or completeness 
do not  necessarily 
have the same Turing degree when they are defined with respect to different universal machines.
\end{abstract}
\vspace*{\fill}
\noindent{\bf George Barmpalias}\\[0.5em]
\noindent
State Key Lab of Computer Science, 
Institute of Software, Chinese Academy of Sciences, Beijing, China.
School of Mathematics, Statistics and Operations Research,
Victoria University of Wellington, New Zealand.\\[0.2em] \textit{E-mail:} \texttt{barmpalias@gmail.com}.
\textit{Web:} \texttt{\href{http://barmpalias.net}{http://barmpalias.net}}\par\medskip
\noindent{\bf Douglas Cenzer}\\[0.5em]
\noindent Department of Mathematics
University of Florida, Gainesville, FL 32611\\[0.2em]
\textit{E-mail:} \texttt{cenzer@math.ufl.edu.}
\textit{Web:} \texttt{\href{http://people.clas.ufl.edu/cenzer}{http://people.clas.ufl.edu/cenzer}}\par\medskip
\noindent{\bf Christopher P.~Porter}\\[0.5em]
\noindent Department of Mathematics and Computer Science,
Drake University,
Des Moines, IA 50311\\[0.2em]
%Department of Mathematics
%University of Florida, Gainesville, FL 32611
\textit{E-mail:} \texttt{cp@cpporter.com.}
\textit{Web:} \texttt{\href{http://cpporter.com}{http://cpporter.com}}\par
\vfill \thispagestyle{empty}
\clearpage
\section{Introduction}
In this work we examine the probabilities of certain outcomes of a universal Turing machine that
runs on random input from the point of view of algorithmic complexity.
The input of a Turing machine can be a finite binary string that is written on the input tape, or even
an infinite binary stream whose bits are either written on the input tape at the start of the computation or
are provided upon demand as the computation progresses.  Such computations
can be interpreted in terms of probabilistic machines and randomized computation or as
oracle computations with a random oracle.
The study of probabilistic Turing machines goes back to \cite{deleeuw1955}, where
it was shown that if a randomized machine computes a function $f$ with positive probability,
then $f$ is computable. Similarly, if  a randomized machine produces an enumeration of
a set $W$ with positive probability, then $W$ is computably enumerable,  \ie $W$ can be enumerated
by a deterministic machine without any oracle.

One way to introduce algorithmic randomness to the study of randomized computation  is to
consider what a machine can do given an algorithmically random oracle of a certain strength.
This line of research has produced a large body of work and a sub-discipline in the area between
algorithmic randomness and computability; see \cite{typical,typvol} for a presentation and a 
bibliography. A common theme
in this topic is that algorithmically random oracles of sufficient strength cannot compute useful
sets or functions. For example, oracles that are random relative to the halting problem 
do not compute any 
complete extension of Peano Arithmetic (Stephan \cite{MR2258713frank}).

In this paper we apply the theory of algorithmic randomness to randomized computation in a different way.
Given a property of the outcome of a randomized computation, we consider the probability
with which this property occurs on a given universal machine.
Then we consider the algorithmic randomness of this probability as a real number in $[0,1]$.
It turns out that in this way we can characterize the real numbers that are probabilities of certain key
properties (like halting, totality, computability, etc.),  as the algorithmically random numbers of certain
well known classes in algorithmic information theory.

\subsection{Previous work on the topic and outline of our results in historical context}\label{TIA78CsuO}
One can trace this methodology back to Chaitin \cite{MR0411829} where it was shown
that the probability that a universal self-delimiting 
machine halts is a \ml random real. Becher, Daicz, and Chaitin \cite{firstBC}
applied this idea to a model for infinite computations which was 
introduced earlier in Chaitin \cite{CHAITIN1976233} and
exhibited a probability which is 2-random, \ie \ml random relative to the halting problem.
Becher and Chaitin \cite{fuin/BecherC02} studied a related model for infinite computations in order to
exhibit a probability which is 3-random, \ie \ml random relative to 2-quantifier sentences in arithmetic. 
A subsequent series of papers by Becher and Grigorieff
\cite{DBLP:journals/jsyml/BecherG05,tcs/BecherG07,DBLP:journals/jsyml/BecherG09}
as well as Becher, Figueira, Grigorieff and Miller \cite{jsyml/BecherFGM06}, 
followed this idea with the main aim of exhibiting algorithmically random numbers that are concrete,  
in the sense that they can be expressed as probabilities that a universal machine will have a certain
outcome when it is run on a random input.
Sureson \cite{Sureson20151} continues this line of research, showing that many natural open sets
related to the universal machine have random probabilities of different algorithmic strength, while
his arguments appeal to completeness phenomena as opposed to machine arguments.

A crucial development  in the study of halting probabilities was
the cumulative work of 
Solovay \cite{Solovay:75u}, \CHKW \cite{Calude.Hertling.ea:01}, and \KS \cite{Kucera.Slaman:01}.
In these papers it was established that a real number is the halting probability of a universal 
self-delimiting machine if and only if
it is \ml random and has a 
computably enumerable (in short, c.e.) left Dedekind cut.
Moreover, as we explain in the following, this analysis shows that the same characterization remains
true for most commonly used types of Turing machines and is not specific to the self-delimiting model
which is sometimes used in algorithmic information theory.
A similar approach was followed in \cite{Barmpalias3488}
in order to provide a characterization of the {\em universality probability} of a self-delimiting machine,
a notion that was introduced in \cite{Wallace99minimummessage}.

In the present paper we take the aforementioned work as a starting point and develop
a methodology for characterizing algorithmically random numbers as probabilities of
certain outcomes of a universal machine like totality, computability, and co-finiteness.
There are three main ways that our work stands out from previous attempts on this topic. 
We provide

\begin{enumerate}[\hspace{0.3cm}(1)]
\item characterizations of algorithmically random numbers as probabilities;
\item results that hold uniformly for several types of Turing machine models;
\item examples of probabilities which are {\em not} as random as their
arithmetical complexity would suggest;
\item demonstrations that the probability of properties that are not definable with one quantifier is not
Turing degree-invariant with respect to different universal machines.
\end{enumerate}

We have a few remarks for each of these clauses.
With respect to (1), the only such characterizations in the literature 
are the two that we mentioned, namely the halting probability and the universality probability.
However, even in these cases the characterizations were proved for the specific model of
self-delimiting machines although, as we will see, they apply in a much more general class of models.
With respect to (2) we would like to note that
all examples of random probabilities from the literature that we have encountered refer to
self-delimiting models and there is a reason for this preference. Self-delimiting models
correspond to prefix-free domains, which means that the probabilities considered
turn out to be measures of open sets, which are easily represented and manipulated as sets of strings.
In the present paper we free ourselves from this restriction, considering probabilities that are measures
of classes that have higher Borel complexity, so they are not necessarily open or closed or even 
$G_{\delta}$. Moreover, as we will see, this generality does not introduce an extra burden in our 
analysis. The reason for this pleasing fact can be traced to the theory of 1-random reals with
left Dedekind cut (also known as \lce reals) and the fact that the measures of 
$\Sigma^0_n$ classes are the same as the measures of $\Sigma^0_1$ classes
relative to $\mathbf{0}^{(n-1)}$, \ie the Turing degree of the halting problem iterated $n-1$ times.
As a result of this generality, we can look at classic properties of computability theory like
totality, cofinality, completeness or computability, and characterize their probabilities in various standard
models of Turing machines.

Clause (3) deserves a somewhat more elaborate remark.
There is a direct contrast between the definability of a mathematical object  and 
its algorithmic complexity in terms of effective randomness. Indeed, according to \ml 
\cite{MR0223179} a sequence  is random if it is not contained in a null set of a certain
arithmetical complexity. So, for instance, since each arithmetically definable sequence $x$ belongs to the null set $\{x\}$, which has a arithmetical definition with the same complexity as that of $x$, $x$ cannot be random with respect to null sets of this particular arithmetic complexity. When one looks at properties of the universal machine, it often turns out that
the probabilities of the occurrence of these properties are, in a sense, maximally random with respect to the arithmetical class that they
belong to. Such is the case with the halting probability of a universal \pf machine, 
also known as Chaitin's $\Omega$, which is a real with a $\Sigma^0_1$ left Dedekind cut and is
{\em 1-random} \ie random with respect to all \ml tests, a notion based on $\Sigma^0_1$ definable open sets.
Indeed, a real with  $\Sigma^0_1$ left Dedekind cut cannot be random with respect to stronger statistical tests
such as \ml tests relative to the halting problem, or even random with respect to null $\Pi^0_2$ classes
(which give a slightly stronger notion of randomness than \ml randomness).
Grigorieff, as discussed in \cite{jsyml/BecherFGM06}, conjectured  that
in a certain more restricted context, the randomness of a property of a universal machine
has strength corresponding to the arithmetical class that it belongs to. This conjecture, in
the specialized form that it appeared in, was refuted
by \BFGM in \cite{jsyml/BecherFGM06}, where many examples of properties of every level
of the  arithmetical hierarchy were given which have probabilities which are not even random relative
to the halting problem. Later work by \BG
\cite{DBLP:journals/jsyml/BecherG05,tcs/BecherG07,DBLP:journals/jsyml/BecherG09}
salvages this conjecture but in specific contexts
(using specialized machines or non-standard notions of universality). In Section \ref{rT1udReGjP}
we point to a rather common but nontrivial 
reason why a property of a universal machine may fail to have
maximal algorithmic complexity with respect to its arithmetical class.

Regarding (4), recall that the halting probabilities with respect to the various universal
Turing machines all have the same Turing degree, namely the degree of the halting problem. In
other words, given the halting probability of one universal machine as an oracle, we can compute
the halting probability of any other universal machine. Thus we may say that the (universal) 
halting probability is degree-invariant with respect to the choice of the underlying machine.
This is no longer true when one considers probabilities with more complex arithmetical complexity.
This was demonstrated by Barmpalias and Dowe \cite{Barmpalias3488} who studied
the {\em universality probability}, a notion from  \cite{Wallace99minimummessage} which is 
only definable with four alternating quantifiers, and showed that
its Turing degree depends on the choice of the underlying machine.
In the present paper we give several more examples of this phenomenon and observe that it typically
occurs when the arithmetical 
complexity of the probability is above the complexity of the halting problem.

\subsection{Random versus algorithmically random}
The notion of randomness has several interpretations in different fields of science, \eg in probability theory,
in algorithmic randomness, and in quantum physics. This paper has to do with the first two of these uses,
as we examine the algorithmic randomness of certain probabilities associated with
running a machine probabilistically on an oracle chosen at random with respect to the uniform distribution
on the binary outcomes of a hypothetical experiment.
Hence there are two different ways that we use the word `random' in this paper. The first one is {\em random}
as in probability theory, \ie an object drawn at random from a given distribution (in our case, the uniform
distribution on a binary outcome). The second one is {\em algorithmically random} in the sense
of algorithmic randomness, \ie referring to a fixed individual object which may or may not be
{\em algorithmically} random with respect to a given definition of algorithmic randomness.

The first meaning of the word `random' is indicated by writing the word in italics
(like {\em  random}) and its use has been restricted with reference to the oracle of a
Turing machine. When the word is used in the sense of algorithmic randomness, it always
refers to the probability of a given outcome of a randomized machine 
as a real number. 
For these reasons there should be no confusion regarding the two
uses of the word `random' in the following discussion. 
In addition to this caveat, we note that in formal statements regarding randomness of 
probabilities, the word `random' appears with a qualification indicating the type or strength 
of algorithmic randomness in question,
\eg \ml random or 2-random. 

\subsection{Turing machines}\label{IzttMzGQSQ}
We do not aim at a comprehensive classification of all properties of universal machines on all models of computation.
However our methodology applies rather uniformly to a number of standard models of computation 
and properties of these models. In order to demonstrate this fact, we have chosen 
three models of interest and examine the probabilities that certain key properties
from classical computability theory occur when computation is performed probabilistically on
a {\em random} input stream, which can also be thought of as an oracle.
%The simplest case to consider is prefix-free or self-delimiting machines computing strings.
%In this case one would typically ask if the computation on a random input stream halts.
%Moreover in the case that it does halt, we could also ask if the output string has some property, 
%\eg if it belongs to a certain set of strings. 
%Note that in this case where halting is the main question,
%the self-delimiting or prefix-freeness condition on the machine is essential. It means
%that upon reading the input stream, the machine can halt at most once and hence it cannot perform
%multiple distinct computations with different outputs depending of the length of the initial segment of the stream it has read
%at each stage.
% As we discussed above, the halting probabilities were characterized in
%\cite{Solovay:75u,Calude.Hertling.ea:01,Kucera.Slaman:01} and the restricted halting probabilities
%(where one requires in addition that the output belongs to a set of strings) were studied by
%\BFGM in \cite{jsyml/BecherFGM06}. The characterization of the {\em universality probabilities}
%in \cite{Barmpalias3488} also refers to \pf machines.
In this paper we focus our attention to the following models of computation:
\begin{enumerate}[\hspace{0.5cm}(1)]
\item Oracle (Turing) machines computing (partial) functions.
\item Monotone machines computing strings or streams.
\item Chaitin's infinitary self-delimiting machines, computing strings, or streams.
\end{enumerate}

In case (1) we consider a standard notion of computation of a (partial) function $f:\Nat\to\Nat$ from 
a Turing machine with a {\em random} oracle. Such a computation may be infinitary, in the sense that
we compute a potential infinite object $f$, but it is a juxtaposition of countably many finite computations,
namely the computation of each $f(n)$ for $n$ in the domain of $f$. 
In this case we can ask more complex
question about the output $f$, for example if the domain of $f$ has a certain property or if $f$ is total.
Universality of oracle machines is defined in a  standard way.
A set of strings is called \pf if it does not contain two strings such that one is a proper 
extension of the other. Moreover we let $\ast$ denote concatenation of strings and let
$\simeq$ denote the fact that either the expressions on either side of it are undefined
(or do not halt) or both of these expressions are defined and are equal.

\begin{defi}[Universal oracle machines]
Given an effective list $(M_e)$ of all oracle machines, an 
oracle machine $U$ is universal if 
there exists a computable function $e\mapsto\sigma_e$ from numbers into a \pf set of strings, such that $U(\sigma_e\ast X,n)\simeq M_e(X,n)$ for all $e, X,n$.
\end{defi}

In case (2) we consider  
{\em monotone machines}, which 
were used by Levin in \cite{levinthesis,Levin:73} in order to give a definition of
the algorithmic complexity of finite objects.  
Let $\preceq$ denote the prefix relation between strings. 

\begin{defi}[Monotone machine]
A monotone machine is a 
Turing machine $M$ with input/output finite binary strings and the 
monotonicity property that
if $\sigma\preceq \tau$ and $M(\sigma), M(\tau)$ both halt, then $M(\sigma)\preceq M(\tau)$.
\end{defi}
A monotone machine $M$ can be regarded as model for infinitary computations, where
$M(X)$, the output on an infinite input $X$, is defined as the supremum of all $M(\sigma)$
for all $\sigma\prec X$.
In this context we can ask whether the output is finite or infinite, or even if the output belongs
to a certain class (\eg if it has a tail of 1s).
The universality of monotone machines is defined in a  standard way.  

\begin{defi}[Universal monotone machines]
Given an effective list $(M_e)$ of all monotone machines, an 
monotone machine $U$ is universal if 
there exists a computable function 
$e\mapsto\sigma_e$ from numbers into a \pf set of strings, such that
$U(\sigma_e\ast\tau)\simeq M_e(\tau)$ for all $e, \tau$.
\end{defi}

We remark that these standard notions of universality are quite different from the notion
of optimal machines in the context of Kolmogorov complexity (e.g.\ see \cite[Definition 2.1]{MR1438307}).
%Indeed, quite remarkably, 
%Becher, Figueira, Grigorieff and Miller \cite{jsyml/BecherFGM06}
%showed that there exists an optimal \pf machine in the context of Kolmogorov complexity, whose
%halting probability is a rational number. This contrasts Chaitin's classic result that the halting probability of a universal
%\pf machine is \ml random.

%Note that this model is much more 
%general than monotone machines, where the output has to either be
%a string or a (total) stream.

Finally in case (3) we consider a specialized 
self-delimiting model for infinite computations, which was introduced
by Chaitin in \cite{CHAITIN1976233} (we defer the formal definition of this model to Section 
 \ref{ELZgFIxGxP},
see Definition \ref{7osrvJOuxR}). 
In Section \ref{ELZgFIxGxP} we will see that this model can be seen as an infinitary \pf
machine with an infinitary notion of halting which is \pz instead of
the usual $\Sigma^0_1$ halting that one considers in finitary computations in Turing machines.
For this reason, Chaitin's self-delimiting infinitary machine model
is  different from the models in cases (1) and (2).
We consider it because it has been discussed in several papers, as we pointed out in 
Section \ref{TIA78CsuO}.

\subsection{Characterizations of the probabilities of key outcomes of randomized machines}\label{skotQ7rkRJ}
Recall that by the
cumulative work of 
Solovay \cite{Solovay:75u}, \CHKW \cite{Calude.Hertling.ea:01}, and \KS \cite{Kucera.Slaman:01}, the
halting probabilities of universal \pf machines are exactly the 1-random \lce reals.
Let $\emptyset'$ denote the halting problem and let $\mathbf{0}'$ denote the
Turing degree of $\emptyset'$. More generally,
let $\emptyset^{(n)}$ denote the halting problem iterated $n$ times and let
$\mathbf{0}^{(n)}$ denote the Turing degree of $\emptyset^{(n)}$.
A real is called $\mathbf{0}^{(n)}$-left-c.e.\ if its left Dedekind cut is \ce relative to $\mathbf{0}^{(n)}$.
Similarly, 
a real is called $\mathbf{0}^{(n)}$-right-c.e.\ 
if its right Dedekind cut is \ce relative to $\mathbf{0}^{(n)}$.
In Barmpalias and Dowe \cite{Barmpalias3488}, it was shown that the universality probabilities 
(a notion from \cite{Wallace99minimummessage}) of
universal \pf machines are exactly the 4-random \rceppp reals.

 In this section we present our main results, which are characterizations of probabilities
 of certain machine outcomes in the models presented in Section \ref{IzttMzGQSQ}.
 We start with the following characterization of 
2-random \rcep reals in terms of the {\em totality probability} of
 oracle Turing machines (case (1) in Section \ref{IzttMzGQSQ}).

\begin{thm}[Totality for oracle machines]\label{YdyxkMqxRX}
The probability that the function computed by a universal oracle machine is total 
is a 2-random \rcep real. Conversely every 2-random \rcep real in (0,1) is the probability
that a certain universal oracle  machine computes a total function.
\end{thm}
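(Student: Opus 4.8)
The plan is to prove the two directions separately. For the forward direction, let $U$ be a universal oracle machine and let $P$ be the probability that $U^X$ is total, i.e.\ $P = \mu\{X : \forall n\, U^X(n)\!\downarrow\}$. The event ``$U^X(n)\!\downarrow$'' is $\Sigma^0_1$ uniformly in $n$, so ``$U^X$ is total'' is a $\Pi^0_2$ event, and by the observation quoted in the introduction (measures of $\Sigma^0_2$ classes equal measures of $\Sigma^0_1$ classes relative to $\mathbf{0}'$) the complementary event ``$U^X$ is not total'' is, relative to $\emptyset'$, a $\Sigma^0_1$ class of positive measure with a $\emptyset'$-effective enumeration; hence $1-P$ is $\mathbf{0}'$-left-c.e., i.e.\ $P$ is $\mathbf{0}'$-right-c.e. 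For randomness: I would relativize the Chaitin/Kučera--Slaman analysis to $\emptyset'$. Concretely, fix $n$ and consider the machine that, on oracle $X$, searches for convergence of $U^X$ on inputs $0,1,\dots$ until it has seen $n+k$ distinct convergences, charging a $\emptyset'$-approximable ``weight''; the point is that $1-P$ dominates (up to a constant) the $\emptyset'$-relativized halting weight of an optimal $\emptyset'$-prefix-free machine, which forces $1-P$ to be $2$-random by the relativized form of the fact that $\Omega$ is $1$-random. The key technical content is that the totality event, though $\Pi^0_2$ and not open, still lets us extract enough ``Solovay-below'' structure relative to $\emptyset'$ from its complement.

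For the converse, let $\alpha \in (0,1)$ be an arbitrary $2$-random $\mathbf{0}'$-right-c.e.\ real; I must build a universal oracle machine $U$ whose totality probability is exactly $\alpha$. Write $\beta = 1-\alpha$, a $2$-random $\mathbf{0}'$-left-c.e.\ real, and fix a $\emptyset'$-computable increasing sequence of rationals $\beta_s \nearrow \beta$. I would first fix a ``background'' universal oracle machine $V$ via a prefix-free set of coding strings $\{\sigma_e\}$ of total weight $w < 1$ (so $V$ is total with probability $0$, say, or at least contributes a controlled amount), and then use the remaining measure to build a region of the Cantor space on which $U$'s function is total with exactly probability $\alpha$ and non-total elsewhere. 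The mechanism: reserve a prefix-free ``gadget'' of cylinders; as $\beta_s$ increases by a dyadic amount, I \emph{injure} (make non-total, by diagonalizing on some input) a set of cylinders of matching measure, so that the set of $X$ on which $U^X$ remains total has measure $1 - w - \beta = \alpha - (\text{the piece } V \text{ contributes})$, which I arrange to be exactly $\alpha$ by making $V$ contribute $0$. Since the approximation $\beta_s$ is only $\emptyset'$-computable and not computable, these injuries are driven by a $\emptyset'$-oracle — but that is fine, because the \emph{machine} $U$ itself need not know $\beta$: on a cylinder slated to become non-total, $U^X(n)$ simply runs forever, and we just have to verify that the limiting set of permanently-total $X$'s has the right measure. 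The universality of $U$ is guaranteed by hard-wiring the $V$-simulation on a positive-measure chunk.

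I expect the main obstacle to be the converse, and within it the coordination between the $\emptyset'$-level nonmonotone behaviour of the approximation $\beta_s$ and the requirement that the totality event be genuinely $\Pi^0_2$ (so that the halting computations on the ``surviving'' region are all honest $\Sigma^0_1$ events). The subtlety is that when $\beta_s$ jumps, we want to ``withdraw totality'' from a previously-total cylinder; but a cylinder already declared total has convergent computations $U^X(n)$ that we cannot retract. The fix is the standard one for building right-c.e.\ objects: never declare a cylinder total prematurely — instead, on each cylinder we \emph{delay} the convergence of $U^X$ on large inputs $n$ until the $\emptyset'$-approximation has stabilized enough that this cylinder will not be injured, so that divergence is the default and convergence is only released in the limit. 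Making this delaying uniform and checking that the total measure of permanently-surviving cylinders equals $\alpha$ on the nose (using that $\alpha$ is irrational, being $2$-random, so there is no boundary ambiguity) is the crux; the $2$-randomness of $\alpha$ is also what we need to invoke the relativized Kučera--Slaman theorem to know such a machine-theoretic presentation of $\alpha$ exists. Once the construction is set up, verifying universality and the measure computation is routine.
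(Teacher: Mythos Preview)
Your forward direction is on the right track in identifying that $P$ is $\mathbf{0}'$-right-c.e.\ because totality is $\Pi^0_2$, but the randomness argument as written does not go through. The proposed mechanism --- running $U^X$ until $n+k$ convergences are seen and ``charging a weight'' --- does not produce a Solovay reduction from $\Omega^{\emptyset'}$ to $1-P$, and it is not clear what formal statement you are proving there. The paper's route is cleaner and avoids this: it first constructs a single auxiliary oracle machine $M$ whose totality event is \emph{exactly} the complement of a member of a universal $\emptyset'$-Martin-L\"of test (this is the content of Lemma~\ref{w1HmrXQxym}, which turns an arbitrary $\Sigma^0_2$ set of strings into the non-totality set of some machine). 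Then, since $U$ is universal, there is $\tau$ with $U(\tau\ast X)\simeq M(X)$, so
\[
\mu(\mathtt{TOT}(U)) \;=\; 2^{-|\tau|}\mu(\mathtt{TOT}(M)) \;+\; \mu\bigl(\mathtt{TOT}(U)\cap(2^\omega\setminus\dbra{\tau})\bigr),
\]
a sum of a $2$-random $\mathbf{0}'$-right-c.e.\ real and another $\mathbf{0}'$-right-c.e.\ real. Randomness of the sum then follows from Demuth's theorem relativized to $\emptyset'$. You are missing both the auxiliary-machine step and the appeal to Demuth.

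The converse has a more serious gap. You write that you will ``make $V$ contribute $0$'' to the totality probability, but a universal oracle machine \emph{cannot} have totality probability $0$: it simulates, on some cylinder, a machine that is total on every oracle. So the embedded copy of $V$ contributes an unavoidable amount $2^{-c}\gamma$ with $\gamma=\mu(\mathtt{TOT}(V))>0$, and your construction must hit $\alpha$ \emph{after} accounting for this. The paper handles this with a subtraction trick that is precisely where the $2$-randomness of $\alpha$ is used: by the relativized Downey--Hirschfeldt--Nies/Solovay theory (fact \eqref{5zlG2HCdb}), because $\alpha$ is $2$-random and $\mathbf{0}'$-right-c.e., for all sufficiently large $c$ the difference $\beta:=\alpha-2^{-c}\gamma$ is again $\mathbf{0}'$-right-c.e. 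One then builds (via Kraft--Chaitin relative to $\emptyset'$ and Lemma~\ref{w1HmrXQxym}) a non-universal machine $N$ with $\mu(\mathtt{TOT}(N))=\beta$ that is completely undefined on a reserved cylinder $\dbra{\rho}$ of length $c$, and glues $V$ in on that cylinder. Your ``delay convergence until the $\emptyset'$-approximation stabilizes'' idea is essentially what Lemma~\ref{w1HmrXQxym} does, but without the subtraction step you cannot reconcile the target $\alpha$ with the nonzero contribution from the universal part.
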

We will also show that Theorem \ref{YdyxkMqxRX} also holds with regard to the 
probability that {\em the function computed by a universal oracle Turing machine has infinite domain}.
Moreover, the same argument shows a similar characterization of the probability that a universal
monotone machine (case (1) in Section \ref{IzttMzGQSQ}) has infinite output, as we note below. 
This probability for
an infinitary self-delimiting machine (case (3) in Section \ref{IzttMzGQSQ}) 
has the same characterization, but the proof is different,
as it depends on the definition of this type of infinitary computation.

\begin{thm}[Infinite output of monotone  or infinitary self-delimiting  machines]\label{7MGSCRK2Yz}
The probability that the output of a universal monotone machine is infinite 
is a 2-random \rcep real. Conversely every 2-random \rcep real in $(0,1)$ is the probability
that a certain universal monotone machine computes a total function. Moreover the same is true
for Chaitin's self-delimiting infinitary machines.
\end{thm}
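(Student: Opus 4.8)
The plan is to mirror, in both of these models, the strategy of Theorem~\ref{YdyxkMqxRX}: first read off the arithmetical complexity of the relevant event, which pins down the Dedekind‑cut side of the characterization, and then use universality to place the probability Solovay‑above a known $2$‑random real, which forces maximal randomness. For the monotone case I would fix a universal monotone machine $U$ with \pf coding set $\{\sigma_e\}_{e}$ simulating the effective list $(M_e)$, and set $\mathcal{I}=\{X: U(X)\text{ is infinite}\}$ and $\mathcal{F}=\mathcal{I}^{c}$. Since $U(X)$ is infinite iff $(\forall n)(\exists\sigma\prec X)\bigl(U(\sigma)\!\downarrow\wedge\,|U(\sigma)|\ge n\bigr)$, the class $\mathcal{I}$ is \pzii and $\mathcal{F}$ is \szii; hence, by the identification (recalled in the introduction) of the measures of \szii classes with measures of $\Sigma^0_1$ classes relative to $\mathbf{0}'$, $\mu(\mathcal{F})$ is \lcep and $\mu(\mathcal{I})=1-\mu(\mathcal{F})$ is \rcep. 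Also $\mu(\mathcal{I})\in(0,1)$: $U$ simulates the copy machine $\tau\mapsto\tau$ (output everywhere infinite) and the machine with constant empty output, so $\mu(\mathcal{I})$ and $\mu(\mathcal{F})$ are each at least $2^{-|\sigma_e|}$ for a suitable index $e$. The same computation handles the variant ``the function computed by a universal oracle machine has infinite domain''.

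For the randomness I would decompose $\mathcal{F}$ along the coding: with $\mathcal{F}\cap[\sigma_e]=\sigma_e\ast\mathcal{F}_e$, where $\mathcal{F}_e$ is the finite‑output class of $M_e$, and with $\mathcal{F}\setminus\bigcup_e[\sigma_e]$ itself a \szii class,
\[
\mu(\mathcal{F})\;=\;\sum_{e}2^{-|\sigma_e|}\,\mu(\mathcal{F}_e)\;+\;\mu\!\left(\mathcal{F}\setminus{\textstyle\bigcup_{e}}[\sigma_e]\right),
\]
a sum of (uniformly) \lcep reals. Everything then reduces to the lemma: \emph{$(\star)$ every \lcep real $\gamma\in[0,1]$ equals $\mu(\mathcal{F}_M)$ for some monotone machine $M$.} Granting $(\star)$, apply it with $\gamma=\Omega^{\emptyset'}$, the halting probability of a universal \pf oracle machine with oracle $\emptyset'$ --- a \lcep $2$‑random real; the witness is $M_{e^{\ast}}$ for some $e^{\ast}$, so $\mu(\mathcal{F})=2^{-|\sigma_{e^{\ast}}|}\Omega^{\emptyset'}+\rho$ with $\rho\ge 0$ and \lcep. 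Since for \lcep reals $\zeta\le_S\alpha$ implies $\zeta\le_S\alpha+\beta$ for every \lcep $\beta$ (all Solovay reductions relative to $\emptyset'$), and $\Omega^{\emptyset'}\equiv_S 2^{-|\sigma_{e^{\ast}}|}\Omega^{\emptyset'}$, we obtain $\Omega^{\emptyset'}\le_S\mu(\mathcal{F})$; a real Solovay‑above a $2$‑random real is $2$‑random, so $\mu(\mathcal{F})$, and hence $\mu(\mathcal{I})=1-\mu(\mathcal{F})$, is $2$‑random. Note this applies to every universal $U$, since the complexity of $\mathcal{F}$, the decomposition, and the presence of the $(\star)$‑witness in $(M_e)$ are all automatic.

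For $(\star)$, write $\gamma=\mu(\mathcal{W})$ with $\mathcal{W}=\bigcup_m[T_m]$ a \szii class, $(T_m)$ uniformly computable trees and $[T_m]$ increasing, and build $M$ as a restricted copy machine: $M$ extends the output along an input $\rho$ by $M(\rho)=\rho$, but keeps extending $\rho$ only so long as $\rho$ has not been seen to be captured by some $[T_m]$ currently under consideration, with a movable‑marker bookkeeping governing which $m$'s are under consideration, so that the set of inputs extended to unbounded length converges to $\bigcap_m[T_m]^c$. One then checks $\mathcal{F}_M=\mathcal{W}$, so $\mu(\mathcal{F}_M)=\gamma$. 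For the converse direction, given a $2$‑random \rcep real $\alpha\in(0,1)$, build a universal monotone machine that simulates a fixed universal machine on a short cylinder and runs such a copy gadget on the complement whose active region is steered by a $\mathbf{0}'$‑computable approximation of $1-\alpha$; because $1-\alpha$ is $2$‑random, hence Solovay‑complete among \lcep reals by the relativized \KS theorem, the gadget has room to keep correcting the running value and the infinite‑output probability can be driven exactly to $\alpha$. I expect the main obstacle to be exactly this combinatorial core: arranging, both in $(\star)$ and in the converse, that the active region converges to the intended \pzii class --- inputs in the class kept active cofinally, all others eventually frozen --- while the running value is held on target even though the simulated contributions $2^{-|\sigma_e|}\mu(\mathcal{I}_{M_e})$ are approached from above at an uncontrolled rate; that is precisely where Solovay‑completeness is used, and the surrounding bookkeeping is where errors would hide.

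Finally, for Chaitin's infinitary self‑delimiting machines (case~(3) of Section~\ref{IzttMzGQSQ}) the statement and the complexity computation are the same: although the notion of halting in that model is \pz rather than $\Sigma^0_1$ (Definition~\ref{7osrvJOuxR}), the predicate ``by stage $s$ the output of the machine on $X$ has reached length $n$'' still depends on only finitely many bits of $X$, so ``the output is infinite'' is \pzii in $X$ and its probability is \rcep. What cannot be reused is the restricted‑copy gadget, because this model reads its input one bit at a time with no backtracking and obeys the \pz halting convention; the analogues of $(\star)$ and of the converse construction must be carried out with the native gadgets of the infinitary self‑delimiting model, and verifying that universality there still lets one embed a submachine whose infinite‑output probability is a prescribed \lcep real is the main obstacle for case~(3) --- and the reason its proof differs from the monotone one.
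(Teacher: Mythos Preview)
Your plan is correct and parallels the paper's, but the paper phrases the same ideas a bit differently and with a sharper key lemma. Where you invoke relativized Solovay completeness to place $\mu(\mathcal F)$ above $\Omega^{\emptyset'}$, the paper works one step more concretely: its Lemma~\ref{qEkDLqgp25} takes an arbitrary $\Sigma^0_2$ set of strings $U$ and builds a monotone machine $N$ with $\{X:N(X)\text{ finite}\}=\dbra{U}$ \emph{as a class}, not merely of the right measure. Applying this with $U$ a member of a universal \ml test relative to $\emptyset'$ yields a single submachine whose finiteness probability is itself a 2-random \lcep real, and the forward direction then follows from Demuth's fact that (2-random \lcep) $+$ (\lcep) is 2-random --- the same content as your Solovay argument. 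Your $(\star)$ is weaker than Lemma~\ref{qEkDLqgp25} but sufficient; the benefit of the paper's version is that the gadget becomes trivial: a canonical $\Sigma^0_2$ approximation of $U$ directly governs when to extend output, with no movable-marker bookkeeping. For the converse both arguments rest on the same fact \eqref{5zlG2HCdb}: choose $c$ so that $\alpha-2^{-c}\gamma$ remains \rcep, put a fixed universal machine on one cylinder of length $c$, and realize the residual elsewhere via the gadget; the paper makes this decomposition explicit rather than routing through Solovay completeness. For the infinitary self-delimiting model the paper confirms your diagnosis that a native gadget is needed (Lemma~\ref{tU86O3w2du}), but note a subtlety you glossed over: in that model $\mathtt{FIN}$ and $\mathtt{INF}$ partition $\dom{M^{\infty}}$, not $2^{\omega}$, so $\mu(\mathtt{INF})\neq 1-\mu(\mathtt{FIN})$; the paper in fact carries out the argument for the finiteness probability (a \lcep real) rather than deducing it from the complement.
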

Naturally, in order to obtain characterizations of higher randomness we need to
consider properties of higher arithmetical complexity.
Recall that a subset of $\Nat$ is cofinite if its complement is finite. 
Moreover recall that the indices of partial computable functions with a cofinite domain
is a $\Sigma^0_3$-complete set.  The same is true of the
partial computable functions with a computable domain.
This is an indication that these notions can give the required characterizations of 3-randomness.

Before we demonstrate this intuition, we show that one has to be careful to chose the right model.
As we explain in Section \ref{ELZgFIxGxP}, 
there is a notion of domain of an infinitary self-delimiting machine,
and properties of randomized computations with respect to this model correspond to subsets
of the domain of the machine in question. In this sense, the following lemma says that we cannot
exhibit 3-randomness by considering properties of computations in this model.

\begin{lem}[Limits on randomness of outcomes of infinitary self-delimiting machines]\label{5JCUWCcstk}
The measure of any subset of the domain of an infinitary self-delimiting machine is not a 
3-random real number.
\end{lem}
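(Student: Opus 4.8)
The plan is to reduce the statement to an estimate of arithmetical complexity: I will show that the measure of any subset $S$ of the domain $D$ of an infinitary self-delimiting machine---any $S$ arising, as in Section~\ref{ELZgFIxGxP}, from a property of a randomized computation---is computable from $\emptyset''$, and then invoke the elementary fact that no $\Delta^0_3$ real is 3-random. Indeed, if $r\le_T\emptyset''$ then the sequence of cylinders $[r\restr n]$ ($n\in\Nat$) is a $\emptyset''$-\ml test covering $r$, so $r$ is not random relative to $\emptyset''$. Hence the whole content of the lemma is the complexity estimate.

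For that estimate I would first recall, from Section~\ref{ELZgFIxGxP} and Definition~\ref{7osrvJOuxR}, that because the infinitary notion of halting in Chaitin's model is \pz rather than $\Sigma^0_1$, the domain $D$ is a \pz prefix-free set of strings; writing $D=\bigcap_s D_s$ for a uniformly decidable decreasing sequence of prefix-free sets, $\mu(D)$ is a difference of left-c.e.\ reals and so $\mu(D)\le_T\emptyset'$. The key point---and the place where the specific shape of the model enters---is that passing from $\Sigma^0_1$ halting to \pz halting lowers by one quantifier the arithmetical complexity of the relevant features of a \emph{valid} computation (a computation on an input of $D$): all the limit-information about such a computation is already $\Sigma^0_2$, one level below the $\Sigma^0_3$-complete index sets---cofiniteness, or computability, of the domain of a partial computable function---that are the source of 3-randomness in the oracle-machine model (cf.\ the discussion preceding this lemma and Theorem~\ref{YdyxkMqxRX}). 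Consequently any subset $S\subseteq D$ coming from such a property lands at the $\Sigma^0_2$ or $\Pi^0_2$ level of the arithmetical hierarchy; a $\Sigma^0_2$ subset of $D$ has measure left-c.e.\ relative to $\emptyset'$, while for a $\Pi^0_2$ subset $S$ one has $\mu(S)=\mu(D)-\mu(D\setminus S)$ with $D\setminus S$ again $\Sigma^0_2$. Since $\emptyset'$-left-c.e.\ reals are $\le_T\emptyset''$ and $\mu(D)\le_T\emptyset'$, in either case $\mu(S)\le_T\emptyset''$, which by the first paragraph completes the proof.

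I expect the main obstacle to be precisely this one-quantifier drop in complexity: one must go through Definition~\ref{7osrvJOuxR} and verify, uniformly for the properties at issue (cofinality, computability, and the like), that the \pz character of infinitary halting precludes the genuinely $\Sigma^0_3$ phenomena that in the oracle-machine model push probabilities up to $\mathbf 0''$-left-c.e.\ strength. Everything else---the reduction to $\Delta^0_3$-ness, and the observation that no $\Delta^0_3$ real is 3-random---is routine, and it is consistent with the fact that the best known random probability extracted from this model is only 2-random.
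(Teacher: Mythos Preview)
Your proposal has a genuine gap at the ``one-quantifier drop'' step, and this gap cannot be repaired within your framework.

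The claim that every subset $S\subseteq\dom{M^{\infty}}$ arising from a property of the output lands at the $\Sigma^0_2$ or $\Pi^0_2$ level is false. The paper states explicitly (Section~\ref{ELZgFIxGxP}, end) that $\cofin{M^{\infty}}$, the set of $\sigma$ in the domain with $M^{\infty}(\sigma)=\tau\ast 1^{\omega}$ for some $\tau$, is a $\Sigma^0_3$ set of strings, and is $\Sigma^0_3$-complete whenever $M^{\infty}$ is universal. The $\Pi^0_1$ character of infinitary halting does \emph{not} knock a quantifier off the property ``the output has a tail of 1s'': one still needs $\exists n\,\forall m\geq n\,\exists s$ to express it, and the $\Pi^0_1$ convergence condition is absorbed into the innermost $\Pi^0_2$ block rather than cancelling the outer $\exists$. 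So your complexity bound on $S$ fails for precisely the motivating example.

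Once $S$ is allowed to be $\Sigma^0_3$, Lemma~\ref{TLBE5A9gU4} gives only that $\mu(S)$ is $\mathbf{0}^{(2)}$-left-c.e., not that $\mu(S)\le_T\emptyset''$; your reduction to ``no $\Delta^0_3$ real is 3-random'' therefore does not apply. (Your side claim that $\mu(D)$ is a difference of left-c.e.\ reals is also unjustified: $\dbra{\dom{M^{\infty}}}$ is a $\Sigma^0_2$ class, so its measure is $\emptyset'$-left-c.e., and there is no reason to expect it to be d.c.e.) The paper's route is different in kind: it proves Lemma~\ref{SoFOwQ5ys}, which says that for any $\Sigma^0_{n+1}$ \pf set $P$, no $\Sigma^0_{n+2}$ subset of $P$ has $(n+2)$-random measure. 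The argument does not show the measure is $\mathbf{0}^{(n+1)}$-computable; instead it builds a \ml test relative to $\mathbf{0}^{(n)}$ directly, exploiting the prefix-freeness of $P$ to approximate $\mu(P)$ by a finite set $D_n\subseteq P$ to within $2^{-n-1}$, so that the $\mathbf{0}^{(n)}$-enumeration of $V$ only moves $\mu(V)$ by a controllable amount. Applying this with $n=1$ to the $\Delta^0_2$ \pf domain $\dom{M^{\ast}}$ yields the lemma. The point is structural---$S$ lives inside a \pf set of lower complexity---rather than a bound on the arithmetical complexity of $\mu(S)$.
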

In particular, the property that the output of a universal infinitary self-delimiting machine 
has a tail of 1s (\ie is the characteristic
sequence of a cofinite subset of $\Nat$) does not have 3-random probability. The same is true
when one considers monotone machines, but for different reasons as will be seen in Section \ref{rT1udReGjP}.

\begin{prop}[Limits on randomness of computable outcomes of oracle and monotone machines]\label{XqR4J6nv5M0}
The probability that an oracle machine computes the characteristic sequence of a cofinite
subset of $\Nat$ is not a 3-random real number. The same is true for monotone machines.
\end{prop}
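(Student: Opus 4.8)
The plan is to show that if $\alpha$ is the probability in question — either $\mathrm{Prob}[\text{oracle machine }U \text{ computes the characteristic sequence of a cofinite set}]$ or the analogous monotone-machine probability — then $\alpha$ fails some statistical test of strength strictly below $3$-randomness, namely I expect $\alpha$ to be computable from (indeed, left- or right-c.e.\ in) $\mathbf{0}''$ together with a $\Pi^0_2$-approximation phenomenon that forces $\alpha$ to lie in a $\Pi^0_3$ null class, or alternatively to be non-$3$-random because it is already $2$-right-c.e.\ (\lcepp or \rcepp) and such reals cannot be $3$-random for the same reason Chaitin's $\Omega$ is not $2$-random: a real whose left (or right) Dedekind cut is c.e.\ relative to $\mathbf{0}^{(n-1)}$ cannot be random relative to $\mathbf{0}^{(n-1)}$, hence cannot be $(n{+}1)$-random. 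So the first and main step is to pin down the arithmetical complexity of the event.

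First I would analyze the oracle-machine case. Fix a universal oracle machine $U$. The relevant event on oracles $X$ is: ``$U^X$ is total and $\lambda n.\,U^X(n)\in\{0,1\}$ and the set $\setc{n}{U^X(n)=0}$ is finite.'' Totality of $U^X$ is a \pzii condition on $X$, and the finiteness-of-zeros clause, \emph{given totality}, is a $\Sigma^0_2$ condition on $X$; the conjunction is $\Pi^0_2\wedge\Sigma^0_2$, hence $\Delta^0_3$, and in fact the whole class $\mathcal{C}$ of such $X$ is a Boolean combination that places it at a low level of the difference hierarchy over $\Sigma^0_2$. The key point — and here I would lean on the remark in Section~\ref{TIA78CsuO} that measures of $\Sigma^0_n$ classes coincide with measures of $\Sigma^0_1$ classes relative to $\mathbf{0}^{(n-1)}$ — is that $\lambda(\mathcal{C})$ is obtained as a \emph{monotone} (increasing \emph{or} decreasing, from one side) limit of $\mathbf{0}'$-computable rationals: the totality clause is $\Pi^0_2$, so its measure is the infimum of a $\mathbf{0}'$-computable decreasing sequence, making $\lambda(\{X : U^X \text{ total}\})$ right-c.e.\ relative to $\mathbf{0}'$; intersecting with the finite-zeros clause only subtracts a further $\mathbf{0}'$-c.e.\ measure, and the upshot I expect is that $\alpha$ is $\mathbf{0}'$-right-c.e.\ (or at worst $\mathbf{0}''$-left-c.e.), hence has a $\mathbf{0}''$-c.e.\ Dedekind cut, hence is not random relative to $\mathbf{0}''$, hence is not $3$-random.

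For the monotone-machine case the argument is parallel but I would be careful about the definition of output: $U(X)=\sup_{\sigma\prec X}U(\sigma)$, and ``$U(X)$ is the characteristic sequence of a cofinite subset of $\Nat$'' unpacks to ``$U(X)$ is infinite \emph{and} it has a tail of $1$s,'' which is \pzii (infiniteness of the output, i.e.\ $U(X)$ extends strings of every length) conjoined with a $\Sigma^0_2$ condition (beyond some finite point only $1$s appear — this is $\exists m\,\forall$-bits-past-$m$-are-$1$, and each ``bit is $1$'' event is $\Sigma^0_1$ in $X$ once we know the output is long enough). So again the class sits in $\Delta^0_3$ with a one-sided monotone $\mathbf{0}'$-approximation to its measure, and the same conclusion follows. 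The main obstacle, and the step I would spend the most care on, is the estimate that the measure of the \pzii ``totality'' (resp.\ ``infinite output'') part is \emph{one-sidedly} $\mathbf{0}'$-approximable in a way that is not destroyed by the $\Sigma^0_2$ side condition: one must check that the finite-exceptional-set clause contributes a measure that is $\mathbf{0}'$-c.e.\ \emph{from the same side}, so that $\alpha$ ends up with a genuinely $\mathbf{0}^{(2)}$-c.e.\ (one-sided) Dedekind cut rather than merely a $\Delta^0_3$ real, since a $\Delta^0_3$ real could a priori still be $3$-random. If that one-sided control fails directly, the fallback is to exhibit an explicit $\mathbf{0}''$-$\Sigma^0_1$ null open cover of $\{\alpha\}$ — which exists precisely because $\alpha$ is $\mathbf{0}''$-computable from such an approximation — thereby witnessing that $\alpha$ fails $\mathbf{0}''$-Martin-Löf randomness and a fortiori $3$-randomness.
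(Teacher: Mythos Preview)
Your decomposition of the event as the intersection of a $\Pi^0_2$ class (totality, respectively infinite output) and a $\Sigma^0_2$ class (only finitely many zeros) is correct and is exactly what the paper does. However, the reasoning that follows contains a genuine error.

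You write that $\alpha$ being ``at worst $\mathbf{0}''$-left-c.e.'' would give ``a $\mathbf{0}''$-c.e.\ Dedekind cut, hence is not random relative to $\mathbf{0}''$, hence is not $3$-random.'' This is false: a $\mathbf{0}^{(2)}$-left-c.e.\ real \emph{can} be $3$-random --- indeed, Theorem~\ref{Sgl7O4jKzk} of this very paper characterizes the $3$-random $\mathbf{0}^{(2)}$-left-c.e.\ reals as exactly the cofiniteness probabilities of universal oracle machines. Your opening template ``a real whose left Dedekind cut is c.e.\ relative to $\mathbf{0}^{(n-1)}$ cannot be random relative to $\mathbf{0}^{(n-1)}$'' is off by one jump ($\Omega$ itself is left-c.e.\ and $1$-random); what is true is that such a real is $\mathbf{0}^{(n)}$-computable and hence not $(n{+}1)$-random. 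You also assert that ``a $\Delta^0_3$ real could a priori still be $3$-random,'' which is again false: a $\Delta^0_3$ real is by definition $\mathbf{0}^{(2)}$-computable and therefore not $3$-random.

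Once these indexing confusions are cleared up, a clean argument drops out immediately --- and it is \emph{simpler} than the paper's. Since the class is both $\Sigma^0_3$ and $\Pi^0_3$, Lemma~\ref{TLBE5A9gU4} gives that its measure is simultaneously $\mathbf{0}^{(2)}$-left-c.e.\ and $\mathbf{0}^{(2)}$-right-c.e., hence $\mathbf{0}^{(2)}$-computable, hence not $3$-random. Equivalently, the measure is a difference of two $\mathbf{0}'$-left-c.e.\ reals, each of which is $\mathbf{0}^{(2)}$-computable. Your ``fallback'' paragraph gestures toward this conclusion but never states the key step; in particular, your worry about whether the $\mathbf{0}'$-approximation is one-sided is unnecessary.

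For comparison, the paper also writes the measure as a difference of two $\mathbf{0}'$-left-c.e.\ reals but then invokes a theorem of Rettinger (relativized to $\mathbf{0}'$): any $\mathbf{0}'$-d.c.e.\ real is either $\mathbf{0}'$-left-c.e., $\mathbf{0}'$-right-c.e., or not $2$-random, and in each of these three cases it fails to be $3$-random. This reaches the same conclusion by a slightly more roundabout route.
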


The following result concerning
oracle machines (case (1) in Section \ref{IzttMzGQSQ})  
demonstrates the correct notion of cofiniteness and computability which gives the
desired characterization of 3-randomness.

\begin{thm}[Cofiniteness  and computability for oracle machines]\label{Sgl7O4jKzk}
The probability that a universal oracle  machine computes a function with cofinite domain is
a 3-random \lcepp real. Conversely, every 3-random \lcepp real in $(0,1)$ is the probability that
a certain universal oralce machine computes a function with cofinite domain.
The same holds for `computable' instead of `cofinite'.
\end{thm}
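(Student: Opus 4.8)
The plan is to establish Theorem \ref{Sgl7O4jKzk} by combining two independent directions: an upper bound on the randomness of the cofiniteness probability (showing it is $\mathbf{0}^{(2)}$-left-c.e.\ and 3-random), and a coding construction showing every 3-random $\mathbf{0}^{(2)}$-left-c.e.\ real in $(0,1)$ is realized. The conceptual backbone is the fact, advertised in Section \ref{TIA78CsuO}, that the measure of a $\Sigma^0_n$ class equals the measure of a $\Sigma^0_1$ class relative to $\mathbf{0}^{(n-1)}$, together with the Solovay--\CHKW--\KS characterization of 1-random \lce reals applied relativized to $\mathbf{0}^{(2)}$. So the slogan is: ``cofiniteness of the domain is a $\Sigma^0_3$ event, hence its probability behaves like a halting probability relative to $\mathbf{0}^{(2)}$.''

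First I would handle the forward direction. Fix a universal oracle machine $U$ and let $P$ be the probability (over a \emph{random} oracle $X$) that $\dom(\lambda n.\,U(X,n))$ is cofinite. The class $\setc{X}{\dom(U^X)\text{ is cofinite}}$ is $\Sigma^0_3$ (``$\exists k\,\forall m\ge k\,\exists s\,U(X,m)[s]\!\downarrow$''). By the measure-theoretic transfer principle, $P$ is the measure of a $\Sigma^0_1(\mathbf{0}^{(2)})$ class, hence a \lcepp real. For 3-randomness I would argue by a Chaitin-style incompressibility bound: because $U$ is universal, every oracle machine $M_e$ computing a cofinite-domain function on input $\sigma_e \ast X$ contributes a full copy of $M_e$'s ``cofiniteness region,'' and one can run the classical argument that $\Omega$ is 1-random, relative to $\mathbf{0}^{(2)}$, on the $\mathbf{0}^{(2)}$-effective approximation to $P$. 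Concretely, from the first $n$ bits of $P$ one can, with a $\mathbf{0}^{(2)}$ oracle, enumerate enough of the $\Sigma^0_3$ event to pin down a $\mathbf{0}^{(2)}$-prefix-free description; this gives $K^{\mathbf{0}^{(2)}}(P\restr n)\ge n-O(1)$, i.e.\ $P$ is 3-random. The ``computable domain'' case is identical in structure since that event is also $\Sigma^0_3$ (in fact $\Sigma^0_3$-complete, as noted), and a computable domain is in particular one whose characteristic function is approximable from $\mathbf{0}^{(2)}$ in the right sense.

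For the converse, I would take an arbitrary 3-random \lcepp real $\alpha\in(0,1)$ and build a universal oracle machine $U_\alpha$ whose cofiniteness probability is exactly $\alpha$. The idea is to relativize the \KS coding technique to $\mathbf{0}^{(2)}$: since $\alpha$ has a $\Sigma^0_1(\mathbf{0}^{(2)})$ left cut, fix a $\mathbf{0}^{(2)}$-computable nondecreasing rational sequence $\alpha_s\nearrow\alpha$. I would reserve a computable prefix-free set of ``coding strings'' $\{\rho_s\}$ with $\sum_s 2^{-|\rho_s|}$ matching the increments $\alpha_{s+1}-\alpha_s$, and arrange that on the cone above $\rho_s$ the machine eventually fills in a cofinite domain, while on a fixed fraction of the remaining mass it deliberately omits infinitely many inputs (leaving a co-infinite domain) so that the total cofiniteness mass is $\alpha$; the complementary mass $1-\alpha$ goes to co-infinite-domain behaviour. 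The subtlety is that the increments are only $\mathbf{0}^{(2)}$-approximable, not computable, so the ``decision to commit $\rho_s$ to cofinite behaviour'' must itself be a limit-of-limit process --- this is exactly where the $\Sigma^0_3$ level enters, and it is the step I expect to be the main obstacle. I would manage it by a priority-style construction: each $\rho_s$ starts out aiming for a co-infinite domain and gets ``switched on'' to cofinite only when the $\mathbf{0}^{(2)}$-approximation confirms the corresponding increment is genuine, using a movable marker whose settling is governed by the $\Sigma^0_3$ predicate. A standard conservation-of-measure bookkeeping argument then shows the cofiniteness probability converges to $\alpha$, and universality is obtained by the usual trick of dovetailing simulations of all $M_e$ on a thin prefix-free set of prefixes disjoint from the coding region. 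The ``computable'' variant of the converse requires one extra wrinkle: instead of merely making the domain cofinite we must make its \emph{characteristic function} computable, which we achieve by committing each switched-on cone to output a genuinely computable (indeed, eventually all-accepting) domain, so nothing changes in the measure bookkeeping.

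One should double-check the edge cases --- that $\alpha\in(0,1)$ strictly is needed so both the cofinite-mass region and its complement are nonempty, and that replacing $U_\alpha$ by a genuine universal machine (absorbing the simulations of $(M_e)$) perturbs the cofiniteness probability by $0$, since the simulation prefixes can be chosen so that inputs there never yield cofinite domains. With these in place both directions of Theorem \ref{Sgl7O4jKzk} follow, and the parenthetical ``same holds for computable instead of cofinite'' is covered by the remarks above.
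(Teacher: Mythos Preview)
Your proposal has a genuine gap in the converse direction, and the forward direction is too vague to count as a proof.

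\textbf{Converse.} You write that ``the simulation prefixes can be chosen so that inputs there never yield cofinite domains,'' so that absorbing the simulations of all $M_e$ perturbs the cofiniteness probability by $0$. This is false. To make the machine universal you must have $U_\alpha(\sigma_e\ast X)\simeq M_e(X)$ for all $e,X$, and for many $e$ the machine $M_e$ \emph{does} compute a function with cofinite domain on a set of oracles of positive measure (indeed, some $M_e$ is total on every oracle). Hence the simulation region contributes some positive $\mathbf{0}^{(2)}$-left-c.e.\ amount $\gamma$ to $\mu(\cofin{U_\alpha})$ that you do not control. The paper handles exactly this obstacle via the Downey--Hirschfeldt--Nies subtraction fact \eqref{5zlG2HCdb}: one fixes a universal $V$, sets $\gamma=\mu(\cofin{V})$, chooses $c$ large enough that $\beta:=\alpha-2^{-c}\gamma$ is still $\mathbf{0}^{(2)}$-left-c.e., builds (via a coding lemma) an auxiliary machine $F$ with $\mu(\cofin{F})=\beta$ and a ``hole'' at some $\rho$ of length $c$, and then splices $V$ into that hole. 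The resulting machine is universal and has cofiniteness probability $2^{-c}\gamma+\beta=\alpha$ exactly. Your priority/movable-marker sketch does not substitute for this subtraction step.

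\textbf{Forward direction.} Your Chaitin-style incompressibility outline (``from $P\restr_n$ with a $\mathbf{0}^{(2)}$ oracle one can enumerate enough of the event to get $K^{\mathbf{0}^{(2)}}(P\restr_n)\geq n-O(1)$'') is not an argument: $\cofin{U}$ is a $\Sigma^0_3$ class of reals, not the domain of a prefix-free machine, and you have not said what high-complexity object is recovered from $P\restr_n$. The paper instead proves a coding lemma (Lemma~\ref{nl9d72crTM}): from any upward-closed $\Sigma^0_3$ set of strings $J$ one builds an oracle machine $M$ with $\cofin{M}=\comp{M}=\dbra{J}$, using a movable-marker construction that makes the domain of $M(X)$ compute $\emptyset'$ whenever it is not cofinite. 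Applying this with $J$ a member of a universal Martin-L\"of test relative to $\mathbf{0}^{(2)}$ yields a machine whose cofiniteness probability is $3$-random; since $U$ simulates this $M$ on some cone $\dbra{\tau}$, one gets $\mu(\cofin{U})=2^{-|\tau|}\mu(\cofin{M})+\mu(P)$ with $P$ a $\Sigma^0_3$ class, and Demuth's additivity (relativized) finishes. The same coding lemma is what makes the converse work, since it lets you realize any prescribed $\mathbf{0}^{(2)}$-left-c.e.\ real as $\mu(\cofin{F})$ while leaving a hole for the universal splice.
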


Since the domain of a partial computable function can be seen as the range of another
partial computable function, we will see that Theorem \ref{Sgl7O4jKzk} also holds
for the {\em range} instead of the {\em domain} of functions.
Finally, we note that similar methods can be used in order to obtain the following characterization
of 4-randomness. 

\begin{thm}[Hardness for oracle machines]\label{7FXYwDy4zY}
The probability that a universal oracle machine computes a function whose domain computes the
halting problem is
a 4-random \lceppp real. Conversely, every 4-random \lceppp real in $(0,1)$ is the probability that
a certain universal oracle machine computes a function whose domain computes the
halting problem.
\end{thm}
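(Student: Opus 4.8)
The plan is to run the argument behind Theorem~\ref{Sgl7O4jKzk} one jump higher: $\mathbf{0}^{(3)}$-left-c.e.\ reals replace $\mathbf{0}^{(2)}$-left-c.e.\ ones, and the property ``$\mathrm{dom}(f)\geq_T\emptyset'$'' replaces ``$\mathrm{dom}(f)$ cofinite''. Write $p_N:=\mu\{X:\mathrm{dom}(N^X)\geq_T\emptyset'\}$ for an oracle machine $N$. The ``$\mathbf{0}^{(3)}$-left-c.e.'' half is routine arithmetic. Since $\mathrm{dom}(N^X)$ is uniformly $\Sigma^0_1(X)$, unwinding $\geq_T$ as $\exists e\,\forall n\,\bigl[\Phi_e^{\mathrm{dom}(N^X)}(n)\!\downarrow\,=\emptyset'(n)\bigr]$ shows that each clause (fixed $e,n$) is $\Sigma^0_2(X)$ --- a bounded search over the finite oracle-use, whose positive bits are $\Sigma^0_1(X)$ and negative bits $\Pi^0_1(X)$, compared against $\emptyset'$ --- so the whole predicate is $\Sigma^0_4(X)$, uniformly in an index for $N$. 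Hence $\{X:\mathrm{dom}(N^X)\geq_T\emptyset'\}$ is a $\Sigma^0_1(\emptyset^{(3)})$ class and $p_N$ is a $\mathbf{0}^{(3)}$-left-c.e.\ real; in particular so is the probability in the theorem.

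The linchpin is a \emph{representation lemma}: for every clopen $R\subseteq 2^{\omega}$ and every $\mathbf{0}^{(3)}$-left-c.e.\ real $\beta\le\mu(R)$ there is an oracle machine $N$ with $N^X$ divergent for $X\notin R$ and $p_N=\beta$. To prove it I would first present $\beta$ as the measure of a $\Sigma^0_1(\emptyset^{(3)})$ subclass $\mathcal{C}$ of $R$ and note that an enumeration of $\mathcal{C}$ relative to $\emptyset^{(3)}$ runs through the tower $\emptyset'\le_T\emptyset''\le_T\emptyset^{(3)}$, so certifying $X\in\mathcal{C}$ means confirming a $\Sigma^0_4$ condition whose witnesses may be revised at three nested levels. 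Then I would build $N$ by iterating the coding behind Theorems~\ref{YdyxkMqxRX} and~\ref{Sgl7O4jKzk} across these three jumps, arranging that $\mathrm{dom}(N^X)$ carries a presentation of $\emptyset'$ (up to low-degree debris) exactly when the certificate for $X\in\mathcal{C}$ has settled at every level, and is a uniformly computable set otherwise. Then $\mathrm{dom}(N^X)\geq_T\emptyset'$ holds precisely on $\mathcal{C}$, so $\{X:\mathrm{dom}(N^X)\geq_T\emptyset'\}=\mathcal{C}$ has measure $\beta$.

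Granting the lemma, both directions go exactly as for Theorem~\ref{Sgl7O4jKzk}. \emph{Randomness of the probability.} Let $U$ be universal with computable prefix-free coding $e\mapsto\sigma_e$; then $U(\sigma_e\ast X,n)\simeq M_e(X,n)$, so $\mathrm{dom}(U^{\sigma_e\ast X})=\mathrm{dom}(M_e^{X})$. Apply the lemma with $R=2^{\omega}$ and $\beta=\Omega^{\emptyset^{(3)}}$ (the halting probability of a universal $\emptyset^{(3)}$-oracle prefix-free machine, a $\mathbf{0}^{(3)}$-left-c.e.\ and $4$-random real), giving a machine that appears in the list as $M_{e_0}$. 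Then $p_U=2^{-|\sigma_{e_0}|}\Omega^{\emptyset^{(3)}}+r$, where $r$ is the measure of the $\Sigma^0_1(\emptyset^{(3)})$ class $\{X\notin[\sigma_{e_0}]:\mathrm{dom}(U^X)\geq_T\emptyset'\}$ and hence $\mathbf{0}^{(3)}$-left-c.e., so $\Omega^{\emptyset^{(3)}}\leq_S^{\emptyset^{(3)}}p_U$; thus $p_U$ is Solovay-complete among $\mathbf{0}^{(3)}$-left-c.e.\ reals, hence $4$-random, by the relativization to $\emptyset^{(3)}$ of \cite{Solovay:75u,Calude.Hertling.ea:01,Kucera.Slaman:01}. \emph{Converse.} Fix a universal oracle machine $W$; its probability $p_W$ is a fixed $\mathbf{0}^{(3)}$-left-c.e.\ real, necessarily $<1$ since $W$ simulates the everywhere-divergent machine on a cone. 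Given a $4$-random $\mathbf{0}^{(3)}$-left-c.e.\ $\alpha\in(0,1)$, its Solovay-completeness provides, for all sufficiently large $c$, a $\mathbf{0}^{(3)}$-left-c.e.\ $\gamma$ with $2^{c}\alpha=p_W+\gamma$; enlarging $c$ further makes $0\le 2^{-c}\gamma=\alpha-2^{-c}p_W<1-2^{-c}$. Let $U$ simulate $W$ on the cone $[1^{c}]$ (so $U$ is universal via $\sigma_e:=1^{c}\sigma_e^{W}$) and, on the clopen $R:=2^{\omega}\setminus[1^{c}]$ of measure $1-2^{-c}$, behave as the representation-lemma machine for the value $2^{-c}\gamma$. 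Then $p_U=2^{-c}p_W+2^{-c}\gamma=\alpha$.

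The hard part will be the representation lemma, namely realizing a prescribed $\Sigma^0_4$ event on the random oracle, with \emph{exact} measure, via the Turing degree of a set that is merely c.e.\ in that oracle. This is genuinely harder than the cofiniteness case of Theorem~\ref{Sgl7O4jKzk}, because ``computes $\emptyset'$'' is neither monotone nor local in the domain: $\mathrm{dom}(N^X)$ only grows, so one cannot literally ``undo'' a coding attempt when a $\emptyset^{(3)}$-approximation is revised, and the three-level construction must be arranged (via fresh active blocks and a priority discipline across the levels) so that a premature certificate never commits $\mathrm{dom}(N^X)$ irrevocably to computing $\emptyset'$. A secondary but essential point is the bookkeeping in the converse: obtaining the probability \emph{exactly} $\alpha$, rather than merely Solovay-equivalent to it, is what forces the power-of-two Solovay constant $2^{c}$ and the confinement of the embedded copy of $W$ to the single cone $[1^{c}]$.
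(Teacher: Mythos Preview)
Your overall architecture matches the paper's: the class is $\Sigma^0_4$, hence its measure is $\mathbf{0}^{(3)}$-left-c.e.\ by Lemma~\ref{TLBE5A9gU4}; a representation lemma manufactures a machine with prescribed such probability; and the gluing with a universal machine uses~\eqref{5zlG2HCdb} exactly as in Section~\ref{Vavq5ziBpO}. Two points deserve correction or sharpening.

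First, you twice conflate $\Sigma^0_4$ classes with $\Sigma^0_1(\emptyset^{(3)})$ classes. These agree in measure (Lemma~\ref{TLBE5A9gU4}) but not as classes: a $\Sigma^0_1(\emptyset^{(3)})$ class is open with an $\emptyset^{(3)}$-c.e.\ generating set of strings, whereas $\{X:\mathrm{dom}(N^X)\geq_T\emptyset'\}$ is genuinely $\Sigma^0_4$ and need not be $\Sigma^0_1(\emptyset^{(3)})$ (see the remark in Section~\ref{pm7oworwdR}). This matters for your representation lemma: you cannot in general arrange $\{X:\mathrm{dom}(N^X)\geq_T\emptyset'\}=\mathcal{C}$ for a prescribed $\Sigma^0_1(\emptyset^{(3)})$ class $\mathcal{C}$. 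What you should do instead is present $\beta$ as the measure of a $\Sigma^0_4$ prefix-free set of strings (Lemma~\ref{PMFeQmge4u} with $n=4$), take its upward closure $J$, and build $N$ so that $\mathrm{dom}(N^X)\geq_T\emptyset'$ iff $X$ has a prefix in $J$.

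Second, and this is where the paper is more concrete than your sketch: the construction of $N$ from $J$ does \emph{not} require a new three-level priority argument of the kind you describe. The paper isolates this step as Lemma~\ref{1zgPV8YMh} and observes that its proof is simply the classical argument for the $\Sigma^0_4$-completeness of the index set $\{e:W_e\geq_T\emptyset'\}$ (e.g.\ \cite[Corollary~XII~1.7]{MR882921}), relativized to the oracle $X$ in the same way Lemma~\ref{nl9d72crTM} relativized the $\Sigma^0_3$-completeness of the cofinite index set. That textbook proof already handles all the ``undoing'' you worry about via a movable-marker construction one level above the one in Lemma~\ref{nl9d72crTM}; no priority machinery beyond what is already in Soare is needed.
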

The proof that the index set of Turing complete \ce sets is $\Sigma^0_4$-complete
plays a crucial role in the proof of Theorem \ref{7FXYwDy4zY}, just as we use 
the $\Sigma^0_3$-completeness of the index set of the cofinite and  
computable \ce sets in order to obtain Theorem
\ref{Sgl7O4jKzk}.

\subsection{Turing degrees of probabilities of universal machines}
The characterization of 1-random \lce reals as halting probabilities of universal \pf machines
 by Solovay \cite{Solovay:75u}, \CHKW \cite{Calude.Hertling.ea:01}, and \KS \cite{Kucera.Slaman:01}
relativizes to an arbitrary oracle. In particular, $n$-random $\mathbf{0}^{(n-1)}$-left-c.e.\ reals
are exactly the halting probabilities of \pf machines with oracle $\mathbf{0}^{(n-1)}$.
Hence, given our characterizations of Section \ref{skotQ7rkRJ}, the probabilities we consider can
be viewed as halting or non-halting probabilities  of universal \pf machines relative to $\mathbf{0}^{(n-1)}$
for some $n>0$.
We will combine these facts with the work of \DHMN
\cite{MR2188515} on relativizations of Chaitin's halting probability
in order to show the following dependence of the degree of the probabilities that we considered.

\begin{coro}[Turing degree variance]\label{FziaqEGYYr}
Given any machine model from (1)-(3) of Section \ref{IzttMzGQSQ} and any 
probability of that model considered in 
Theorems \ref{YdyxkMqxRX}, \ref{7MGSCRK2Yz}, \ref{Sgl7O4jKzk},
and \ref{7FXYwDy4zY}, the Turing degree of the particular probability depends on the underlying
universal machine that is chosen.
In fact, in each of these cases, if $P_M$ denotes the particular probability with respect
to a machine $M$ of the given type, there exist universal machines $U,V$ such that 
$P_U$ and $P_V$ have incomparable Turing degrees.
\end{coro}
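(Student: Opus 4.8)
The plan is to reduce everything to the relativized characterization stated just before the corollary: for each machine model and each property considered in Theorems \ref{YdyxkMqxRX}, \ref{7MGSCRK2Yz}, \ref{Sgl7O4jKzk}, and \ref{7FXYwDy4zY}, the probability $P_M$ with respect to a universal machine $M$ of that type is an $n$-random real that is either $\mathbf{0}^{(n-1)}$-left-c.e.\ or $\mathbf{0}^{(n-1)}$-right-c.e., for the appropriate $n\in\{2,3,4\}$; and conversely every such real in $(0,1)$ arises as $P_M$ for a suitable universal $M$ of that type. So it suffices to exhibit, for each relevant $n$, two $n$-random reals with $\mathbf{0}^{(n-1)}$-c.e.\ Dedekind cut (on the correct side) that have incomparable Turing degrees. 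Since the characterization is $\Leftarrow$ as well as $\Rightarrow$, any such pair is realized as $P_U$, $P_V$ for some universal machines $U,V$ of the given type.

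First I would recall that an $n$-random real with $\mathbf{0}^{(n-1)}$-left-c.e.\ cut is, by the relativized Solovay--Calude-Hertling-Khoussainov-Wang--Ku{\v c}era-Slaman characterization, precisely a halting probability $\Omega_N^{\emptyset^{(n-1)}}$ of a universal \pf machine $N$ relative to $\emptyset^{(n-1)}$; the right-c.e.\ case is handled symmetrically by passing to $1-\Omega$. Thus the question becomes: do there exist universal \pf machines $N_1, N_2$ such that $\Omega_{N_1}^{\emptyset^{(n-1)}}$ and $\Omega_{N_2}^{\emptyset^{(n-1)}}$ have incomparable Turing degrees? This is exactly the kind of question addressed by \DHMN \cite{MR2188515} on relativizations of Chaitin's $\Omega$: for any oracle $A$, the reals of the form $\Omega_N^A$ (as $N$ ranges over universal \pf oracle machines) do not all lie in a single Turing degree above $A$; in particular one can find two such reals of incomparable degree. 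Applying this with $A=\emptyset^{(n-1)}$ for $n=2,3,4$ gives incomparable pairs at each required level, and transferring through our characterizations yields the universal machines $U,V$ in each case. The only bookkeeping is to check that the side of the Dedekind cut (left-c.e.\ versus right-c.e.) matches what each theorem asserts — for Theorems \ref{YdyxkMqxRX} and \ref{7MGSCRK2Yz} we need the right-c.e.\ side, so we use $1-\Omega_N^{\emptyset'}$, while Theorems \ref{Sgl7O4jKzk} and \ref{7FXYwDy4zY} give the left-c.e.\ side directly.

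The main obstacle is making precise exactly which statement from \cite{MR2188515} to invoke and verifying it relativizes uniformly: one wants that for every oracle $A$ the collection $\{\deg_T(\Omega_N^A) : N \text{ universal \pf}\}$ contains an antichain of size two, not merely that it is not a singleton. I expect this follows from the Downey--Hirschfeldt--Miller--Nies construction of two relatively random (or, more precisely, mutually "independent enough") $A$-left-c.e.\ reals by a finite-injury argument relative to $A$ — building $\Omega_{N_1}^A$ and $\Omega_{N_2}^A$ so that neither computes the other, using that one has freedom in choosing the \pf prefix codes $\sigma_e$ and hence can code a prescribed $A$-left-c.e.\ real of each degree into a universal machine's halting probability. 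Once that is in hand, the remaining steps are routine: feed the two incomparable $n$-random $\mathbf{0}^{(n-1)}$-c.e.\ reals into the converse directions of Theorems \ref{YdyxkMqxRX}--\ref{7FXYwDy4zY} to obtain $U$ and $V$ realizing them as the stated probabilities, and conclude $P_U$, $P_V$ are Turing-incomparable. The monotone-machine and infinitary self-delimiting cases in (2) and (3) are covered because Theorem \ref{7MGSCRK2Yz} provides the same $2$-random right-c.e.\ characterization for those models as well.
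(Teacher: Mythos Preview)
Your proposal is correct and follows essentially the same route as the paper: invoke the characterizations of Theorems \ref{YdyxkMqxRX}--\ref{7FXYwDy4zY} to identify the probabilities with $n$-random $\mathbf{0}^{(n-1)}$-left-c.e.\ (or right-c.e.) reals, then cite \DHMN \cite{MR2188515} for the existence of universal \pf machines $U,V$ with $\Omega_U^{\emptyset^{(n)}}$, $\Omega_V^{\emptyset^{(n)}}$ of incomparable Turing degree, and push these back through the converse directions of the theorems. Your worry about the ``main obstacle'' is unnecessary: the paper simply quotes this incomparability as an established result from \cite{MR2188515} rather than re-deriving it, so no additional argument is needed there.
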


Indeed, \DHMN
\cite{MR2188515} showed that for each $n>0$ there exist oracle universal \pf machines $U,V$
such that the degrees of $\Omega_U^{\emptyset^{(n)}}$, $\Omega_V^{\emptyset^{(n)}}$
are incomparable. This fact, along with the above discussion and the results of 
Section \ref{skotQ7rkRJ} give Corollary \ref{FziaqEGYYr}.
The reader who would like a more detailed discussion of the
relativization of the characterization of halting probabilities and the application of
the results from \cite{MR2188515} is referred to
\cite{Barmpalias3488} where a similar implication was established, 
along with a detailed presentation of the necessary background.

More can be said about the Turing degrees of the probabilities we discussed in Section  
\ref{IzttMzGQSQ}. By \DHMN
\cite{MR2188515}, for each oracle $X$ and each universal oracle \pf machine $U$ 
we have
$\Omega_U^{\emptyset^{(n)}}\oplus\emptyset^{(n)}\equiv_T \emptyset^{(n+1)}$ for all $n>0$.
Moreover, for all $n>0$, the Turing degrees of $\Omega_U^{\emptyset^{(n)}}$ and $\emptyset^{(n)}$ form
a minimal pair. Hence if $V$ is an oracle machine and $T_V$ is the probability that $T$ computes
a total function, then the Turing degree of $T_V$ forms a minimal pair with the degree of the 
halting problem while $T_V\oplus \emptyset'$ computes $\emptyset^{(2)}$. Similar facts can be deduced
about all the probabilities considered in the theorems of Section \ref{IzttMzGQSQ}.

\section{Background material, terminology and notation}
A standard reference regarding the interaction of 
computability theory and algorithmic randomness are
\DH \cite{rodenisbook} and Nies \cite{Ottobook}, while Calude \cite{caludebook} has an information-theoretic perspective. 
Odifreddi \cite{Odifreddi:89,Odifreddi:99} is a standard reference in classical computability
theory while Li and Vit{\'a}nyi \cite{MR1438307} is a standard reference in the theory of Kolmogorov complexity.

 The arguments in this paper
borrow many ideas from the theory of arithmetical complexity and in particular
 $\Sigma^0_n$-completeness. A compact presentation of this 
part of computability theory can be found in \eg \cite[Chapter IV]{MR882921}.
A more comprehensive source on this topic is \cite{Hinman1978-HINRH}.
In this section we introduce some notions that are directly related to our analysis,
and prove some related results which will be used in the main part of the paper.

Recall the types of machines that we discussed in Section \ref{IzttMzGQSQ}.
If $M$ is an oracle  machine then $M(X)$ refers to 
a (partial) function $n\mapsto M(X,n)$ from $\Nat$ to $\Nat$. 
Let $\mathtt{TOT}(M)$ denote the class of all $X$ such that $M(X)$ is total.
We occasionally refer to the measure of  $\mathtt{TOT}(M)$ as the 
{\em totality probability} of $M$. 
Let $\cofin{M}$ consist of the reals $X$ 
such that the domain of $M(X)$ is cofinite.
The measure of $\cofin{M}$ is sometimes referred to as the 
{\em cofiniteness probability} of $M$.
Moreover we let $\mathtt{INF}(M)$
denote the class of all $X$ such that the domain of $M(X)$ is infinite.
We refer to the measure of $\mathtt{INF}(M)$ 
as the {\em infinitude probability} of $M$.

A monotone machine $N$ is viewed as a
(partial) function $\sigma\mapsto N(\sigma)$ from strings to strings 
with a monotonicity property (see Section \ref{IzttMzGQSQ}).
Moreover in this case $N(X)$ denotes the supremum
of all $N(\sigma)$ where $\sigma$ is a prefix of $X$.
Given a monotone machine $N$ we let $\mathtt{INF}(N)$
be the set of all $X$ such that $N(X)$ is infinite, \ie a stream.
Similarly, let
$\mathtt{FIN}(N)$
be the set of all $X$ such that $N(X)$ is finite, \ie a string.
We refer to the measure of $\mathtt{INF}(N)$ 
as the {\em infinity probability} of $N$.
Also let $\cofin{N}$ denote the reals $X$ in
$\mathtt{INF}(N)$ such that $N(X)$
is equal to $\sigma\ast 1^{\omega}$ for some binary string $\sigma$.
Finally, let $\comp{N}$ denote the set of reals $X$ such that the domain of $N(X)$  is computable. 

Similar notations and terminology 
apply to the infinitary self-delimiting machines, which we introduce
in Section \ref{ELZgFIxGxP}.
A string $\sigma$ is {\em compatible} with a string $\tau$ if $\sigma\preceq\tau$ or 
$\tau\preceq\sigma$; otherwise we say that  
$\sigma$ is {\em incompatible} with $\tau$. As it is customary in computability theory,
the suffix $[s]$ on a parameter which is part of a construction that takes place in countably
many stages indicates
the value of the parameter at stage $s$.

\subsection{Arithmetical classes and measure}\label{pm7oworwdR}
We work with subsets of the Cantor space, the class of infinite binary sequences 
with the usual topology generated by the basic open sets
$\dbra{\sigma}=\{X\ |\ \sigma\prec X\}$ where $\prec$ denotes the (strict) prefix relation.
Given a set of strings $S$, define 
$\dbra{S}$ to be the set of all reals that have a prefix in $S$. We use $\ast$ to denote concatenation
of strings.
Moreover if $P$ is a set of reals and $\sigma$ is a string, then $\sigma\ast P$ denotes the
set of reals $Y$ of the form $\sigma\ast X$, where $X\in P$.
We consider the uniform Lebesgue product measure on subsets of the Cantor space.
Recall that a class of reals is $\Sigma^0_n$ if it is definable by a
$\Sigma^0_n$ formula in arithmetic. Moreover 
$\Pi^0_n$ classes are the complements of $\Sigma^0_n$ classes.
These definitions relativize to any oracle $X$, obtaining the $\Sigma^0_n(X)$ and the  
$\Pi^0_n(X)$ classes which are definable in arithmetic with an extra parameter $X$.
The $\Sigma^0_1$ classes are known as the effectively open sets. However note that
for $n>1$ the $\Sigma^0_n$ classes are not necessarily open, so they are not necessarily
$\Sigma^0_1(\emptyset^{(n-1)})$.
The following fact will be used several times in our analysis.
In the first part of Lemma \ref{TLBE5A9gU4} the qualification {\em uniformly} means that from
an index of a  $\Sigma^0_n$ class we can compute an index for a $\mathbf{0}^{(n-1)}$-computable
increasing sequence of rationals converging to the measure of the class.
Similar meaning is intended by  the qualification {\em uniformly}  in the second part of the lemma,
but with respect to decreasing approximations by rationals.

\begin{lem}[Measures of arithmetical classes]\label{TLBE5A9gU4}
Let $n>0$. 
The measure of a  $\Sigma^0_n$ class is uniformly a $\mathbf{0}^{(n-1)}$-left \ce real. 
Similarly,
the measure of a  $\Pi^0_n$ class is  uniformly a $\mathbf{0}^{(n-1)}$-right \ce real. 
\end{lem}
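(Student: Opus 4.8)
The plan is to prove both halves simultaneously by induction on $n$: I would prove the statement for $\Sigma^0_n$ classes directly and then read off the statement for $\Pi^0_n$ classes from it at the \emph{same} level by complementation. Indeed, if $\mathcal{C}$ is $\Pi^0_n$ then its complement $\mathcal{A}$ is $\Sigma^0_n$ with an index computable from that of $\mathcal{C}$, and if $q_s\nearrow\mu(\mathcal{A})$ is a $\mathbf{0}^{(n-1)}$-computable nondecreasing sequence of rationals then $1-q_s\searrow\mu(\mathcal{C})$ is a $\mathbf{0}^{(n-1)}$-computable nonincreasing one; hence $\mu(\mathcal{C})$ is uniformly $\mathbf{0}^{(n-1)}$-right-c.e. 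So it is enough to handle the $\Sigma^0_n$ clause.

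For the base case $n=1$, a $\Sigma^0_1$ class is $\dbra{W}$ for a c.e.\ set of strings $W$, and a c.e.\ index for $W$ is computable from an index for the class. Writing $W=\bigcup_s W_s$ with $W_s$ finite and given computably in $s$, the numbers $\mu(\dbra{W_s})$ are dyadic rationals computable uniformly in $s$, they are nondecreasing in $s$, and they converge to $\mu(\dbra{W})$ by continuity of measure from below; this is the required uniform witness.

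For the inductive step I would assume the lemma at level $n$ and take a $\Sigma^0_{n+1}$ class $\mathcal{A}$, writing $\mathcal{A}=\bigcup_m\mathcal{B}_m$ for a sequence $(\mathcal{B}_m)$ of $\Pi^0_n$ classes whose indices are computable, uniformly in $m$, from an index for $\mathcal{A}$ (peel off the leading existential number quantifier). Replacing $\mathcal{B}_m$ by $\bigcup_{j\le m}\mathcal{B}_j$ --- still uniformly $\Pi^0_n$, since $\Pi^0_n$ is closed under finite unions uniformly --- I may assume the $\mathcal{B}_m$ are nested, so that $\mu(\mathcal{A})=\sup_m\mu(\mathcal{B}_m)$. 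By the inductive hypothesis (its $\Pi^0_n$ half) there is a $\emptyset^{(n-1)}$-computable function $(m,s)\mapsto r_{m,s}\in\Rat$, nonincreasing in $s$, with $\lim_s r_{m,s}=\mu(\mathcal{B}_m)$, and an index for it is computable from the index of $\mathcal{A}$. Now a rational $q$ satisfies $q<\mu(\mathcal{A})$ precisely when $\exists m\,\exists k\,\forall s\,(r_{m,s}\ge q+2^{-k})$. The innermost statement $\forall s\,(r_{m,s}\ge q+2^{-k})$ is co-c.e.\ relative to $\emptyset^{(n-1)}$, hence decidable by $\emptyset^{(n)}$, so the whole predicate ``$q<\mu(\mathcal{A})$'' is c.e.\ relative to $\emptyset^{(n)}$; that is, the left Dedekind cut of $\mu(\mathcal{A})$ is $\emptyset^{(n)}$-c.e.\ uniformly, which is exactly the level-$(n+1)$ statement. (Cf.\ \cite{MR882921} for the underlying facts about the arithmetical hierarchy.)

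The one place that warrants care is this last computation: I must make sure that passing from a $\mathbf{0}^{(n-1)}$-right-c.e.\ approximation of each $\mu(\mathcal{B}_m)$ to a $\mathbf{0}^{(n)}$-left-c.e.\ description of $\sup_m\mu(\mathcal{B}_m)$ costs exactly one Turing jump and not more. The \emph{strict} inequality $q<\lim_s r_{m,s}$ is what introduces the extra quantifier $\exists k$ over the $\Pi^0_1(\emptyset^{(n-1)})$ matrix $\forall s\,(r_{m,s}\ge q+2^{-k})$, yielding a predicate that is c.e.\ relative to $\emptyset^{(n)}$ but would not in general be c.e.\ relative to $\emptyset^{(n-1)}$; and the outer $\exists m$ costs nothing, since sets c.e.\ in an oracle are closed under existential number quantification. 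The remaining ingredients --- making the $\mathcal{B}_m$ nested via finite unions, and the complementation argument for the $\Pi^0_n$ clause --- are routine.
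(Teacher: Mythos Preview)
Your proof is correct and follows essentially the same inductive route as the paper: write a $\Sigma^0_{n+1}$ class as an increasing union of $\Pi^0_n$ classes, use the inductive hypothesis to get $\mathbf{0}^{(n-1)}$-right-c.e.\ approximations to their measures, and observe that this forces the supremum to be $\mathbf{0}^{(n)}$-left-c.e. The only cosmetic differences are that the paper phrases the jump step as ``$\mathbf{0}^{(n-1)}$-right-c.e.\ reals are uniformly $\mathbf{0}^{(n)}$-computable, hence their supremum is $\mathbf{0}^{(n)}$-left-c.e.'' rather than unpacking the quantifiers explicitly as you do, and it treats the $\Pi^0_{n+1}$ case by a symmetric argument in the inductive step rather than by complementation at each fixed level.
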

\begin{proof}
By induction on $n$. For $n=1$ this clearly holds. Assume that the statement is true for $n=k$
and consider a $\Sigma^0_{k+1}$ class $S$. Then $S=\cup_i P_i$ where $(P_i)$ is a uniform
increasing sequence of $\Pi^0_{k}$ classes. 
By the hypothesis, the reals $\mu(P_i)$ are uniformly \rce relative to $\mathbf{0}^{(k-1)}$.
So the reals $\mu(P_i)$ are uniformly $\mathbf{0}^{(k)}$-computable. On the other hand, $\mu(S)$ is
the supremum of all $\mu(P_i)$. Hence $\mu(S)$ is \lce relative to $\mathbf{0}^{(k)}$. A similar
argument shows that the measure of a $\Pi^0_{k+1}$ class is \rce relative to $\mathbf{0}^{(k)}$.
\end{proof}
Hence, although $\Sigma^0_n$ classes are topologically very different from the
$\Sigma^0_1(\emptyset^{(n-1)})$ classes for $n>1$, their measures are the same type of 
real numbers. This simple fact will be used in our main analysis.
Note that $\Sigma^0_1(\emptyset^{(n-1)})$ classes can be represented by
$\mathbf{0}^{(n-1)}$-\ce sets of binary strings. We denote the set of binary strings by $2^{<\omega}$
and the set of binary streams by $2^{\omega}$.
Note that $\Sigma^0_1$ sets of strings are exactly the \ce sets of strings.

For $\Sigma^0_2$ upward closed sets of strings we will use the following fact:

\begin{lem}[Canonical $\Sigma^0_2$ approximations]
If $U$ is a $\Sigma^0_2$ upward closed set of strings,
there exists
a computable sequence $(V_s)$ of finite sets
of strings  with upward closures $\mathbf{V}_s$
respectively such that
\begin{enumerate}[\hspace{0.5cm}(i)]
\item for each string $\sigma$ we have $\sigma\in U$ if and only if there exists $s_0$ 
such that $\sigma\in \mathbf{V}_s$ for all $s>s_0$;
\item there are infinitely many $s$ such that $\mathbf{V}_s\subseteq U$.
\end{enumerate}
We call $(V_s)$  a {\em canonical $\Sigma^0_2$ approximation to $U$} (although it is not an approximation in the literal sense).
\end{lem}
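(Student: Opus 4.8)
The plan is to start from the standard normal form for a $\Sigma^0_2$ set: since $U$ is $\Sigma^0_2$, there is a computable relation $R$ such that $\sigma\in U$ if and only if $\exists m\,\forall t\,R(\sigma,m,t)$. For each string $\sigma$ and each stage $s$, I would let $m_s(\sigma)$ be the least $m\le s$ (if any) such that $R(\sigma,m,t)$ holds for all $t\le s$, and declare $\sigma$ to be ``active at stage $s$'' if such an $m$ exists. The usual lim-inf behaviour then gives: $\sigma\in U$ iff the least witness $m$ with $\forall t\,R(\sigma,m,t)$ exists, and from some stage on no smaller candidate witness is ever killed after that stage, so $\sigma$ is active at all sufficiently large $s$; conversely if $\sigma\notin U$ then every candidate witness $m$ is eventually refuted, so $\sigma$ is active at only finitely many stages (more precisely, for each $m$ only for a bounded block of stages, and there are only finitely many $m\le s$ — this needs a small argument, see below). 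To keep things finite I would only ever consider strings of length $\le s$ at stage $s$; since each $\sigma\in U$ has fixed finite length, it is eventually considered, so this truncation does not affect the limit behaviour.

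The second requirement — that $\mathbf{V}_s\subseteq U$ for infinitely many $s$ — is the part that needs a genuine idea rather than bookkeeping, and I expect it to be the main obstacle. The naive set of active strings need not be contained in $U$ at any stage (a string with a ``false start'' witness can be active at a stage while not lying in $U$). The fix is to exploit that $U$ is upward closed together with the fact that, along a fixed branch, membership in $U$ is witnessed by a single string. Concretely: I would enumerate, in a computable fashion, an increasing sequence of ``safe'' finite antichains $A_s$ of strings that are genuinely in $U$, by searching at stage $s$ for finitely many minimal strings $\sigma$ with a witness $m\le s$ surviving up to a much later checkpoint, and accepting $\sigma$ into $A_s$ only once that check succeeds; whenever this search turns up nothing new I set $V_s=V_{s-1}$. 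Then at those (infinitely many) stages $s$ where I have ``caught up'' — i.e. where the set of currently-active strings of length $\le s$ coincides with the upward closure of the strings verified so far — I set $V_s$ to that verified set, which guarantees $\mathbf V_s\subseteq U$; at all other stages I set $V_s$ to the current active set (so that property (i) is not disturbed, since this only adds extra strings at finitely many stages per string). One has to check that the ``caught up'' stages really are infinite: this follows because each string in $U$ is verified at some finite stage and stabilises as active, while each string not in $U$ drops out of the active set after finitely many stages, so infinitely often the two descriptions agree.

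The verification of property (i) is then routine from the construction: a string $\sigma\in U$ is active from some stage on and is put into $V_s$ whenever it is active (at the ``caught up'' stages it is in the verified set; at other stages it is in the active set), so $\sigma\in\mathbf V_s$ for all large $s$; and a string $\sigma\notin U$ is active at only finitely many stages and is never verified, so $\sigma\notin\mathbf V_s$ eventually — here I use that $\mathbf V_s$ is the upward closure of a finite set, so $\sigma\in\mathbf V_s$ forces some prefix of $\sigma$ to be in $V_s$, and one argues about that finite set of prefixes. I would organise the final write-up as: (1) fix the normal form and define the active strings; (2) record the lim-inf lemma for active strings; (3) describe the verified antichains $A_s$ and the ``caught up'' stages, proving there are infinitely many; (4) define $V_s$ and check (i) and (ii). The delicate point throughout is coordinating the two finiteness facts (each non-member is active only finitely often; each member is verified by a finite stage) so that the ``caught up'' stages are genuinely infinite without breaking the eventual-membership property.
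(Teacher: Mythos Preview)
There is a genuine gap. Your whole construction---both the verification of (i) and the existence of infinitely many ``caught up'' stages for (ii)---rests on the claim that if $\sigma\notin U$ then $\sigma$ is active at only finitely many stages. That claim is false: take $R(\sigma,m,t)$ to hold exactly when $t\le m$. Then $U=\emptyset$, yet at each stage $s$ the choice $m=s$ satisfies $R(\sigma,s,t)$ for all $t\le s$, so every string is active at every stage past its length. Your parenthetical (``there are only finitely many $m\le s$'') does not help, since that bound grows with $s$; the ``small argument'' you defer cannot be supplied. In this example your active set is all strings of length $\le s$, your verified set $A_s$ stays empty (no fixed $m\le s$ survives to any later checkpoint), you are never ``caught up'', and you set $V_s$ to the full active set at every stage---violating (i) and (ii) simultaneously. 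The same phenomenon persists in non-degenerate examples: whenever $\sigma\notin U$, nothing prevents $m_s(\sigma)$ from drifting to infinity while existing at every stage.

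The paper avoids this by passing through $\emptyset'$: write $U=W^{\emptyset'}$ for a c.e.\ operator $W$, fix a computable enumeration $(\emptyset'_s)$, and set $V_s=\widehat W^{\emptyset'_s}_s$, where $\widehat W$ is the Lachlan hat-trick modification (enumerations whose oracle use was just injured are suppressed). The point is that at every \emph{true stage} $s$ of the enumeration of $\emptyset'$ one has $V_s\subseteq U$; since there are infinitely many true stages this gives (ii) outright, and then the ``only if'' half of (i) falls out immediately, because any $\sigma$ that is in $\mathbf V_s$ cofinitely often must in particular lie in some $\mathbf V_t\subseteq U$. Your route through the $\exists\forall$ normal form can be salvaged for (i) alone (put $\sigma$ into $V_s$ only when $m_s(\sigma)$ is defined and equals $m_{s-1}(\sigma)$; one checks that $m_s(\sigma)$ is non-decreasing where defined, hence stabilises iff $\sigma\in U$), but even with that fix (ii) is not automatic, and finite ``later checkpoint'' searches cannot certify genuine membership in a $\Sigma^0_2$ set. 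You need some mechanism playing the role of true stages.
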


\begin{proof}
Given a set of strings $S$, let $\mathbf{S}$ be the upward closure of $S$. Given $U$,
there exists a \ce operator $W$ such that $W^{\emptyset'}=U$. 
We can modify the enumeration of $W$ (obtaining a modified $\widehat{W}$) 
with respect to a computable enumeration
$(\emptyset'_s)$ of $\emptyset'$, so that if $n\in \emptyset'_{s+1}-\emptyset'_s$ for some number $n$ 
and stage $s$, any number $m$ which is in $W^{\emptyset'_{s+1}}_{s+1}$ with oracle use above $n$
is not counted in $\widehat{W}^{\emptyset'_{s+1}}_{s+1}$. In other words, the enumeration of 
$\widehat{W}^{\emptyset'}$ follows the enumeration of  $W^{\emptyset'}$ except that it delays the enumeration
of certain numbers until a stage where the associated segment of $\emptyset'$ remains stable between the
current and the previous stages. This is a standard technique which is known as the
{\em hat-trick} (originally due to Lachlan) and is applied to functionals and \ce operators relative to \ce sets 
(see Odifreddi \cite[Section X.3]{Odifreddi:99} for an extended discussion on this method).
Let $V_s=\widehat{W}^{\emptyset'_{s}}_{s}$. 
A {\em true} enumeration into $\emptyset'$ is an enumeration of a number $n$ into $\emptyset'$ at stage $s$
such that $\emptyset'_{s}\restr_n=\emptyset'\restr_n$. A {\em true stage} is a stage $s$ at which 
a true enumeration occurs. Clearly there exist infinitely many true enumerations and stages. 
By the choice of $\widehat{W}$ we have that if $s$ is a true stage then  
$V_s\subseteq U$. Since $U$ is upward closed, we also have $\mathbf{V}_s\subseteq U$ for infinitely many 
stages $s$. Moreover, if $\sigma\in U$ then $\sigma\in V_s\subseteq \mathbf{V}_s$ for all but finitely many stages $s$. Finally, by the hat trick, if
$\sigma\in \mathbf{V}_s$ for all but finitely many stages $s$, we necessarily have $\sigma\in U$
because $\mathbf{V}_t\subseteq U$ for infinitely many stages $t$.
\end{proof}

The converse of Lemma \ref{TLBE5A9gU4} is also true in a strong effective way.
This is a consequence of the Kraft-Chaitin theorem (\eg see \cite[Section 2.6]{rodenisbook}) 
which says that, given a
computable sequence of positive integers $(c_i)$ such that $\sum_i 2^{-c_i}\leq 1$,
we can effectively produce a computable sequence of binary strings $(\sigma_i)$
such that $|\sigma_i|=c_i$ for each $i$ and the set $\{\sigma_i\ |\ i\in\Nat\}$ is prefix-free.
Moreover the Kraft-Chaitin theorem holds relative to any oracle.
Recall that $\Sigma^0_1(\emptyset^{(n-1)})$ classes of reals $P$ can be represented by
$\Sigma^0_n$ sets $S$ of strings, in the sense that $P=\dbra{S}$. 
The measure of a set of binary strings $S$ is defined to be the measure
of $\dbra{S}$. Moreover these
sets of strings can be chosen to be prefix-free.
\begin{lem}[Measures of arithmetical classes, converse]\label{PMFeQmge4u}
Let $n>0$. Given any $\mathbf{0}^{(n-1)}$-left-\ce real $\alpha\in [0,1]$, we can effectively produce a
$\Sigma^0_n$ \pf set of strings of measure $\alpha$. Similarly, if $\beta\in [0,1]$ is a
$\mathbf{0}^{(n-1)}$-right-\ce real, we can effectively produce a 
$\Pi^0_n$ \pf set of strings of measure $\beta$.
\end{lem}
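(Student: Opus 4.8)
The plan is to reduce everything to the $n=1$ case and then relativize. First I would handle $n=1$ directly: given a left-c.e.\ real $\alpha \in [0,1]$, fix a computable increasing sequence of rationals $q_0 < q_1 < \cdots$ converging to $\alpha$. Setting $\delta_i = q_{i+1} - q_i$ (and $\delta_{-1}=q_0$), each $\delta_i$ is a nonnegative rational, and $\sum_i \delta_i = \alpha \le 1$. Writing each $\delta_i$ as a finite sum of distinct negative powers of $2$ (its binary expansion is finite since $\delta_i$ is rational with denominator a power of $2$, after a harmless rational-approximation step to make the $q_i$ dyadic), I obtain a computable sequence of positive integers $(c_j)$ with $\sum_j 2^{-c_j} = \alpha \le 1$. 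By the Kraft--Chaitin theorem I effectively produce a computable, hence $\Sigma^0_1$, prefix-free sequence of strings $(\sigma_j)$ with $|\sigma_j| = c_j$, so that the set $\{\sigma_j : j \in \Nat\}$ is prefix-free and has measure $\sum_j 2^{-c_j} = \alpha$. The ``effectively produce'' clause is immediate: all of these steps are uniform in an index for the approximating sequence.

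Next I would relativize this argument to the oracle $\emptyset^{(n-1)}$. By definition a $\mathbf{0}^{(n-1)}$-left-c.e.\ real $\alpha$ has a left Dedekind cut that is c.e.\ in $\emptyset^{(n-1)}$, so there is a $\emptyset^{(n-1)}$-computable increasing sequence of dyadic rationals converging to $\alpha$; the argument above relativizes verbatim (the Kraft--Chaitin theorem holds relative to any oracle, as noted in the excerpt) to give a $\emptyset^{(n-1)}$-computable, hence $\Sigma^0_1(\emptyset^{(n-1)})$, prefix-free set of strings of measure $\alpha$. It then remains to observe that a $\Sigma^0_1(\emptyset^{(n-1)})$ set of strings is $\Sigma^0_n$: this is the standard fact, recalled just before the lemma, that $\Sigma^0_1(\emptyset^{(n-1)})$ classes of reals are exactly those of the form $\dbra{S}$ for $S$ a $\Sigma^0_n$ set of strings, combined with the observation that a prefix-free witnessing set of strings can be taken $\Sigma^0_n$ (indeed, a $\emptyset^{(n-1)}$-computable set is in particular $\Sigma^0_n$). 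This handles the left-c.e.\ half.

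For the right-c.e.\ half, given a $\mathbf{0}^{(n-1)}$-right-c.e.\ real $\beta \in [0,1]$, I would apply the first half of the lemma to $\alpha = 1 - \beta$, which is $\mathbf{0}^{(n-1)}$-left-c.e.: this yields a $\Sigma^0_n$ prefix-free set $S$ with $\mu(\dbra{S}) = 1 - \beta$. The complement $2^{\omega} \setminus \dbra{S}$ is a $\Pi^0_n$ class of measure $\beta$; to present it as $\dbra{T}$ for a prefix-free $\Pi^0_n$ set of strings $T$, take $T$ to be the set of minimal (under $\preceq$) strings $\tau$ all of whose extensions avoid $S$, equivalently the leaves of the tree of strings incomparable with every element of $S$. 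Membership $\tau \in T$ is a $\Pi^0_n$ condition (``$\tau$ is incompatible with every $\sigma \in S$, and each of $\tau\ast 0$, $\tau \ast 1$ has an extension in $S$, or $\tau$ has no proper prefix with this property'' — the exact bookkeeping is routine), $T$ is prefix-free by construction, and $\dbra{T} = 2^\omega \setminus \dbra{S}$, so $\mu(\dbra{T}) = \beta$. Alternatively, and perhaps more cleanly, I could simply note that $2^\omega \setminus \dbra{S}$, being $\Pi^0_1(\emptyset^{(n-1)})$ and hence also $\Sigma^0_1(\emptyset^{(n-1)})$ (complements of $\emptyset^{(n-1)}$-computably-open sets are $\emptyset^{(n-1)}$-computably open when $S$ is chosen $\emptyset^{(n-1)}$-computable with $\emptyset^{(n-1)}$-computable complement), and re-apply the relativized Kraft--Chaitin construction of the first half directly to a $\emptyset^{(n-1)}$-computable decreasing approximation to $\beta$, obtaining a prefix-free $\Sigma^0_n$ — and in fact we can arrange $\Pi^0_n$ — set of measure $\beta$.

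The main obstacle is not any single step but the bookkeeping needed to keep the witnessing sets genuinely prefix-free and genuinely of the stated arithmetical complexity simultaneously, rather than one class higher; the cleanest route is to drive everything through the $n=1$ Kraft--Chaitin construction relative to $\emptyset^{(n-1)}$ and invoke the already-recorded correspondence between $\Sigma^0_1(\emptyset^{(n-1)})$ sets of strings and $\Sigma^0_n$ sets of strings, rather than manipulating $\Pi^0_n$ definitions of complements by hand.
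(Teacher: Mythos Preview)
Your treatment of the $\Sigma^0_n$ half is correct and coincides with the paper's: both relativize Kraft--Chaitin to $\emptyset^{(n-1)}$ and observe that the resulting $\mathbf{0}^{(n-1)}$-computable prefix-free set is in particular $\Sigma^0_n$. The paper dispatches the $\Pi^0_n$ half in one phrase, ``by symmetry it suffices to prove the first statement.''

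Both of your proposed routes for the $\Pi^0_n$ half contain real gaps. The difficulty underlying the first route is topological: for \emph{any} set of strings $T$, the class $\dbra{T}$ is open, whereas $2^{\omega}\setminus\dbra{S}$ is closed; they coincide only when $\dbra{S}$ is clopen. Your claimed identity $\dbra{T}=2^{\omega}\setminus\dbra{S}$ therefore cannot hold once $\dbra{S}$ is a dense open set of measure $1-\beta<1$, which the Kraft--Chaitin output typically is; in that case $\mu(\dbra{T})<\beta$. (Separately, your $T$ is defined by the conjunction of a $\Pi^0_n$ clause ``no extension lies in $S$'' and a $\Sigma^0_n$ minimality clause ``the parent does have an extension in $S$'', so it is not visibly $\Pi^0_n$ either.)

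The same topological obstruction defeats your second route: the assertion that $2^{\omega}\setminus\dbra{S}$ is $\Sigma^0_1(\emptyset^{(n-1)})$ ``because $S$ is $\emptyset^{(n-1)}$-decidable'' confuses the arithmetical complexity of the index set $S$ with the Borel type of the class $\dbra{S}$; the complement of an open set is closed regardless of how effectively $S$ is presented. And ``re-apply Kraft--Chaitin to a decreasing approximation to $\beta$'' does not work as stated, since Kraft--Chaitin consumes a representation $\beta=\sum_i 2^{-c_i}$ with $(c_i)$ computable in the relevant oracle, which a merely $\mathbf{0}^{(n-1)}$-right-c.e.\ $\beta$ need not admit. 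The closing ``we can arrange $\Pi^0_n$'' is precisely the step that still needs an argument.
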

\begin{proof}
By symmetry it suffices to prove the first statement. 
Note that if $\alpha=0$ then the statement is trivial, so assume that $\alpha\in (0,1]$
and that $\alpha$ is a $\mathbf{0}^{(n-1)}$-left-\ce real.
Then there exists a $\mathbf{0}^{(n-1)}$-computable sequence of positive integers $(c_i)$
such that $\alpha=\sum_i 2^{-c_i}$.
By the Kraft-Chaitin theorem relative to $\mathbf{0}^{(n-1)}$ we can obtain a
$\mathbf{0}^{(n-1)}$-computable sequence of binary strings $(\sigma_i)$ 
such that the set $S:=\{\sigma_i\ |\ i\in\Nat\}$ is
a prefix-free set and $|\sigma_i|=c_i$ for each $i$. Therefore $\mu(\dbra{S})=\alpha$ 
and this concludes the proof.
\end{proof}
Lemma \ref{TLBE5A9gU4} and a modified version of Lemma \ref{PMFeQmge4u} will be used
in the proofs of most of the results presented in Section \ref{skotQ7rkRJ}.

\subsection{\ml randomness}\label{lkNQieYqH}
We use the formulation of algorithmic randomness in terms of effective statistical tests,
as introduced by \ml in \cite{MR0223179}.
A \ml test is a uniformly \ce sequence of $\Sigma^0_1$ classes $(U_i)$ such that
$\mu(U_i)\leq 2^{-i}$ for each $i$. More generally, a \ml test relative to an oracle $X$ is
a  uniformly \ce relative to $X$ sequence of $\Sigma^0_1(X)$ classes $(U_i)$ such that
$\mu(U_i)\leq 2^{-i}$ for each $i$. 
A real $\alpha$ is \ml random (also known as 1-random) if $\alpha\not\in \cap_i U_i$
for every \ml test $(U_i)$. More generally, for each $n$ we say that a real $\alpha$ is $(n+1)$-random
if it is \ml relative to $\mathbf{0}^{(n)}$, \ie $\alpha \not\in \cap_i U_i$
for every \ml test $(U_i)$ relative to $\mathbf{0}^{(n)}$.
\ml \cite{MR0223179} showed that there are {\em universal} \ml tests $(U_i)$, in the sense that
the reals that are not 1-random are exactly the reals in $\cap_i U_i$.
A similar statement is true regarding $n$-randomness.
It is well known, \eg from \cite{Kucera.Slaman:01}, that the measure $U_i$ of any member of a universal
\ml test is a 1-random real. Similarly, the measure of any member of a universal \ml test
relative to $\mathbf{0}^{(n)}$ is $(n+1)$-random.

Demuth \cite{Dempseu} showed that if $\alpha$ is a 
1-random \lce real and $\beta$ is another \lce real then $\alpha+\beta$ is a 1-random \lce real.
\DHN \cite{Downey02randomness} showed that
conversely, if $\alpha,\beta$ are \lce reals and  $\alpha+\beta$ is 1-random, then at least one of 
$\alpha,\beta$ is 1-random. These results relativize to an arbitrary oracle.
In particular, if  $\alpha$ is  
$n$-random and a $\mathbf{0}^{(n-1)}$-left-\ce 
real and $\beta$ is another $\mathbf{0}^{(n-1)}$-left-\ce 
real then $\alpha+\beta$ is $n$-random and is
a $\mathbf{0}^{(n-1)}$-left-\ce real.
\DHN \cite{Downey02randomness} also showed that
if $\alpha$ is a 1-random \lce real  and $\beta$ is any \lce real then  there exists some $c_0$ such that
$\alpha-2^{-c}\cdot \beta$ is a \lce real for each $c>c_0$. 
This fact relativizes and also applies to \rce reals. In particular, for each $n>0$,
\begin{equation}\label{5zlG2HCdb}
\parbox{12cm}{if $\alpha,\beta$ are $\mathbf{0}^{(n-1)}$-left-\ce reals 
and $\alpha$ is $n$-random then there exists some $c_0\in\Nat$ such that 
$\alpha-2^{-c}\cdot \beta$ is a $\mathbf{0}^{(n-1)}$-left-\ce real
for each $c>c_0$.}
\end{equation}
Moreover an analogous statement holds for $\mathbf{0}^{(n-1)}$-right-\ce reals.
This result is heavily based on earlier work by
Solovay \cite{Solovay:75u} and \CHKW \cite{Calude.Hertling.ea:01}.

\section{Oracle machines and monotone machines}\label{oQfiYDwopG}
In this section we prove Theorem \ref{YdyxkMqxRX} and the part of Theorem
\ref{7MGSCRK2Yz} which refers to monotone machines.
The plan of the proof is as follows. 
We first show how to construct a machine of the desired type, whose totality probability is
2-random. Then we argue that because of this first result, universal machines of the given type
have 2-random totality probabilities. The final step is to prove the converse, \ie that given any
2-random \lcep real number $\alpha\in(0,1)$, 
we can find a universal machine of the given type, whose totality
probability is $\alpha$. We note that the arguments involved in the three steps we just described
are interconnected in an essential way. In particular, in order to prove the converses 
of Theorems \ref{YdyxkMqxRX} and \ref{7MGSCRK2Yz} we need to use the lemmas
established while showing the randomness of the totality probability. 

\subsection{Randomness of totality and infinitude probabilities}\label{6iJldN1dR6}
We start with showing how to build an oracle machine $M$ such that the set of reals 
$X$ for which $n\mapsto M(X,n)$ is a total function is based on a given $\Sigma^0_2$ set of strings.
Recall from Section \ref{IzttMzGQSQ} that if $M$ is an oracle machine then $M(X)$ refers to 
a (partial) function $n\mapsto M(X,n)$ from $\Nat$ to $\Nat$. 
 
\begin{lem}[From a $\Sigma^0_2$ set to an oracle machine]\label{w1HmrXQxym}
Given any $\Sigma^0_2$ set $U$ of strings, there exists
an oracle  machine $M$ such that for each real $X$ the function $M(X)$ is total 
if and only if it has infinite domain
if and only if $X$ does not have a prefix in $U$.
\end{lem}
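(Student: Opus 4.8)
The plan is to reduce to the case where $U$ is upward closed and then let the machine be driven by a canonical $\Sigma^0_2$ approximation of $U$. First I would observe that replacing $U$ by its upward closure $\mathbf{U}$ does not change which reals have a prefix in the set, i.e.\ $\dbra{U}=\dbra{\mathbf{U}}$, and that $\mathbf{U}$ is again a $\Sigma^0_2$ set of strings (membership in $\mathbf{U}$ adds only a bounded quantifier over prefixes). So without loss of generality $U$ is upward closed, and I may fix, using the canonical $\Sigma^0_2$ approximation lemma, a computable sequence $(V_s)$ of finite sets of strings with upward closures $\mathbf{V}_s$ such that: (i) a string $\sigma$ lies in $U$ iff $\sigma\in\mathbf{V}_s$ for all sufficiently large $s$; and (ii) $\mathbf{V}_s\subseteq U$ for infinitely many $s$.

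Next I would define the oracle machine $M$ as follows: on oracle $X$ and input $n$, the machine searches for a stage $s\geq n$ such that the length-$s$ initial segment $X\restr_s$ is \emph{not} in $\mathbf{V}_s$; if such an $s$ is found, $M(X,n)$ halts with output $0$, and otherwise $M(X,n)$ diverges. This is a legitimate oracle computation: $V_s$ is computable, the relevant finite portion of $\mathbf{V}_s$ can be decided, and at stage $s$ the machine queries only the first $s$ bits of $X$.

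Finally I would verify the claimed equivalence. If $X$ has a prefix $\sigma\in U$, then by (i) we have $\sigma\in\mathbf{V}_s$, and hence $X\restr_s\in\mathbf{V}_s$ by upward closure, for all sufficiently large $s$; thus $\{s:X\restr_s\notin\mathbf{V}_s\}$ is finite, so $M(X,n)\downarrow$ for only finitely many $n$ and in particular $M(X)$ is neither total nor of infinite domain. Conversely, if $X$ has no prefix in $U$, I claim $\{s:X\restr_s\notin\mathbf{V}_s\}$ is infinite: otherwise it is bounded by some $s_1$, and choosing by (ii) a stage $s\geq s_1$ with $\mathbf{V}_s\subseteq U$ we get $X\restr_s\in\mathbf{V}_s\subseteq U$, contradicting that $X$ has no prefix in $U$. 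Hence for every $n$ the search succeeds and $M(X,n)\downarrow$, so $M(X)$ is total. Since "total" trivially implies "infinite domain", the three conditions coincide.

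The only delicate point — and the main obstacle — is the implication "$X$ has no prefix in $U$ $\Rightarrow$ $M(X)$ total". The naive attempt, testing $X\restr_n\in\mathbf{V}_n$ at the single stage $n$, fails because the finite sets $\mathbf{V}_s$ may overshoot $U$ at many stages even when $X\notin\dbra{U}$. What rescues the argument is the combination of the unbounded search over $s\geq n$ with clause (ii) of the canonical approximation (the existence of infinitely many "true" stages $s$ at which $\mathbf{V}_s\subseteq U$); arranging this interplay correctly is the crux of the proof.
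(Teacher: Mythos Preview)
Your proof is correct and follows essentially the same strategy as the paper: pass to the upward closure, fix a canonical $\Sigma^0_2$ approximation, and let $M(X,n)$ converge via an unbounded search over stages. The one minor difference is in what you test at stage $s$: you test $X\restr_s\in\mathbf{V}_s$ (length and stage synchronized), whereas the paper tests $X\restr_n\in\mathbf{V}_s$ (prefix length fixed at $n$, only the stage varies). With the paper's choice, clause (i) of the canonical approximation alone suffices for the ``no prefix in $U$ $\Rightarrow$ total'' direction, since $X\restr_n\notin U$ already gives infinitely many $s$ with $X\restr_n\notin\mathbf{V}_s$; your synchronized version genuinely needs the true-stage clause (ii), exactly as you identified. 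Both are fine, but note that what you flag as the crux becomes a non-issue under the paper's slightly simpler indexing.
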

\begin{proof}
Let  $(U_s)$ be a canonical $\Sigma^0_2$ approximation to $U$ as defined in 
Section \ref{pm7oworwdR}. 
We define the oracle machine $M$ in stages as a partial computable function from 
$2^{<\omega}\times\Nat$ to $\Nat$.

Let $M_0$ be the empty machine. At stage $s+1$, for each string $\sigma$ of length at most $s$
such that $M_s(\sigma,|\sigma|)$ is not defined, check if $\sigma$ has a prefix in $U_{s+1}$. If it does not,
then define $M_{s+1}(\sigma,|\sigma|)=0$. This concludes the construction.

First, note that $M$ is a well defined oracle  machine since given $\sigma\prec\tau$
and $n$ such that $M(\sigma,n)$, $M(\tau,n)$ are defined we have $M(\sigma,n)=M(\tau,n)=0$.
Moreover the oracle use is the identity function, regardless of the oracle.
Suppose that $X$ does not have a prefix in $U$ and let $n$ be a number. Since $X\restr_n$ is not in
$U$, there will be a stage $s+1>n$ such that $X\restr_n\not\in U_{s+1}$. Hence by the construction
we have $M(X\restr_n,n)=0$. Hence if $X$ does not have a prefix in $U$ then $M(X)$ is total.
Conversely, suppose that $X$ does have a prefix $X\restr_n$ in $U$. There there is a stage $s_0$
such that $X\restr_n\in U_{s+1}$ for all $s>s_0$. Hence by the construction, for all $m>s_0$
we have that $M(X\restr_{m},m)$ is undefined. Again by the construction, this means that
for all $m>s_0$ we have that the domain of $M(X)$ is finite, so $M(X)$ is not total. 
Lastly, if $M(X)$ is not total then $X$ has a prefix in $U$, the latter implies that
$M(X)$ has finite domain, and this last clause implies that $M(X)$ is not total. This concludes the proof. 
\end{proof}

Recall from Section \ref{IzttMzGQSQ} that a monotone machine $N$ is viewed as a
(partial) function $\sigma\mapsto N(\sigma)$ from strings to strings with the monotonicity property
that if $\sigma\preceq \tau$, $N(\sigma)\de$, and $N(\tau)\de$, then $N(\sigma)\preceq N(\tau)$.
Moreover in this case $N(X)$ denotes the supremum
of all $N(\sigma)$ where $\sigma$ is a prefix of $X$.

\begin{lem}[From a $\Sigma^0_2$ set to  an oracle machine]\label{qEkDLqgp25}
Given a $\Sigma^0_2$ set $U$ of strings, there exists
a monotone machine $N$ such that for all reals $X$ the output $N(X)$ is infinite 
if and only if $X$ does not have a prefix in $U$.
\end{lem}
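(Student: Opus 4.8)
The plan is to mimic the construction in Lemma \ref{w1HmrXQxym} but now produce a monotone machine $N$ that, along each real $X$, outputs longer and longer strings precisely while $X$ has not (yet) revealed a prefix in $U$. First I would fix a canonical $\Sigma^0_2$ approximation $(U_s)$ to $U$ as provided by the lemma on canonical $\Sigma^0_2$ approximations. The key idea: for a string $\sigma$, we want $N(\sigma)$ to be a string whose length reflects ``how far along $\sigma$ the approximation has failed to place a prefix of $\sigma$ in $U$.'' Concretely, I would define $N(\sigma)$ to be (roughly) $0^{k}$ where $k$ is the largest $n\le|\sigma|$ such that, at the stage $s=|\sigma|$ (or some stage tied to $\sigma$), none of $\sigma\restr_0,\dots,\sigma\restr_{n}$ lies in $U_{s}$ — padding the output with a $1$ if a prefix of $\sigma$ has been seen in $U_s$, so that the output stabilizes on a string rather than growing. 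The construction runs in stages exactly as in Lemma \ref{w1HmrXQxym}: at stage $s+1$, for each $\sigma$ of length $\le s$ that hasn't yet been assigned an output, check whether $\sigma$ has a prefix in $U_{s+1}$, and define $N(\sigma)$ accordingly.

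The main points to verify are monotonicity and the desired equivalence. For monotonicity: if $\sigma\preceq\tau$ and both $N(\sigma),N(\tau)$ are defined, the output strings should be nested; this forces some care in the exact bookkeeping — I would have $N(\sigma)$ be an initial segment of $0^{\infty}$ possibly capped with a single $1$, and arrange that the ``cap'' only ever gets added (never removed) as $\sigma$ grows or the stage advances, and that the $0$-length only increases along extensions. This is the kind of delicate but routine bookkeeping that the hat-trick-style stability of the canonical approximation makes manageable. For the equivalence: suppose $X$ has no prefix in $U$. Then for every $n$, $X\restr_n\notin U$, so by property (i) of the canonical approximation there are infinitely many stages $s$ with $X\restr_n\notin U_s$; using property (ii) and the fact that the approximation is ``below $U$'' infinitely often, one shows that for each $n$ there is some prefix $\sigma\prec X$ with $|N(\sigma)|\ge n$, whence $N(X)=\sup_{\sigma\prec X}N(\sigma)$ is infinite. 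Conversely, if $X\restr_n\in U$ for some $n$, then by (i) there is a stage $s_0$ after which $X\restr_n\in U_s$ for all $s>s_0$; from that point on every sufficiently long prefix $\sigma\prec X$ has the ``cap'' $1$ appended and its $0$-part bounded by $n$, so $N(\sigma)$ is eventually constant equal to some fixed finite string, and $N(X)$ is finite.

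The step I expect to be the main obstacle is getting the output lengths to be \emph{simultaneously} monotone along extensions of the input \emph{and} monotone in the stage parameter, while still tracking the $\Sigma^0_2$ behaviour correctly — in particular ensuring that when the approximation $U_s$ temporarily (and falsely) drops $X\restr_n$ out of $U_s$ at an intermediate stage, the machine does not get ``tricked'' into producing an output that then needs to be retracted (which monotone machines cannot do). The canonical approximation is designed precisely so that $\mathbf{V}_s\subseteq U$ at infinitely many stages and membership in $U$ stabilizes, so the safe design is: once $\sigma$ is assigned an output it is never changed, and the output assigned to $\sigma$ is based only on the first stage $s\ge|\sigma|$ at which we process $\sigma$; longer prefixes $\tau\succ\sigma$ are processed at later stages with more information, and one checks that the resulting outputs extend $N(\sigma)$. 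Once this is set up, everything else follows as in Lemma \ref{w1HmrXQxym}, and as noted we also automatically get that the oracle use / output growth is controlled. Finally, I would remark that the measure of $\INF{N}$ equals $\mu(\dbra{2^{<\omega}}\setminus\dbra{U})=1-\mu(\dbra{U})$, which ties this construction to the $\Sigma^0_2$/$\Pi^0_2$ measure machinery of Lemmas \ref{TLBE5A9gU4} and \ref{PMFeQmge4u} for use in the proof of Theorem \ref{7MGSCRK2Yz}.
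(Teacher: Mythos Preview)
Your proposal takes a much harder road than the paper and contains a genuine gap. The paper's proof is a two-line reduction: take the oracle machine $M$ from Lemma~\ref{w1HmrXQxym} and set $N(\sigma)=0^{|\sigma|}$ if $M(\sigma,i)\de$ for all $i<|\sigma|$, and $N(\sigma)\un$ otherwise. Monotonicity is immediate (outputs are nested segments of $0^{\omega}$), and since the use of $M$ is the identity, $N(X)$ is infinite iff $M(X)$ is total iff $X$ has no prefix in $U$. No new stagewise construction is needed.

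Your direct construction, by contrast, has a real problem in the ``cap with a $1$'' device. The canonical $\Sigma^0_2$ approximation $(U_s)$ can have \emph{false positives}: strings $\rho\in U_s$ at some stage $s$ with $\rho\notin U$. If $X$ has no prefix in $U$ but some prefix $X\restr_m$ appears in $U_s$ at the stage you process a short $\sigma\prec X$, your rule outputs $N(\sigma)=0^k 1$. A longer prefix $\tau\prec X$ processed at a later true stage (where $\mathbf{V}_s\subseteq U$) would need $N(\tau)$ to be a long all-zero string, but $0^k 1\not\preceq 0^{k'}$ for any $k'>k$, so monotonicity fails. Your proposed fix (``once $\sigma$ is assigned an output it is never changed'') does not address this: the issue is compatibility between outputs assigned to \emph{different} prefixes of the same $X$, not revision of a single output. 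The moment you drop the $1$-cap and only ever output segments of $0^{\omega}$ (leaving $N(\sigma)$ undefined when you would have capped), you recover exactly the paper's reduction; the cap is both unnecessary and the source of the error.
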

\begin{proof}
Consider the machine $M$ of Lemma \ref{w1HmrXQxym} and for each string
$\sigma$ let $N(\sigma)$ be $0^{|\sigma|}$ if $M(\sigma,i)\de$ for all $i<|\sigma|$
and undefined otherwise. Then clearly $N$ is a monotone machine. Moreover since the use
function of the oracle machine $M$ is the identity, for each $X$ we have that $M(X)$ is a total function
if and only if $N(X)$ is infinite. Then the desired properties of $N$ follow from the properties of $M$
according to Lemma \ref{w1HmrXQxym}.
\end{proof}

The next step is to establish the existence of machines with 2-random totality probability.
For this we are going to apply Lemmas \ref{w1HmrXQxym} and \ref{qEkDLqgp25} 
and the fact from Sections \ref{pm7oworwdR} and \ref{lkNQieYqH} that 
the measure of any member of a universal \ml test
relative to $\emptyset'$ is a 2-random 
\lcep real. 

\begin{lem}[Machines with 2-random totality and infinitude probability]\label{UhzhB2hWw2}
There exist an oracle  machine $M$ and a monotone machine $N$
such that the measures of $\mathtt{TOT}(M)$, $\mathtt{INF}(M)$ and $\mathtt{INF}(N)$ are all
2-random \rcep reals.
\end{lem}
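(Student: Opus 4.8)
The plan is to combine the constructions of Lemmas~\ref{w1HmrXQxym} and~\ref{qEkDLqgp25} with a member of a universal \ml test relative to $\emptyset'$, and then read off the three probabilities from the measure of that test member. First I would fix a universal \ml test $(V_i)$ relative to $\emptyset'$, so that each $V_i$ is a $\Sigma^0_1(\emptyset')$ class, hence by Section~\ref{pm7oworwdR} can be represented as $\dbra{U}$ for a $\Sigma^0_2$ set of strings $U$, and by the remarks in Section~\ref{lkNQieYqH} the measure $\mu(\dbra{U})=\mu(V_i)$ is a 2-random \rcep real. (We may need to pass to a particular index $i$, or adjust $U$ by a clopen set, to guarantee the measure lies in $(0,1)$ rather than being, say, $0$; a universal test member already has positive measure, and measure strictly less than $1$ can be arranged by taking $i$ large enough that $\mu(V_i)\le 1/2$.)

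Next I would feed this $\Sigma^0_2$ set $U$ into Lemma~\ref{w1HmrXQxym} to obtain an oracle machine $M$ with the property that $M(X)$ is total iff $M(X)$ has infinite domain iff $X$ has no prefix in $U$, i.e.\ iff $X\notin\dbra{U}$. Consequently $\mathtt{TOT}(M)=\mathtt{INF}(M)=2^{\omega}\setminus\dbra{U}$, so $\mu(\mathtt{TOT}(M))=\mu(\mathtt{INF}(M))=1-\mu(\dbra{U})$. Since $\mu(\dbra{U})$ is a 2-random \rcep real, its complement $1-\mu(\dbra{U})$ is a 2-random \lcep real in $(0,1)$; more to the point for the statement as phrased, I should double-check the intended direction of one-sided approximability --- by symmetry (a real $\alpha$ is $n$-random and \rcep iff $1-\alpha$ is $n$-random and \lcep), the complement is a 2-random real with left Dedekind cut \ce in $\emptyset'$, matching the \rcep/\lcep bookkeeping of Theorems~\ref{YdyxkMqxRX} and~\ref{7MGSCRK2Yz}. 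Then I would apply Lemma~\ref{qEkDLqgp25} to the same $U$ to get a monotone machine $N$ with $N(X)$ infinite iff $X\notin\dbra{U}$, so $\mathtt{INF}(N)=2^{\omega}\setminus\dbra{U}$ as well, and $\mu(\mathtt{INF}(N))=1-\mu(\dbra{U})$ has the same randomness and approximability properties.

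The main obstacle, such as it is, is not a deep one: it is making sure that the measure we start from is genuinely 2-random as a real in $(0,1)$ \emph{and} has the one-sided approximability on the correct side to land in the class named in the lemma (and in the downstream theorems). This is handled by the quoted fact that measures of members of a universal \ml test relative to $\emptyset'$ are 2-random \rcep reals, together with the elementary observations that (i) such a measure is positive by universality, (ii) it can be taken below $1$ by choosing a deep enough level of the test, and (iii) complementation swaps left- and right-c.e.\ while preserving $n$-randomness. Once these are in place the lemma follows immediately by juxtaposing the three constructions on a single $\Sigma^0_2$ set $U$; no further work is needed, and indeed the shared use of one $U$ is what makes the three probabilities literally equal rather than merely of the same type.
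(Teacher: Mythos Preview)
Your approach is exactly the paper's: pick a member $U$ of a universal \ml test relative to $\emptyset'$, apply Lemmas~\ref{w1HmrXQxym} and~\ref{qEkDLqgp25} to it, and read off the three probabilities as $1-\mu(\dbra{U})$.

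The only correction is the left/right bookkeeping you flagged yourself. A member of a \ml test relative to $\emptyset'$ is a $\Sigma^0_1(\emptyset')$ class, so by Lemma~\ref{TLBE5A9gU4} its measure $\mu(\dbra{U})$ is a $\mathbf{0}'$-\emph{left}-c.e.\ real (not right-c.e.\ as you wrote). Hence $\mu(\mathtt{TOT}(M))=\mu(\mathtt{INF}(M))=\mu(\mathtt{INF}(N))=1-\mu(\dbra{U})$ is a $\mathbf{0}'$-\emph{right}-c.e.\ real, which is precisely what the lemma asserts. Equivalently, as the paper notes, one can simply observe that $\mathtt{TOT}(M)$ is a $\Pi^0_2$ class and invoke Lemma~\ref{TLBE5A9gU4} directly to get the \rcep side, then use $2$-randomness of $\mu(\dbra{U})$ (and its preservation under $\alpha\mapsto 1-\alpha$) for the randomness part. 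Your concern about landing in $(0,1)$ is harmless here: any test member has positive measure by universality and measure at most $2^{-i}<1$ for $i\geq 1$.
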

\begin{proof}
Let $U$ be a member of a universal \ml test relative to $\emptyset'$. Consider the machine $M$
of Lemma  \ref{w1HmrXQxym} with respect to $U$.
Since $\mathtt{TOT}(M)$ is a $\Pi^0_2$ class, 
by Lemma \ref{TLBE5A9gU4} its measure is a \rcep real.
Moreover $\mathtt{TOT}(M)$ is exactly the complement of the reals that have a prefix in $U$.
The latter is 2-random, as $U$ is a member of a universal \ml test relative to $\emptyset'$. Hence
$\mathtt{TOT}(M)$ is a 2-random real.
A similar application  of Lemma \ref{qEkDLqgp25} to the class $U$ shows that the constructed monotone
machine $N$ is such that 
$\mathtt{INF}(N)$ is a 2-random real.
\end{proof}

We are ready to prove the `only if' direction of Theorem \ref{YdyxkMqxRX}.
\begin{lem}[Randomness of totality and infinitude probabilities for universal oracle machines]\label{yH3aERIEO3}
If $U$ is a universal oracle machine then the measures of $\mathtt{TOT}(U)$ and $\mathtt{INF}(U)$ 
are both 2-random \rcep reals.
\end{lem}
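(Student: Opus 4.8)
The plan is to reduce the statement to the already-established Lemma~\ref{UhzhB2hWw2} by exploiting universality together with the additivity of $2$-randomness under addition of $\mathbf{0}'$-right-c.e.\ reals. First I would record the two facts that come for free: the class $\mathtt{TOT}(U)=\{X\mid \forall n\,\exists s\; U(X,n)[s]\de\}$ is $\Pi^0_2$, and likewise $\mathtt{INF}(U)=\{X\mid \forall k\,\exists n\geq k\,\exists s\; U(X,n)[s]\de\}$ is $\Pi^0_2$; hence by Lemma~\ref{TLBE5A9gU4} both $\mu(\mathtt{TOT}(U))$ and $\mu(\mathtt{INF}(U))$ are $\mathbf{0}'$-right-c.e.\ reals. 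It remains to show they are $2$-random.

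Let $M$ be the oracle machine provided by Lemma~\ref{UhzhB2hWw2}, so that $\mu(\mathtt{TOT}(M))$ and $\mu(\mathtt{INF}(M))$ are $2$-random $\mathbf{0}'$-right-c.e.\ reals. Since $U$ is universal, there are an index $e$ with $M_e=M$ and an associated string $\sigma:=\sigma_e$ such that $U(\sigma\ast X,n)\simeq M(X,n)$ for all $X,n$. Consequently $\mathtt{TOT}(U)\cap\dbra{\sigma}=\sigma\ast\mathtt{TOT}(M)$, and similarly $\mathtt{INF}(U)\cap\dbra{\sigma}=\sigma\ast\mathtt{INF}(M)$, so by additivity of measure over the partition of $\mathtt{TOT}(U)$ into its part inside and outside $\dbra{\sigma}$,
$$
\mu(\mathtt{TOT}(U)) \;=\; \mu\big(\mathtt{TOT}(U)\setminus\dbra{\sigma}\big)\;+\;2^{-|\sigma|}\cdot\mu(\mathtt{TOT}(M)),
$$
and the identical identity holds with $\mathtt{INF}$ in place of $\mathtt{TOT}$. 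Now $\mathtt{TOT}(U)\setminus\dbra{\sigma}$ is again $\Pi^0_2$ (a $\Pi^0_2$ class intersected with a clopen set), so by Lemma~\ref{TLBE5A9gU4} the real $\gamma:=\mu(\mathtt{TOT}(U)\setminus\dbra{\sigma})$ is $\mathbf{0}'$-right-c.e.

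Finally I would observe that $2^{-|\sigma|}\cdot\mu(\mathtt{TOT}(M))$ is still $2$-random: as a binary sequence it is just $\mu(\mathtt{TOT}(M))$ with $|\sigma|$ leading zeros prepended, and $n$-randomness is invariant under prefixing finitely many bits; moreover it is $\mathbf{0}'$-right-c.e., since multiplication by a positive dyadic rational carries a decreasing $\mathbf{0}'$-computable approximation to one. Then I would invoke the right-c.e.\ analogue of Demuth's theorem, relativized to $\emptyset'$ (as discussed in Section~\ref{lkNQieYqH}): the sum of a $2$-random $\mathbf{0}'$-right-c.e.\ real and an arbitrary $\mathbf{0}'$-right-c.e.\ real is again $2$-random. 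Applying this to the random summand $2^{-|\sigma|}\mu(\mathtt{TOT}(M))$ and the summand $\gamma$ shows $\mu(\mathtt{TOT}(U))$ is $2$-random, and the same argument with $\mathtt{INF}(M)$ shows $\mu(\mathtt{INF}(U))$ is $2$-random. The only point requiring genuine care — the main obstacle, such as it is — is the bookkeeping that guarantees the remainder $\gamma$ has the correct arithmetical complexity and that shifting by $2^{-|\sigma|}$ preserves both $2$-randomness and the $\mathbf{0}'$-right-c.e.\ property; everything else is a routine transfer argument of exactly the kind used to pass from Chaitin's $\Omega$ for a single prefix-free machine to all universal prefix-free machines.
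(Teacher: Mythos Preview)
Your proposal is correct and follows essentially the same approach as the paper's proof: both take the machine $M$ from Lemma~\ref{UhzhB2hWw2}, use universality to get a coding string $\sigma$, split $\mathtt{TOT}(U)$ into $\sigma\ast\mathtt{TOT}(M)$ and the $\Pi^0_2$ remainder outside $\dbra{\sigma}$, and then apply the relativized Demuth addition fact from Section~\ref{lkNQieYqH} to conclude $2$-randomness. Your write-up is in fact slightly more explicit than the paper's about why $2^{-|\sigma|}\mu(\mathtt{TOT}(M))$ remains $2$-random and why the remainder stays $\Pi^0_2$, but the argument is the same.
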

\begin{proof}
Consider the machine $M$ of Lemma \ref{UhzhB2hWw2}, so that 
$\mu(\mathtt{TOT}(M))$ is a 2-random \rcep real. Since $U$ is universal, there exists
a string $\tau$ such that for all strings $\sigma$ we have $M(\sigma)\simeq U(\tau\ast\sigma)$.
We have
\[
\mathtt{TOT}(U)=\tau\ast\mathtt{TOT}(M)\cup \Big(\mathtt{TOT}(U)\cap (2^{\omega}-\dbra{\tau}])\Big)
\hspace{0.5cm}\textrm{and} \hspace{0.5cm}
\tau\ast\mathtt{TOT}(M)\cap \Big(\mathtt{TOT}(U)\cap (2^{\omega}-\dbra{\tau})\Big)=\emptyset. 
\]
Let $P=\mathtt{TOT}(U)\cap (2^{\omega}-\dbra{\tau})$ and note that this is a $\Pi^0_2$ class, so
$\mu(P)$ is a \rcep real by Lemma \ref{TLBE5A9gU4}.
So $\mu(\mathtt{TOT}(U))=2^{-|\tau|}\cdot \mu(\mathtt{TOT}(M))+\mu(P)$ is a 2-random \rcep real, as it is the sum of a
2-random \rcep real  and another \rcep real. A similar argument applies to 
$\mu(\mathtt{INF}(U))$.
\end{proof}

A very similar argument provides the `only if' direction of the part of
Theorem \ref{7MGSCRK2Yz} which refers to monotone machines.

\begin{lem}[Randomness of infinitude probabilities for universal monotone machines]
If $U$ is a universal monotone machine then the measure of $\mathtt{INF}(U)$ 
is a 2-random $\emptyset'$-right-c.e.\ real.
\end{lem}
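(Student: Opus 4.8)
The plan is to follow the argument for Lemma \ref{yH3aERIEO3} almost verbatim, replacing the oracle machine by a monotone machine and $\mathtt{TOT}$ by $\mathtt{INF}$. First I would dispose of the complexity bound: a real $X$ lies in $\mathtt{INF}(U)$ precisely when $\forall n\,\exists \sigma\prec X$ with $U(\sigma)\de$ and $|U(\sigma)|\geq n$, and since the inner matrix is $\Sigma^0_1$ in $X$ this exhibits $\mathtt{INF}(U)$ as a $\Pi^0_2$ class. By Lemma \ref{TLBE5A9gU4} its measure is therefore a \rcep real, which already settles the $\emptyset'$-right-c.e.\ part of the statement. It remains to prove 2-randomness.

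For this I would invoke Lemma \ref{UhzhB2hWw2}, which supplies a monotone machine $N$ with $\mu(\mathtt{INF}(N))$ a 2-random \rcep real. Since $U$ is universal there is a string $\tau$ (the coding prefix attached to an index of $N$) with $U(\tau\ast\sigma)\simeq N(\sigma)$ for every string $\sigma$. The key step is the identity
\[
\mathtt{INF}(U)\cap\dbra{\tau}=\tau\ast\mathtt{INF}(N).
\]
To see this, fix $X=\tau\ast Y$. The prefixes of $X$ of length at least $|\tau|$ are exactly the strings $\tau\ast\sigma$ with $\sigma\prec Y$, on which $U$ agrees with $N$; the remaining prefixes are the finitely many proper prefixes of $\tau$, whose $U$-outputs have bounded length. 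Because $U$ is monotone all these outputs form a chain, so $U(X)=\sup_{\rho\prec X}U(\rho)$ is infinite if and only if $\sup_{\sigma\prec Y}N(\sigma)=N(Y)$ is infinite, i.e.\ if and only if $Y\in\mathtt{INF}(N)$. This yields the displayed identity, whence $\mu\big(\mathtt{INF}(U)\cap\dbra{\tau}\big)=2^{-|\tau|}\cdot\mu(\mathtt{INF}(N))$.

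Finally I would split $\mu(\mathtt{INF}(U))$ along $\dbra{\tau}$. Writing $P=\mathtt{INF}(U)\cap(2^{\omega}-\dbra{\tau})$, this is again a $\Pi^0_2$ class, so $\mu(P)$ is a \rcep real by Lemma \ref{TLBE5A9gU4}, and
\[
\mu(\mathtt{INF}(U))=2^{-|\tau|}\cdot\mu(\mathtt{INF}(N))+\mu(P).
\]
Multiplication by the positive dyadic rational $2^{-|\tau|}$ preserves both 2-randomness and the \rcep property, so the first summand is a 2-random \rcep real; then, exactly as in the proof of Lemma \ref{yH3aERIEO3}, adding the further \rcep real $\mu(P)$ keeps the sum 2-random by the additivity of $n$-random right-c.e.\ reals recalled in Section \ref{lkNQieYqH} (with $n=2$). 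Hence $\mu(\mathtt{INF}(U))$ is a 2-random \rcep real. The one point requiring genuine care — and the only real departure from the oracle-machine case — is the verification of the displayed set identity, where one must check that the accumulation of $U$'s outputs on the proper prefixes of $\tau$ cannot make $U(X)$ infinite on its own; this is precisely where the monotonicity of $U$ enters.
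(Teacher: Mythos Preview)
Your proof is correct and follows essentially the same approach as the paper, which simply says to repeat the argument of Lemma \ref{yH3aERIEO3} with the monotone machine $N$ from Lemma \ref{UhzhB2hWw2} in place of $M$ and $\mathtt{INF}$ in place of $\mathtt{TOT}$. You have in fact supplied more detail than the paper does, in particular the explicit verification of the identity $\mathtt{INF}(U)\cap\dbra{\tau}=\tau\ast\mathtt{INF}(N)$ and the observation that the finitely many proper prefixes of $\tau$ contribute only outputs of bounded length.
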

\begin{proof}
We consider the monotone machine $N$ of 
Lemma \ref{qEkDLqgp25} and a universal monotone machine $U$. Then the proof is
the same as that of Lemma \ref{yH3aERIEO3}, with $M$ replaced by $N$ and 
$\mathtt{TOT}$ replaced by $\mathtt{INF}$.
\end{proof}

\subsection{From random reals to totality and infinitude probabilities}\label{VvLYmV4IPS}
In this section we prove the `if' direction of Theorem \ref{YdyxkMqxRX}
and the `if' direction of the part of 
Theorem \ref{7MGSCRK2Yz} which refers to monotone machines.
In other words, given a 2-random \lcep real $\alpha$ we will build a universal machine $U$
of the desired type such that the measure of $\mathtt{TOT}(U)$ or 
the measure of $\mathtt{INF}(U)$ is equal to $\alpha$.

\begin{lem}[Oracle machines]\label{DwQkO1LCfh}
If $\alpha\in (0,1)$ is a \rcep real and $c\in\Nat$ is such that $\alpha+2^{-c}<1$ 
then there exists a  machine $M$ and a string $\rho$ of length $c$ such that
$M(\sigma,n)$ is undefined for any $n$ and any 
string $\sigma$ which is compatible with $\rho$,
and $\mu(\mathtt{TOT}(M))=\mu(\mathtt{INF}(M))=\alpha$.
\end{lem}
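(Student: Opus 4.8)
The goal is to realize an arbitrary $\emptyset'$-right-c.e.\ real $\alpha\in(0,1)$ (with a little "room" $2^{-c}$ to spare) as the totality/infinitude probability of a machine that, moreover, is completely undefined on the cone $\dbra{\rho}$ above a chosen string $\rho$ of length $c$. The construction will piggyback on Lemma~\ref{w1HmrXQxym}: if I can produce a $\Sigma^0_2$ set $W$ of strings, all of whose members are incompatible with $\rho$, such that the measure of the reals having a prefix in $W$ equals $1-\alpha$, then the machine $M$ obtained from Lemma~\ref{w1HmrXQxym} applied to $W$ satisfies $\mathtt{TOT}(M)=\mathtt{INF}(M)=2^\omega\setminus\dbra{W}$, hence $\mu(\mathtt{TOT}(M))=\mu(\mathtt{INF}(M))=\alpha$, and I must additionally arrange $M$ to be undefined on all $\sigma$ compatible with $\rho$. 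Since in the construction of Lemma~\ref{w1HmrXQxym} the machine is undefined exactly on strings that (eventually) have a prefix in $W$, putting $\rho$ itself into $W$ guarantees undefinedness above $\rho$; undefinedness on proper prefixes of $\rho$ is handled by a trivial modification of that construction (never define $M$ on strings of length $<c$), which does not affect the measure computation.

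The core of the argument is therefore the following claim: \emph{given a $\emptyset'$-right-c.e.\ real $\alpha$ with $\alpha+2^{-c}<1$, and a string $\rho$ of length $c$, there is a $\Sigma^0_2$ prefix-free set $W$ with $\rho\in W$, all elements incompatible with $\rho$ except $\rho$ itself, and $\mu(\dbra{W})=1-\alpha$.} First I would note $1-\alpha$ is a $\emptyset'$-left-c.e.\ real, and $1-\alpha > 2^{-c}$, so $(1-\alpha)-2^{-c}$ is a positive $\emptyset'$-left-c.e.\ real lying in $(0,1)$. Applying Lemma~\ref{PMFeQmge4u} (with $n=2$) to $(1-\alpha)-2^{-c}$ yields a $\Sigma^0_2$ prefix-free set $S$ of strings with $\mu(\dbra{S})=(1-\alpha)-2^{-c}$. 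This $S$ need not avoid the cone $\dbra{\rho}$, but since $\mu(\dbra{S})<1-2^{-c}=\mu(2^\omega\setminus\dbra{\rho})$, I can relocate $S$ into the complement of $\dbra{\rho}$: there is a computable (length-preserving on each string, or at worst a fixed computable reindexing) injection of the strings of $S$ into strings incompatible with $\rho$ that preserves the prefix-free property and the total measure — concretely, prepend to each $\sigma\in S$ a suitable prefix drawn from a fixed computable antichain cofinal in $2^\omega\setminus\dbra{\rho}$, using a Kraft-Chaitin style bookkeeping (relative to $\emptyset'$) so that the images remain prefix-free and incompatible with $\rho$ and the measures add up correctly. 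Finally set $W=S'\cup\{\rho\}$ where $S'$ is the relocated copy; then $W$ is $\Sigma^0_2$, prefix-free, every element other than $\rho$ is incompatible with $\rho$, and $\mu(\dbra{W}) = \big((1-\alpha)-2^{-c}\big) + 2^{-c} = 1-\alpha$, as required.

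The same $W$ works verbatim for the monotone case: feeding the resulting oracle machine $M$ into the construction of Lemma~\ref{qEkDLqgp25} produces a monotone machine $N$ with $\mathtt{INF}(N) = \mathtt{TOT}(M) = 2^\omega\setminus\dbra{W}$, again of measure $\alpha$, and $N(\sigma)$ undefined whenever $M(\sigma,i)$ is undefined for some $i<|\sigma|$, in particular for all $\sigma$ compatible with $\rho$. I would state the monotone version either as part of this lemma or as an immediate corollary. The main obstacle, and the one place requiring care rather than routine bookkeeping, is the relocation step: one must make sure the shifting of $S$ into the complement of $\dbra{\rho}$ can be carried out $\emptyset'$-effectively while simultaneously maintaining prefix-freeness, incompatibility with $\rho$, and exact measure $(1-\alpha)-2^{-c}$; this is exactly where the slack $2^{-c}$ is consumed, and it is the reason the hypothesis $\alpha+2^{-c}<1$ appears in the statement. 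Everything else — the $\Sigma^0_2$ complexity of $W$, the identification $\mathtt{TOT}(M)=\mathtt{INF}(M)=2^\omega\setminus\dbra{W}$, and the measure arithmetic — follows directly from Lemmas~\ref{TLBE5A9gU4}, \ref{PMFeQmge4u}, \ref{w1HmrXQxym}, and \ref{qEkDLqgp25}.
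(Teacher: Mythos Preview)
Your approach is correct and follows the same blueprint as the paper's proof, but you have made the central step harder than it needs to be. The paper avoids your ``relocation step'' entirely: since $1-\alpha-2^{-c}$ is a $\mathbf{0}'$-left-c.e.\ real in $(0,1)$, write it as $\sum_i 2^{-b_i}$ for a $\emptyset'$-computable sequence $(b_i)$ and feed the request sequence $c,\,b_0,\,b_1,\,\ldots$ \emph{directly} into Kraft--Chaitin relative to $\emptyset'$. The algorithm returns a prefix-free sequence $(\sigma_i)$ with $|\sigma_0|=c$; now simply \emph{define} $\rho:=\sigma_0$. Because $\{\sigma_i\}$ is prefix-free, every $\sigma_{i+1}$ is automatically incompatible with $\rho$, so the set $S=\{\sigma_i\mid i\in\Nat\}$ already lives where you want it, has measure exactly $1-\alpha$, and contains $\rho$. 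One then chooses a canonical $\Sigma^0_2$ approximation to $S$ with $\rho$ present at every stage and applies Lemma~\ref{w1HmrXQxym}. The statement of the lemma only asks for \emph{some} $\rho$ of length $c$, not a prescribed one, so letting Kraft--Chaitin pick $\rho$ is legitimate and dissolves what you flagged as ``the main obstacle.'' Your remark that proper prefixes of $\rho$ require a trivial modification (never define $M$ on strings of length $<c$) is correct and applies equally to the paper's version.
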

\begin{proof}
According to the hypothesis, $1-\alpha-2^{-c}$ is a \lcep real in $(0,1)$, 
so there exists a $\emptyset'$-computable 
sequence $(b_i)$
of positive integers such that $1-\alpha-2^{-c}=\sum_i 2^{-b_i}$. 
So by the Kraft-Chaitin theorem relative to $\emptyset'$
there exists
a $\emptyset'$-computable sequence of strings $(\sigma_i)$ such that
$|\sigma_0|=c$, $|\sigma_{i+1}|=b_i$ for each $i$ and 
$S:=\{\sigma_i\ |\ i\in\Nat\}$ is a $\Sigma^0_2$ \pf set of strings.
Define $\rho:=\sigma_0$ and note that $\mu(\dbra{S})=1-\alpha$.
Then we can choose a canonical $\Sigma^0_2$ approximation $(S_i)$
to $S$ such that $\rho\in S_i$ for all $i$.
Then we can
apply the construction of Lemma \ref{w1HmrXQxym} to $S$ with this specific canonical 
$\Sigma^0_2$ approximation $(S_i)$ 
and we obtain a machine $M$ such that
$M(\sigma,n)$ is not defined for any string $\sigma$ compatible (with respect to the prefix relation) 
with $\rho$ and any $n$.
Then we get that
\[
\mu\big(\mathtt{TOT}(M)\big)=\mu\big(\mathtt{INF}(M)\big)=\mu\big(2^{\omega}-\dbra{S}\big)
=1-\mu\big(\dbra{S}\big)=\alpha,
\]
which concludes the proof.
\end{proof}
Lemma \ref{DwQkO1LCfh} has an analogue for monotone machines.

\begin{lem}[Monotone machines]\label{DODxKGG3H6}
If $\alpha\in (0,1)$ is a \rcep real and $c\in\Nat$ is such that $\alpha+2^{-c}<1$ 
then there exists a monotone machine $M$ and a string $\rho$ of length $c$ such that
$M(\sigma)$ is undefined for any  
string $\sigma$ which is compatible with $\rho$,
and $\mu(\mathtt{INF}(M))=\alpha$.
\end{lem}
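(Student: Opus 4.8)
The plan is to mimic the proof of Lemma \ref{DwQkO1LCfh} almost verbatim, but route the construction through Lemma \ref{qEkDLqgp25} instead of Lemma \ref{w1HmrXQxym}. Since $\alpha\in(0,1)$ is \rcep and $\alpha+2^{-c}<1$, the real $1-\alpha-2^{-c}$ is a \lcep real in $(0,1)$, so by the Kraft--Chaitin theorem relative to $\emptyset'$ I obtain a $\emptyset'$-computable sequence of strings $(\sigma_i)$ with $|\sigma_0|=c$ and $S:=\{\sigma_i\ |\ i\in\Nat\}$ a $\Sigma^0_2$ prefix-free set with $\mu(\dbra{S})=1-\alpha$. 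Set $\rho:=\sigma_0$. As in Lemma \ref{DwQkO1LCfh}, I choose a canonical $\Sigma^0_2$ approximation $(S_i)$ to $S$ with $\rho\in S_i$ for all $i$, so that the oracle machine produced by the construction of Lemma \ref{w1HmrXQxym} from $S$ is undefined on every $\sigma$ compatible with $\rho$.

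Next I apply the passage from that oracle machine $M'$ to a monotone machine $N$ exactly as in the proof of Lemma \ref{qEkDLqgp25}: let $N(\sigma)=0^{|\sigma|}$ if $M'(\sigma,i)\!\de$ for all $i<|\sigma|$, and undefined otherwise. This $N$ is monotone, and because the use function of $M'$ is the identity, $N(X)$ is infinite if and only if $M'(X)$ is total, i.e.\ if and only if $X$ has no prefix in $S$; also $N(\sigma)$ is undefined for every $\sigma$ compatible with $\rho$, since $M'$ is. Renaming $N$ as $M$, I get $\mu(\mathtt{INF}(M))=\mu(2^\omega-\dbra{S})=1-\mu(\dbra{S})=\alpha$, and a string $\rho$ of length $c$ on which $M$ is undefined on all compatible strings, which is exactly the statement. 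A short one-paragraph proof of the form ``Apply the construction of Lemma \ref{qEkDLqgp25} to the set $S$ built as in the proof of Lemma \ref{DwQkO1LCfh}\ldots'' will suffice.

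There is essentially no real obstacle here; the lemma is a bookkeeping variant of Lemma \ref{DwQkO1LCfh}. The only point requiring a moment's care is confirming that the monotonicity reduction of Lemma \ref{qEkDLqgp25} preserves the ``undefined on everything compatible with $\rho$'' clause — but this is immediate since $N(\sigma)$ is defined only when $M'(\sigma,i)$ is, and $M'$ is already undefined on all such $\sigma$. The other routine check is that $(S_i)$ can indeed be taken to enumerate $\rho$ at every stage, which holds because $\rho\in S$ and a canonical $\Sigma^0_2$ approximation only needs to stabilize on each element of $S$; forcing $\rho$ in from the start costs nothing.

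\begin{proof}
Argue exactly as in the proof of Lemma \ref{DwQkO1LCfh}. Since $\alpha+2^{-c}<1$, the real $1-\alpha-2^{-c}$ is a \lcep real in $(0,1)$, so by the Kraft--Chaitin theorem relative to $\emptyset'$ there is a $\emptyset'$-computable sequence of strings $(\sigma_i)$ with $|\sigma_0|=c$, $|\sigma_{i+1}|=b_i$, and $S:=\{\sigma_i\ |\ i\in\Nat\}$ a $\Sigma^0_2$ \pf set of strings with $\mu(\dbra{S})=1-\alpha$. Let $\rho:=\sigma_0$, and choose a canonical $\Sigma^0_2$ approximation $(S_i)$ to $S$ with $\rho\in S_i$ for all $i$. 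Apply the construction of Lemma \ref{w1HmrXQxym} to $S$ with this approximation to obtain an oracle machine $M'$ which is undefined on $(\sigma,n)$ for every $\sigma$ compatible with $\rho$ and every $n$, and such that $M'(X)$ is total if and only if $X$ has no prefix in $S$. Now apply the reduction of Lemma \ref{qEkDLqgp25}: for each string $\sigma$ let $M(\sigma)$ be $0^{|\sigma|}$ if $M'(\sigma,i)\!\de$ for all $i<|\sigma|$, and undefined otherwise. Then $M$ is a monotone machine, $M(\sigma)$ is undefined for every $\sigma$ compatible with $\rho$ since $M'$ is, and since the oracle use of $M'$ is the identity we have that $M(X)$ is infinite if and only if $M'(X)$ is total, i.e.\ if and only if $X\notin\dbra{S}$. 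Hence
\[
\mu\big(\mathtt{INF}(M)\big)=\mu\big(2^{\omega}-\dbra{S}\big)=1-\mu\big(\dbra{S}\big)=\alpha,
\]
which concludes the proof.
\end{proof}
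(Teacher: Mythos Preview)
Your proof is correct and follows essentially the same approach as the paper: build the $\Sigma^0_2$ prefix-free set $S$ and the string $\rho$ exactly as in Lemma~\ref{DwQkO1LCfh}, and then apply the monotone-machine construction of Lemma~\ref{qEkDLqgp25} in place of Lemma~\ref{w1HmrXQxym}. In fact you cite the right lemma here --- the paper's own proof refers to Lemma~\ref{UhzhB2hWw2}, which appears to be a slip for Lemma~\ref{qEkDLqgp25}.
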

\begin{proof}
The proof is the same as the proof of Lemma \ref{DwQkO1LCfh} only that
instead of obtaining the machine $M$ from the construction of  Lemma \ref{w1HmrXQxym}
we use Lemma \ref{UhzhB2hWw2}, which gives a monotone machine.
The rest of the parameters, including the $\Sigma^0_2$ set $S$ and its canonical 
$\Sigma^0_2$  approximation $(S_i)$, remain the same and as before
\[
\mu\big(\mathtt{INF}(M)\big)=\mu\big(2^{\omega}-\dbra{S}\big)
=1-\mu\big(\dbra{S}\big)=\alpha,
\]
where $M$ is now the monotone machine from
Lemma \ref{UhzhB2hWw2} based on $(S_i)$. This concludes the proof.
\end{proof}

We are ready to prove the remaining clauses of Theorems \ref{YdyxkMqxRX} and \ref{7MGSCRK2Yz}
which refer to oracle machines and monotone machines, as well as the statement immediately
below Theorems \ref{YdyxkMqxRX} which refers to the infinity probability of a universal 
oracle  machine.

\begin{lem}\label{aw4QkldwOa}
Let $\alpha\in (0,1)$ be a 2-random \rcep real. Then there exists
a universal oracle  machine $M$ such that 
$\mu(\mathtt{TOT}(M))=\alpha$.
Similarly, there exists
a universal oracle  machine $U$ such that 
$\mu(\mathtt{INF}(U))=\alpha$.
Finally there exists a monotone machine $T$ such that 
$\mu(\mathtt{INF}(T))=\alpha$.
\end{lem}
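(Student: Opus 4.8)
The plan is to combine the auxiliary machines produced in Lemmas~\ref{DwQkO1LCfh} and~\ref{DODxKGG3H6} with a fixed universal machine, in such a way that the reserved ``hole'' above the string $\rho$ is filled by a copy of a universal machine of the appropriate type, and the resulting totality (respectively infinitude) probability comes out to be exactly $\alpha$. First I would invoke property~\eqref{5zlG2HCdb}: since $\alpha$ is a 2-random \rcep real, there is some $c_0$ such that $\alpha - 2^{-c}\cdot\beta$ is a \rcep real for every $c>c_0$, where $\beta$ will be chosen to be the totality (or infinitude) probability of a fixed universal oracle machine $V$ (which, by Lemma~\ref{yH3aERIEO3}, is itself a 2-random \rcep real, but here we only need that it lies in $[0,1]$). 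Pick $c>c_0$ large enough that also $(\alpha - 2^{-c}\beta) + 2^{-c} < 1$, and set $\alpha' := \alpha - 2^{-c}\cdot\mu(\mathtt{TOT}(V)) \in (0,1)$, a \rcep real.

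Next I would apply Lemma~\ref{DwQkO1LCfh} to $\alpha'$ and this $c$, obtaining a machine $M_0$ together with a string $\rho$ of length $c$ such that $M_0(\sigma,n)$ is undefined whenever $\sigma$ is compatible with $\rho$, and $\mu(\mathtt{TOT}(M_0)) = \mu(\mathtt{INF}(M_0)) = \alpha'$. Now define the universal machine $M$ by letting $M$ agree with $M_0$ off the cone $\dbra{\rho}$, and on inputs of the form $\rho\ast X$ letting $M(\rho\ast X, n) \simeq V(X,n)$, where $V$ is chosen to be universal in the sense of the Definition of universal oracle machines (this also makes $M$ universal, since $\rho$ can be prepended to the coding prefix of $V$, and the resulting map $e\mapsto\rho\ast\sigma^V_e$ is still computable with prefix-free range once $\rho$ is incompatible with all the $\sigma_i$ used in $M_0$ — which it is, by the construction of Lemma~\ref{DwQkO1LCfh}, since there $\rho=\sigma_0$ and $S$ is prefix-free, hence $\rho$ is incompatible with every $\sigma\in S$ and with the cone those strings generate). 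Since the cone $\dbra{\rho}$ contributes measure $2^{-|\rho|}=2^{-c}$ and within it the relative measure of the totality set is $\mu(\mathtt{TOT}(V))$, we get
\[
\mu(\mathtt{TOT}(M)) = \mu(\mathtt{TOT}(M_0)) + 2^{-c}\cdot\mu(\mathtt{TOT}(V)) = \alpha' + 2^{-c}\cdot\mu(\mathtt{TOT}(V)) = \alpha.
\]
The same computation with $\mathtt{INF}$ in place of $\mathtt{TOT}$ handles the second clause (choosing $V$ universal and using $\mu(\mathtt{INF}(V))$), and the monotone case is obtained identically from Lemma~\ref{DODxKGG3H6} together with a fixed universal monotone machine, observing again that prepending $\rho$ to the coding prefixes preserves universality and that on $\dbra{\rho}$ the machine simulates the universal monotone machine so $\mathtt{INF}$ has the right relative measure.

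The main obstacle, and the point that needs the most care, is the interaction between \emph{universality} and the \emph{exact value} of the probability. One cannot simply take an arbitrary universal machine and hope its totality probability equals $\alpha$; the trick is that the ``hole'' reserved by Lemma~\ref{DwQkO1LCfh} is precisely where we graft the universal machine, and the weight $2^{-c}$ of that hole must be subtracted off in advance — which is exactly what property~\eqref{5zlG2HCdb} licenses us to do while keeping the remaining target $\alpha'$ a \rcep real (this is where 2-randomness of $\alpha$ is used, and it is essential). A secondary point to verify is that $\rho$ being incompatible with the prefix-free set $S$ of Lemma~\ref{DwQkO1LCfh} means the grafted simulation on $\dbra{\rho}$ does not disturb the behaviour of $M_0$ computed from $S$ outside that cone, so that $\mathtt{TOT}(M)$ genuinely decomposes as the disjoint union $\mathtt{TOT}(M_0)\cup(\rho\ast\mathtt{TOT}(V))$ up to a null set; this is immediate from the constructions but should be stated.
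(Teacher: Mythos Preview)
Your proposal is correct and follows essentially the same approach as the paper: fix a universal machine $V$, use \eqref{5zlG2HCdb} to subtract off $2^{-c}\cdot\mu(\mathtt{TOT}(V))$ while keeping the remainder \rcep, apply Lemma~\ref{DwQkO1LCfh} (resp.\ Lemma~\ref{DODxKGG3H6}) to that remainder to obtain an auxiliary machine with a reserved hole above $\rho$, and graft $V$ into the hole. The only cosmetic differences are variable names and your slightly tighter choice of the inequality $(\alpha-2^{-c}\beta)+2^{-c}<1$ in place of the paper's $\alpha+2^{-c}<1$; also, the decomposition $\mathtt{TOT}(M)=\mathtt{TOT}(M_0)\cup(\rho\ast\mathtt{TOT}(V))$ is an exact disjoint union, not merely up to a null set.
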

\begin{proof}
Let $V$ be a universal oracle machine and let $\gamma$ be the measure of the reals
$X$ such that
$V(X)$ is total. By Lemma \ref{TLBE5A9gU4} the real  $\gamma$ is a \rcep real.
By \eqref{5zlG2HCdb} and the discussion of Section \ref{lkNQieYqH} there exists $c\in\Nat$
such that $\alpha+2^{-c}<1$ and
the real $\beta:=\alpha - 2^{-c}\gamma$ is a \rcep real.
By Lemma \ref{DwQkO1LCfh} consider an oracle machine $N$ and a string $\rho$
of length $c$ such that 
the measure of $\mathtt{TOT}(N)$ is $\beta$ and $N(\sigma,n)$ is not defined for any
$\sigma$ which is compatible with $\rho$ and any $n$. Define an oracle machine $M$ as follows.
For each string $\sigma$ which is incompatible with $\rho$ and any $n$ let 
$M(\sigma,n)\simeq N(\sigma,n)$.
Moreover for each $\tau$ and any $n$ let $M(\rho\ast\tau,n)\simeq V(\tau,n)$. Since $V$ is
universal, it follows that $M$ is also a universal oracle machine. Moreover
\[
\mathtt{TOT}(M)=\rho\ast\mathtt{TOT}(V)\cup \mathtt{TOT}(N)
\hspace{0.5cm}\textrm{and}\hspace{0.5cm}
\rho\ast\mathtt{TOT}(V)\cap \mathtt{TOT}(N)=\emptyset
\]
and so
\[
\mu(\mathtt{TOT}(M))=2^{-|\rho|}\cdot\mu(\mathtt{TOT}(V)) + \mu(\mathtt{TOT}(N))
=2^{-c}\cdot \gamma+\beta =\alpha,
\]
which concludes the proof of the first clause. The argument for the second clause is entirely similar.
We let $\delta$ be the measure of the reals $X$ such that the domain of $V(X)$ is infinite
and choose $c$ such that
such that $\alpha+2^{-c}<1$ and
the real $\zeta:=\alpha - 2^{-c}\delta$ is a \rcep real. 
Then by Lemma \ref{DwQkO1LCfh} we can choose an oracle 
machine $F$ and a string $\eta$
of length $c$ such that 
the measure of $\mathtt{INF}(F)$ is $\beta$ and $F(\sigma,n)$ is not defined for any
$\sigma$ which is compatible with $\eta$ and any $n$. Then we define an
oracle  machine $U$ as follows.
For each string $\sigma$ which is incompatible with $\rho$ and any $n$ let 
$U(\sigma,n)\simeq F(\sigma,n)$.
Moreover for each $\tau$ and any $n$ let $U(\eta\ast\tau,n)\simeq V(\tau,n)$. Since $V$ is
universal, it follows that $U$ is also a universal oracle machine. Moreover
as before
\[
\mu(\mathtt{INF}(U))=2^{-|\eta|}\cdot\mu(\mathtt{INF}(V)) + \mu(\mathtt{INF}(F))
=2^{-c}\cdot \delta+\zeta =\alpha,
\]
which concludes the proof of the second clause of the lemma.
The proof of the third clause is entire analogous with the previous two, with the only difference that
we use Lemma \ref{DODxKGG3H6} instead of Lemma \ref{DwQkO1LCfh}.
\end{proof}

\subsection{Randomness of cofiniteness and computability probabilities}
In this section we prove the `only if' direction of Theorem \ref{Sgl7O4jKzk}. The methodology is
similar to the one we employed in Section \ref{6iJldN1dR6}, but the arguments will be more
involved since we are dealing with $\Sigma^0_3$ properties, namely cofiniteness and computability.
The first step is coding an arbitrary $\Sigma^0_3$ class into $\cofin{M}$ and $\comp{M}$ 
for some oracle machine $M$.
The proof of the following lemma draws considerably from the proof of the well-known fact that
the index set of the cofinite \ce sets
and the index set of the computable \ce sets are both $\Sigma^0_3$-complete.

\begin{lem}[from a $\Sigma^0_3$ set to an oracle machine for cofiniteness]\label{nl9d72crTM}
Given any upward closed 
$\Sigma^0_3$ set of stings $J$ there exists an oracle  machine $M$ such that the following
are equivalent for each $X$:
\begin{itemize}
\item $X$ has a prefix in $J$;
\item the domain of $M(X)$ is cofinite; and
\item the domain of $M(X)$ is computable.
\end{itemize}
Moreover, the domain of $M(X)$ is equal to the range of $M(X)$.
\end{lem}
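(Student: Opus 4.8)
The plan is to mimic the classical proof that the index set $\{e : W_e \text{ cofinite}\}$ is $\Sigma^0_3$-complete, but to carry it out "uniformly in an oracle $X$" so that the membership of $X$ in $\dbra{J}$ is encoded into a single oracle machine $M$. Recall that a set $A$ is $\Sigma^0_3$ iff $n \in A \iff \exists i\, \forall j\, \exists k\, R(n,i,j,k)$ for computable $R$; equivalently, $n \in A$ iff $\exists i$ such that the $\Pi^0_2$ predicate $P_i(n) \equiv \forall j\,\exists k\, R(n,i,j,k)$ holds. Since $J$ is an upward closed $\Sigma^0_3$ set of strings, I would first fix such a representation: a string $\sigma \in J$ iff $\exists i\, \forall j\, \exists k\, R(\sigma, i, j, k)$. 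Because $J$ is upward closed, I may further arrange (by replacing $R(\sigma,i,j,k)$ with $\bigvee_{\tau \preceq \sigma} R(\tau,i,j,k)$, or simply noting the witnesses propagate) that if $\sigma \in J$ is witnessed by some $i$, then so is every extension of $\sigma$, with the same $i$ and a "nondecreasing" supply of $k$-witnesses along any chain. The classical trick is: for the $i$-th "attempt", build the domain of a function (indexed by $i$) so that it looks cofinite as long as the universal-$j$ clause keeps getting confirmed, and destroy cofiniteness — in fact make the domain non-computable — as soon as some $j$ fails forever. Then take a "disjoint union over $i$" of these attempts so the whole domain is cofinite iff some attempt succeeds.

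Concretely: I would split $\Nat$ (the argument space of $M(X,\cdot)$) into computable columns $C_i = \{\langle i, m\rangle : m \in \Nat\}$. For a fixed oracle $X$ and fixed $i$, I run a stagewise approximation watching the $\Pi^0_2$ predicate "$\forall j\,\exists k\, R(X{\restriction}_{\ell}, i, j, k)$ for some prefix $X{\restriction}_\ell$ of $X$". The key computable device (as in the standard $\Sigma^0_3$-completeness argument) is a movable-marker or "expansionary stage" construction on column $C_i$: maintain for each $j$ a marker; at a stage where we see a fresh $k$ confirming the $j$-th conjunct (for the current prefix of $X$), we enumerate into the domain all of $C_i$ below the current markers and advance markers for larger $j'$. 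If some $j$ is never again confirmed (the $\Pi^0_2$ predicate fails, so $X \notin \dbra{J}$ via this $i$), the marker for that $j$ freezes, and from that point on we permanently keep an infinite computable subset of $C_i$ out of the domain and also "interleave" enumerated and non-enumerated elements of $C_i$ above the frozen marker so that the domain restricted to $C_i$ is not computable. Conversely if every $j$ is eventually permanently confirmed ($X \in \dbra{J}$ via $i$), then for each $j$ only finitely much of $C_i$ stays out, and since the markers march off to infinity, $\operatorname{dom}(M(X)) \cap C_i$ is cofinite in $C_i$ and hence computable (in fact equal to a cofinite set whose finitely many gaps are decidable once $X{\restriction}_\ell$ is long enough). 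Summing over columns: if $X \in \dbra{J}$, pick a witnessing $i$; column $C_i$ is cofinite in $C_i$, so to make the whole domain cofinite I should additionally arrange that, once a column $C_i$ is "seen to succeed so far", the columns $C_{i'}$ for $i' < i$ get entirely filled in — again a standard feature of these constructions. This guarantees $\operatorname{dom}(M(X))$ cofinite $\iff$ $\operatorname{dom}(M(X))$ computable $\iff$ $X \in \dbra{J}$, and the three bullets coincide.

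For the "moreover" clause — that $\operatorname{dom}(M(X)) = \operatorname{range}(M(X))$ — I would simply define the outputs so that whenever $M(X,n)$ halts, it outputs $n$ itself (or, if I need the outputs to be previously-seen domain elements for a monotonicity-type reason, output the least domain element enumerated so far, padding so every enumerated $n$ is also output). Since every element that ever enters the domain is then also an output, and outputs are only produced on halting inputs, the range equals the domain. One must check this does not conflict with $M$ being a legitimate oracle machine with a well-defined use; here there is no monotonicity constraint (these are oracle Turing machines computing partial functions, case (1)), so we are free, and I would fix the oracle use on argument $\langle i,m\rangle$ to be the length of the prefix $X{\restriction}_\ell$ consulted, which is determined by the stage at which $\langle i,m\rangle$ is enumerated.

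The main obstacle I anticipate is not any single ingredient but the bookkeeping that forces all three equivalences simultaneously and uniformly in $X$: the standard $\Sigma^0_3$-completeness proof gives cofiniteness directly, and computability is usually obtained as a corollary because the witnessing sets are built to be cofinite-or-cohypersimple-like; here I must make the "failure" branch produce a domain that is provably \emph{non-computable} (not merely coinfinite), which is exactly the extra twist in the proof that the computable index set is also $\Sigma^0_3$-hard — so I would interleave, on the failed column, a copy of the halting problem (or a simple permitting argument) above the frozen marker. Threading that through while keeping everything a single oracle machine with a clean use function, and while keeping $\operatorname{dom} = \operatorname{range}$, is the delicate part; the rest is a routine adaptation of well-known arguments.
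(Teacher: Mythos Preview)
Your plan is on the right track---it is the classical $\Sigma^0_3$-completeness argument for the cofinite and computable index sets, adapted to an oracle setting---but there is one genuine gap and one point that is too vague to work as stated.

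The gap is in the cofiniteness direction. You index columns $C_i$ by the existential witness $i$, arrange monotonicity only in $\sigma$, and propose that ``once a column $C_i$ is seen to succeed so far, the columns $C_{i'}$ for $i'<i$ get entirely filled in.'' But this does nothing for columns $C_{i'}$ with $i'>i_0$: if $X\in\dbra{J}$ via witness $i_0$ but via no larger $i$, each such column fails and retains a permanent gap, so the domain is coinfinite. The standard remedy (and the one the paper uses) is to also force monotonicity in the existential witness---replace the $\Pi^0_2$ matrix by $\exists i'\le i\,\forall j\,\exists k\,R(\sigma,i',j,k)$---so that once $i_0$ works, every $i\ge i_0$ does, and all but finitely many columns succeed. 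With that fix your scheme goes through, and the ``fill in $i'<i$'' step becomes unnecessary.

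The second issue is the non-computability coding. You cannot computably recognize that a column has failed, so ``interleave a copy of the halting problem above the frozen marker'' has to be something built into every column from the start and simply washed out when the column succeeds. The paper's device is cleaner than your per-$j$ markers: it indexes columns by \emph{strings} $\sigma$ (taking the existential witness to be $|\sigma|$, after arranging the monotonicity above), places a \emph{single} marker $m(\sigma)$ in each column, and moves that marker whenever $W_{g(|\sigma|,\sigma)}$ grows \emph{or} whenever $|\sigma|$ enters $\emptyset'$. If the marker diverges the column is entirely filled; if it converges, its final position bounds the settling time of $\emptyset'(|\sigma|)$, so the sequence of gaps (one per prefix of $X$) computes $\emptyset'$. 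This single mechanism handles both the cofinite/coinfinite dichotomy and the $\emptyset'$-coding simultaneously. Your suggestion to set $M(X,n)=n$ on convergence, so that domain equals range, is exactly what the paper does.
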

\begin{proof}
Let $J$ be an upward closed $\Sigma^0_3$ set of strings. 
Then there exists a computable predicate $H_0$ such that 
$\sigma\in J$ is equivalent to $\exists t\forall n\exists s\ H_0(t,n,s,\sigma)$.
Consider the relation $\exists i\leq t\ \forall n\exists s\ H_0(i,n,s,\sigma)$. This is a $\Pi^0_2$ relation since
bounded quantifiers do not increase arithmetical complexity. So there exists a computable predicate
$H_1$ such that $\exists i\leq t\ \forall n\exists s\ H_0(t,n,s,\sigma)$ is equivalent to
$\forall n\exists s\ H_1(t,n,s,\sigma)$.
Moreover for each $q<p$, if $\forall n\exists s\ H_1(q,n,s,\sigma)$ then 
$\forall n\exists s\ H_1(p,n,s,\sigma)$.
We claim that there exists a computable predicate $H$ such that for each $\sigma,t$,
\[
\forall n\ \exists s\ H(t,n,s,\sigma)\iff 
\bigvee_{i=0}^{|\sigma|} \forall n \exists s\ H_1(t,n,s,\sigma\restr_i).
\]
This holds because arithmetical complexity above the first level of the arithmetical hierarchy is invariant under bounded disjunctions.
By the choice of $H_0,H_1$ and the fact that $J$ is upward closed,  
the predicate $H$ has the following properties:
\begin{enumerate}[\hspace{0.5cm}(a)]
\item  for each $q<p$ and each $\sigma$, if $\forall n\exists s\ H(q,n,s,\sigma)$ then 
$\forall n\exists s\ H(p,n,s,\sigma)$;
\item for each $q$ and each $\sigma\prec\tau$, if 
$\forall n\exists s\ H(q,n,s,\sigma)$ then
$\forall n\exists s\ H(q,n,s,\tau)$;
\item $\sigma\in J$ if and only if 
$\exists t\ \forall n\ \exists s\ H(t,n,s,\sigma)$.
\end{enumerate}

Let $(W_e)$ be a universal enumeration of all \ce sets. Then there exist
a computable function $g$ such that 
\begin{enumerate}[\hspace{0.5cm}(1)]
\item $\forall n\exists s\ H(t,n,s,\sigma)\iff |W_{g(t,\sigma)}|=\infty$; and
\item if $\sigma\preceq\tau$, $t\leq k$, and $|W_{g(t,\sigma)}|=\infty$, then
$|W_{g(k,\tau)}|=\infty$ 
\end{enumerate}
for all $t,k,\sigma,\tau$.
The second clause above says that 
the double sequence $(W_{g(t,\sigma)})$ is monotone, in the sense that
once a term is infinite, all later terms will also be infinite.
Let $(\sigma,n)\mapsto\tuple{\sigma,n}$ be a computable bijection 
from ordered pairs of strings and numbers $(\sigma,n)$ onto $\Nat$.
Moreover define $\Nat^{[\sigma]}=\{\tuple{\sigma,n}\ |\ n\in\Nat\}$ and
let $(\emptyset'_s)$ be a
computable enumeration  of the halting
problem $\emptyset'$.

For each binary string $\sigma$, we define a movable marker $m(\sigma)[s]$
dynamically to take values in $\Nat^{[\sigma]}$.
In the case where $m(\sigma)[s]$ reaches a limit as $s\to\infty$, that limit is denoted by
$m(\sigma)$.

At stage $0$ we let $m(\sigma)[0]=\tuple{\sigma,0}$ for all $\sigma$. At stage $s+1$,
\begin{itemize}
\item if  $\big|W_{g(|\sigma|,\sigma)}[s+1]\big|> \big|W_{g(|\sigma|,\sigma)}[s]\big|$ then let 
$m(\sigma)[s+1]=\tuple{\sigma,s+1}$;
\item if $|\sigma|\in\emptyset'_{s+1}-\emptyset'_{s}$, let $m(\sigma)[s+1]=\tuple{\sigma,s+1}$;
\item otherwise,  let $m(\sigma)[s+1]=m(\sigma)[s]$.
\end{itemize}
Note that for each $\sigma$, the marker $m(\sigma)[s]$ reaches a limit if and only if 
$W_{g(|\sigma|,\sigma)}$ is infinite.
We are ready to define the oracle  machine $M$ by induction on the stages $s$,
based on the function $g$.

Given any stage $s+1$, any $\sigma$ of length at most $s+1$ and any 
$n\in\Nat^{[\sigma]}$ such that $n<s+1$ we define
$M(\sigma, n)[s+1]=n$ for all $n<s+1$ such that $n\neq m(\sigma)[s+1]$, and
leave $M(\sigma, m(\sigma))[s+1]$ undefined.

By the definition of $M$ we have:
\begin{itemize}%[\hspace{0.5cm}(i)]
\item if $m(\sigma)$ reaches a limit then $M(\sigma, m(\sigma))$ is undefined and 
$M(\sigma, i)\de$ for all $i\in \Nat^{[\sigma]}-\{m(\sigma)\}$;
\item if $m(\sigma)\to\infty$ then 
$M(\sigma, i)\de$ for all $i\in \Nat^{[\sigma]}$.
\end{itemize}
Let $X$ be a real. If $X$ has a prefix in $J$ then
by the properties of $H$ and $g$, the marker $m(X\restr_n)$ diverges for any sufficiently large $n$.
If, on the other hand, $X$ does not have a prefix in $J$, then for the same reasons
 $m(X\restr_n)$ reaches a limit for all $n$. Hence by the construction of $M$ and its properties,
 as discussed above, we have
\begin{itemize}%[\hspace{0.5cm}(I)]
\item if $X$ has a prefix in $J$, then the domain of $M(X)$ is cofinite;
\item otherwise, the domain of $M(X)$ is coinfinite.
\end{itemize}
Moreover, by construction, the domain of $M(X)$ is equal to its range.
In addition, for each real $X$, if
the domain of $M(X)$ is not cofinite, 
then $m(X\restr_n)$ reaches a limit for each $n$, so the domain of $M(X)$ 
computes the halting problem $\emptyset'$.
This is because the final position of $m(X\restr_n)$ corresponds to the 
$n$th zero in the characteristic
sequence of the domain of $M(X)$. Hence
the settling time of $\emptyset'(n)$ is bounded above by 
the settling time of  $m(X\restr_n)$, which can
be calculated by the domain of $M(X)$. 
Hence if $X$ does not have a prefix in $J$, then 
the domain of $M(X)$ computes the halting problem, so it is not computable.
If, on the other hand, 
$X$ does have a prefix in $J$, then the domain of $M(X)$ is cofinite, hence computable.
\end{proof}

The above result allows us to construct a machine whose cofiniteness and computability probability
are both 3-random. This will later be used in order to show this property for any universal machine.

\begin{lem}[Machines with 3-random cofiniteness and computability probabilities]\label{2k8iykfZJP}
There exists an oracle machine $M$ 
such that $\cofin{M}$ and $\comp{M}$ are equal and have measure a
3-random \lcepp real.
\end{lem}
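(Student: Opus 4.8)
The plan is to mimic the proof of Lemma \ref{UhzhB2hWw2}, but one level higher in the arithmetical hierarchy. Recall that in that earlier argument we took $U$ to be a member of a universal \ml test relative to $\emptyset'$, so that $\dbra{U}$ is a $\Sigma^0_2$ class whose measure is a $2$-random \rcep real, and we fed $U$ into Lemma \ref{w1HmrXQxym} to get an oracle machine $M$ with $\mathtt{TOT}(M)$ equal to the complement of $\dbra{U}$. Here, I would instead take $J$ to be (the upward closure of a set of strings representing) a member of a universal \ml test relative to $\emptyset^{(2)}$. By Section \ref{pm7oworwdR} and \ref{lkNQieYqH}, such a class can be taken to be $\Sigma^0_3$, and the measure of any member of a universal \ml test relative to $\emptyset^{(2)}$ is a $3$-random \lcepp real.

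First I would fix a universal \ml test $(R_i)$ relative to $\emptyset^{(2)}$ and set $R:=R_1$ (or any fixed member), so that $\mu(R)$ is a $3$-random \lcepp real. Using Lemma \ref{PMFeQmge4u} relative to $\emptyset^{(2)}$ — or more simply the observation in Section \ref{pm7oworwdR} that $\Sigma^0_1(\emptyset^{(2)})$ classes are represented by $\Sigma^0_3$ sets of strings — I would obtain a $\Sigma^0_3$ set of strings $S$ with $\dbra{S}=R$, and then pass to its upward closure $J$, which is still $\Sigma^0_3$ and satisfies $\dbra{J}=\dbra{S}=R$ (upward closure does not change the induced class of reals). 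Next I would apply Lemma \ref{nl9d72crTM} to this $J$, obtaining an oracle machine $M$ for which, for every $X$, the three conditions ``$X$ has a prefix in $J$'', ``the domain of $M(X)$ is cofinite'', and ``the domain of $M(X)$ is computable'' are all equivalent, and moreover the domain of $M(X)$ equals its range. In particular $\cofin{M}$ and $\comp{M}$ are literally the same class, namely $\dbra{J}=R$.

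It then remains to identify the measure of this common class. Since $\cofin{M}=\comp{M}=\dbra{J}=R$ and $R$ is a member of a universal \ml test relative to $\emptyset^{(2)}$, its measure is a $3$-random real by the facts recalled in Section \ref{lkNQieYqH}. Independently, $\cofin{M}$ is a $\Sigma^0_3$ class (it is coded by the $\Sigma^0_3$ set $J$, via the equivalence in Lemma \ref{nl9d72crTM}), so by Lemma \ref{TLBE5A9gU4} its measure is a \lcepp real. Hence $\mu(\cofin{M})=\mu(\comp{M})$ is a $3$-random \lcepp real, which is exactly the assertion of the lemma.

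The main obstacle is bookkeeping the complexity bounds rather than any genuine difficulty: one must be sure that the class $\cofin{M}$ produced by Lemma \ref{nl9d72crTM} is genuinely $\Sigma^0_3$ (so that Lemma \ref{TLBE5A9gU4} applies and gives the \lcepp side), and that it coincides exactly with $\dbra{J}$ rather than merely being measure-equivalent to it (so that the $3$-randomness transfers). Both points are immediate from the statement of Lemma \ref{nl9d72crTM}, since that lemma gives a pointwise equivalence ``$X\in\cofin{M}\iff X$ has a prefix in $J$'' for every $X$, i.e.\ $\cofin{M}=\dbra{J}$ on the nose, and $J$ is a $\Sigma^0_3$ set of strings. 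Everything else — choosing the universal test relative to $\emptyset^{(2)}$, taking upward closures, and invoking the two measure lemmas — is routine, exactly parallel to the $\Sigma^0_2$ case treated in Lemma \ref{UhzhB2hWw2}.
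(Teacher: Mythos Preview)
Your proposal is correct and follows essentially the same approach as the paper: take a member of a universal \ml test relative to $\emptyset^{(2)}$, represent it by an upward closed $\Sigma^0_3$ set of strings $J$, apply Lemma \ref{nl9d72crTM} to obtain $M$ with $\cofin{M}=\comp{M}=\dbra{J}$, and conclude using Lemma \ref{TLBE5A9gU4} and the fact that members of a universal test relative to $\emptyset^{(2)}$ have 3-random measure. Your version is in fact a bit more explicit than the paper's about the passage from the $\Sigma^0_1(\emptyset^{(2)})$ class to an upward closed $\Sigma^0_3$ set of strings (note that Lemma \ref{PMFeQmge4u} is not quite the right reference here, since it only matches measures rather than classes, but your ``more simply'' alternative is the correct justification).
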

\begin{proof}
Let $U$ be a member of a universal \ml test relative to $\mathbf{0}^{(2)}$ 
and consider the machine $M$
of Lemma \ref{nl9d72crTM} with respect to $U$.
Then $\cofin{M}=\comp{M}$ and since this is a $\Sigma^0_3$ class,
by Lemma \ref{TLBE5A9gU4} its measure is a \lcepp real.
Moreover, by Lemma \ref{nl9d72crTM} 
the measure of $\cofin{M}$ is equal to the measure of $U$, which is 3-random,
as $U$ is a member of a universal \ml test relative to $\mathbf{0}^{(2)}$.
\end{proof}
Finally we are ready to prove the `only if' direction of Theorem \ref{Sgl7O4jKzk}.

\begin{lem}[Randomness of cofiniteness and computability probabilities for universal machines]
If $U$ is a universal oracle machine then the measures of $\mathtt{COF}(U)$ and $\mathtt{COM}(U)$ 
are both 3-random $\mathbf{0}^{(2)}$-left-c.e.\ reals.
\end{lem}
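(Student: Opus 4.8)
The plan is to mirror the proof of Lemma~\ref{yH3aERIEO3} exactly, replacing the $\Pi^0_2$ totality class with the $\Sigma^0_3$ classes $\cofin{U}$ and $\comp{U}$, and replacing the additive splitting over a prefix with the same kind of splitting, but now using the fact that $\mathbf{0}^{(2)}$-left-c.e.\ $3$-random reals are closed under addition of a further $\mathbf{0}^{(2)}$-left-c.e.\ real (the relativized Demuth result recorded in Section~\ref{lkNQieYqH}). First I would fix the oracle machine $M$ produced by Lemma~\ref{2k8iykfZJP}, so that $\cofin{M}=\comp{M}$ and $\mu(\cofin{M})$ is a $3$-random $\mathbf{0}^{(2)}$-left-c.e.\ real. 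Since $U$ is universal, there is a string $\tau$ (namely $\tau=\sigma_e$ for the index $e$ of $M$ in the effective list) such that $M(\sigma,n)\simeq U(\tau\ast\sigma,n)$ for all $\sigma,n$. Then $\cofin{U}$ decomposes as the disjoint union $\tau\ast\cofin{M}\ \cup\ \big(\cofin{U}\cap(2^{\omega}-\dbra{\tau})\big)$, and likewise for $\comp{U}$.

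Next I would observe that $P:=\cofin{U}\cap(2^{\omega}-\dbra{\tau})$ is again a $\Sigma^0_3$ class (intersection of a $\Sigma^0_3$ class with a clopen set), so by Lemma~\ref{TLBE5A9gU4} its measure $\mu(P)$ is a $\mathbf{0}^{(2)}$-left-c.e.\ real. Hence
\[
\mu(\cofin{U})=2^{-|\tau|}\cdot\mu(\cofin{M})+\mu(P),
\]
which is the sum of a $3$-random $\mathbf{0}^{(2)}$-left-c.e.\ real (note that scaling by the dyadic rational $2^{-|\tau|}$ preserves both $3$-randomness and the $\mathbf{0}^{(2)}$-left-c.e.\ property) and a further $\mathbf{0}^{(2)}$-left-c.e.\ real, and therefore is itself a $3$-random $\mathbf{0}^{(2)}$-left-c.e.\ real by the relativized Demuth theorem. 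The identical argument, with $\cofin{}$ replaced by $\comp{}$ throughout, handles the computability probability $\mu(\comp{U})$; one only needs that $\comp{M}$ and $\comp{U}\cap(2^{\omega}-\dbra{\tau})$ are $\Sigma^0_3$, which is the standard fact that the index set of computable c.e.\ sets is $\Sigma^0_3$, already invoked in Lemma~\ref{nl9d72crTM}.

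I do not anticipate a genuine obstacle here, since all the heavy lifting has been done in Lemmas~\ref{nl9d72crTM} and~\ref{2k8iykfZJP} and in the arithmetical-measure lemmas; the only point requiring a little care is the bookkeeping that the scaled measure $2^{-|\tau|}\mu(\cofin{M})$ is still $3$-random and $\mathbf{0}^{(2)}$-left-c.e., and that the addition lemma from Section~\ref{lkNQieYqH} is being applied at the level $n=3$, \ie relative to $\mathbf{0}^{(2)}$, rather than unrelativized. One should also note explicitly that the decomposition above is genuinely disjoint, which is immediate because $\tau\ast\cofin{M}\subseteq\dbra{\tau}$ while $P$ is disjoint from $\dbra{\tau}$ by construction.
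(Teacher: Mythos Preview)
Your proposal is correct and follows essentially the same argument as the paper's own proof: take the machine $M$ from Lemma~\ref{2k8iykfZJP}, use universality to obtain the coding prefix $\tau$, split $\cofin{U}$ (respectively $\comp{U}$) into $\tau\ast\cofin{M}$ and the $\Sigma^0_3$ remainder $P$, and apply Lemma~\ref{TLBE5A9gU4} together with the relativized Demuth addition fact from Section~\ref{lkNQieYqH}. Your write-up is in fact slightly more explicit than the paper's, which only spells out the $\cofin{}$ case and leaves the identical $\comp{}$ case implicit.
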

\begin{proof}
Consider the machine $M$ of Lemma \ref{2k8iykfZJP}, so that $\cofin{M}=\mu(\comp{M})$ and
$\mu(\cofin{M})$ is a 
3-random $\mathbf{0}^{(2)}$-left-c.e.\ real. Since $U$ is universal, there exists
a string $\tau$ such that for all strings $\sigma$ we have $M(\sigma)\simeq U(\tau\ast\sigma)$.
We have
\[
\mathtt{COF}(U)=\tau\ast\mathtt{COF}(M)\cup \Big(\mathtt{COF}(U)\cap (2^{\omega}-\dbra{\tau})\Big)
\hspace{0.5cm}\textrm{and} \hspace{0.5cm}
\tau\ast\mathtt{COF}(M)\cap \Big(\mathtt{COF}(U)\cap (2^{\omega}-\dbra{\tau})\Big)=\emptyset 
\]
Let $P=\mathtt{COF}(U)\cap (2^{\omega}-\dbra{\tau})$ and note that this is a $\Sigma^0_3$ class, so
$\mu(P)$ is a $\mathbf{0}^{(2)}$-left-c.e.\ real by Lemma \ref{TLBE5A9gU4}.
So $\mu(\mathtt{COF}(U))=2^{-|\tau|}\cdot \mu(\mathtt{COF}(M))+\mu(P)$ 
is a 3-random $\mathbf{0}^{(2)}$-left-c.e.\ real as the sum of a
3-random $\mathbf{0}^{(2)}$-left-c.e.\ real  and another $\mathbf{0}^{(2)}$-left-c.e.\ real. 
\end{proof}

\subsection{From random reals to cofiniteness and computability probabilities}\label{Vavq5ziBpO}
In this section we prove the `if' direction of Theorem \ref{Sgl7O4jKzk}.
We first show how to obtain a machine with prescribed cofiniteness probability.
\begin{lem}[Prescribed cofiniteness probability]\label{OXrsKZCS9}
If $\alpha\in (0,1)$ is a  $\mathbf{0}^{(2)}$-left-c.e.\ real 
and $c\in\Nat$ is such that $\alpha+2^{-c}<1$ 
then there exists an oracle machine $M$ and a string $\rho$ of length $c$ such that
$M(\sigma,n)$ is undefined for any $n$ and any 
string $\sigma$ which is compatible with $\rho$,
and $\mu(\mathtt{COF}(M))=\alpha$.
\end{lem}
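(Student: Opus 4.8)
The plan is to mimic the structure of Lemma~\ref{DwQkO1LCfh}, now using the $\Sigma^0_3$ machinery of Lemma~\ref{nl9d72crTM} in place of the $\Sigma^0_2$ machinery of Lemma~\ref{w1HmrXQxym}. Given a $\mathbf{0}^{(2)}$-left-c.e.\ real $\alpha \in (0,1)$ and $c\in\Nat$ with $\alpha + 2^{-c} < 1$, note that $1 - \alpha - 2^{-c}$ is a $\mathbf{0}^{(2)}$-right-c.e.\ real in $(0,1)$; equivalently its complement behaviour lets us write $1-\alpha = 2^{-c} + \sum_i 2^{-b_i}$ for a $\mathbf{0}^{(2)}$-computable sequence of positive integers $(b_i)$. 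By the Kraft--Chaitin theorem relative to $\mathbf{0}^{(2)}$ (as in Lemma~\ref{PMFeQmge4u}) we obtain a $\mathbf{0}^{(2)}$-computable sequence of strings $(\sigma_i)$ with $|\sigma_0| = c$, $|\sigma_{i+1}| = b_i$, whose range $S := \{\sigma_i \mid i\in\Nat\}$ is a $\Sigma^0_3$ prefix-free set of strings with $\mu(\dbra{S}) = 1-\alpha$. Set $\rho := \sigma_0$. The complement set $J := 2^{<\omega} \setminus \{\tau \mid \tau \text{ compatible with some } \sigma\in S\}$, or more precisely the upward closure of the strings \emph{not} extending any element of $S$, is an upward-closed $\Sigma^0_3$ set of strings with $\dbra{J} = 2^\omega \setminus \dbra{S}$, so $\mu(\dbra{J}) = \alpha$, and $\rho$ has no prefix in $J$.

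Next I would feed $J$ into the construction of Lemma~\ref{nl9d72crTM} to obtain an oracle machine $M$ such that for every real $X$, the domain of $M(X)$ is cofinite (equivalently computable) if and only if $X$ has a prefix in $J$. The key point to check is that, just as in Lemma~\ref{DwQkO1LCfh}, the construction can be arranged so that $M(\sigma,n)$ is undefined for every $n$ and every string $\sigma$ compatible with $\rho$. Since $\rho = \sigma_0 \in S$, no string compatible with $\rho$ lies in $\dbra{J}$; one then inspects the construction of $M$ in Lemma~\ref{nl9d72crTM} to confirm that for strings $\sigma$ that already know they are outside $J$ (in the sense that a sufficiently long initial segment of $\sigma$, here $\rho$ itself, extends an element of $S$), the markers $m(\sigma)$ stabilize and the machine can be made to issue no outputs at all on such $\sigma$ --- e.g.\ by refraining from defining $M(\sigma,n)$ whenever $\sigma$ is compatible with $\rho$, which is consistent with the cofiniteness/computability equivalence because those $X$ passing through $\rho$ are precisely the ones with empty (hence non-cofinite, non-computable) domain, matching the fact that such $X$ have no prefix in $J$. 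This requires only a cosmetic modification of the construction, exactly as the canonical $\Sigma^0_2$ approximation with $\rho\in S_i$ for all $i$ was used in Lemma~\ref{DwQkO1LCfh}.

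Finally, since $\mathtt{COF}(M)$ is exactly the set of reals with a prefix in $J$, we get
\[
\mu(\mathtt{COF}(M)) = \mu(\dbra{J}) = 1 - \mu(\dbra{S}) = \alpha,
\]
which concludes the proof.

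I expect the main obstacle to be the bookkeeping in the second paragraph: verifying that the marker construction of Lemma~\ref{nl9d72crTM} tolerates the extra demand that $M$ be totally undefined on all strings compatible with a designated $\rho\in S$, without disturbing the equivalence between ``$X$ has a prefix in $J$'' and ``$\mathrm{dom}(M(X))$ cofinite/computable''. This is where the proof of Lemma~\ref{DwQkO1LCfh} used the freedom to pick a canonical approximation with $\rho$ always present; the analogue here is to note that strings extending $\rho$ never enter $J$, so forcing $M$ to be empty on them is harmless --- these $X$ should have non-cofinite, non-computable domain anyway, and the empty domain is the simplest such domain. Everything else (the Kraft--Chaitin step, the measure computation) is a routine relativization of Lemma~\ref{DwQkO1LCfh} one level up the arithmetical hierarchy.
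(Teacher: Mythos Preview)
Your proposal contains a genuine gap in the first paragraph. You build a prefix-free $\Sigma^0_3$ set $S$ of strings with $\mu(\dbra{S})=1-\alpha$ and then try to pass to its ``complement'' $J$ with $\dbra{J}=2^{\omega}\setminus\dbra{S}$. This is impossible in general: for any set of strings $J$, the class $\dbra{J}$ is open, whereas $2^{\omega}\setminus\dbra{S}$ is closed and (for an infinite prefix-free $S$) not open. Concretely, if $S=\{0^n1\mid n\in\Nat\}$ then $2^{\omega}\setminus\dbra{S}=\{0^{\omega}\}$, which is not $\dbra{J}$ for any $J$. Both of your candidate definitions of $J$ fail to satisfy $\dbra{J}=2^{\omega}\setminus\dbra{S}$. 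Moreover, the set of strings incompatible with every element of $S$ is $\Pi^0_3$, not $\Sigma^0_3$, so even if the measure worked out you could not feed it to Lemma~\ref{nl9d72crTM}.

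The reason you were led astray is that you copied the complementation pattern of Lemma~\ref{DwQkO1LCfh} too literally. There $\alpha$ was \rcep and the target class $\mathtt{TOT}(M)$ was $\Pi^0_2$, so one naturally built its $\Sigma^0_2$ complement of measure $1-\alpha$. Here the situation is dual: $\alpha$ is $\mathbf{0}^{(2)}$-\emph{left}-c.e.\ and the target class $\mathtt{COF}(M)$ is $\Sigma^0_3$, so no complementation is needed. The paper simply takes a $\mathbf{0}^{(2)}$-computable sequence $(b_i)$ with $\alpha=\sum_i 2^{-b_i}$, applies Kraft--Chaitin relative to $\mathbf{0}^{(2)}$ with an extra request of weight $2^{-c}$ to get $\rho=\sigma_0$ together with a prefix-free $\Sigma^0_3$ set $S=\{\sigma_{i+1}\mid i\in\Nat\}$ of measure exactly $\alpha$, lets $J$ be the upward closure of $S$ (still $\Sigma^0_3$), and runs the construction of Lemma~\ref{nl9d72crTM} on $J$ with the extra stipulation that $M(\sigma,n)$ is never defined when $\sigma$ is compatible with $\rho$. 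Then $\mathtt{COF}(M)\cap\dbra{\rho}=\emptyset$ (empty domain is not cofinite) and $\mathtt{COF}(M)\cap(2^{\omega}\setminus\dbra{\rho})=\dbra{S}$, giving $\mu(\mathtt{COF}(M))=\alpha$ directly. (A side remark: your parenthetical ``empty (hence non-cofinite, non-computable) domain'' is half wrong---the empty domain \emph{is} computable; this is precisely why the paper needs a different twist in Lemma~\ref{2x3FJmdR7Y} for $\mathtt{COM}$.)
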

\begin{proof}
Let $(b_i)$ be a $\mathbf{0}^{(2)}$-computable 
sequence of positive integers such that $\alpha=\sum_i 2^{-b_i}$. 
By the Kraft-Chaitin theorem relative to $\mathbf{0}^{(2)}$
and since $\alpha+2^{-c}<1$, there exists
a $\mathbf{0}^{(2)}$-computable sequence of strings $(\sigma_i)$ such that
$|\sigma_0|=c$, $|\sigma_{i+1}|=b_i$ for each $i$ and 
$S:=\{\sigma_{i+1}\ |\ i\in\Nat\}$ is a $\Sigma^0_3$ \pf set of strings.
Define $\rho:=\sigma_0$ and note that $\mu(\dbra{S})=\alpha$.
Let $J$ be the upward closure of $S$ and construct an oracle
machine $M$ as in the proof of Lemma \ref{nl9d72crTM} based on the $\Sigma^0_3$ upward
closed set of strings $J$, with the additional restriction that $M(\sigma,n)$ is not
defined for any $\sigma$ which is compatible with $\rho$ and any $n$.
Then by the same arguments, all of the properties of $M(X)$ that are listed in the proof of 
Lemma \ref{nl9d72crTM} hold as long as $X$ is not prefixed by $\rho$.
If $\rho$ is a prefix of $X$, then $M(X)$ is the empty function, so its domain is coinfinite.
Hence
\[
\mu\big(\mathtt{COF}(M)\big)=
\mu\big(\mathtt{COF}(M)\cap (2^{\omega}-\dbra{\rho})\big)
+\mu\big(\mathtt{COF}(M)\cap \dbra{\rho}\big)=
\mu(\dbra{S})+0=\alpha
\]
which concludes the proof.
\end{proof}

A minor modification of the above argument gives a similar result regarding the computability
probability.
\begin{lem}[Prescribed computability probability]\label{2x3FJmdR7Y}
If $\alpha\in (0,1)$ is a  $\mathbf{0}^{(2)}$-left-c.e.\ real 
and $c\in\Nat$ is such that $2^{-c}<\alpha$ 
then there exists an oracle machine $M$ and a string $\rho$ of length $c$ such that
$M(\sigma,n)$ is undefined for any $n$ and any 
string $\sigma$ which is compatible with $\rho$,
and $\mu(\mathtt{COM}(M))=\alpha$.
\end{lem}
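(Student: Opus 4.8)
The plan is to replay the proof of Lemma~\ref{OXrsKZCS9} essentially verbatim, changing only the final count of measure. The one genuinely new observation is that the empty set is computable (although it is not cofinite): in the construction below, for every $X$ having prefix $\rho$ the machine $M(X)$ is the empty function, so the cylinder $\dbra{\rho}$ --- which was disjoint from $\cofin{M}$ in Lemma~\ref{OXrsKZCS9} --- now lies entirely inside $\comp{M}$ and contributes its measure $2^{-c}$. To keep the total measure equal to $\alpha$ one therefore codes only $\alpha-2^{-c}$ into the remainder of the space; this is exactly why the hypothesis is $2^{-c}<\alpha$ (so that $\alpha-2^{-c}$ is a genuine positive real), and why, in contrast with Lemma~\ref{OXrsKZCS9}, no upper room condition such as $\alpha+2^{-c}<1$ is needed here.

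In detail, first I would note that since $2^{-c}<\alpha<1$ the real $\alpha-2^{-c}$ lies in $(0,1)$ and is \lcepp, so it can be written as $\alpha-2^{-c}=\sum_i 2^{-b_i}$ for some $\mathbf{0}^{(2)}$-computable sequence of positive integers $(b_i)$. Next I would apply the Kraft--Chaitin theorem relative to $\mathbf{0}^{(2)}$ to the request sequence $c,b_0,b_1,\dots$, whose total weight is $2^{-c}+\sum_i 2^{-b_i}=\alpha<1$; this yields a $\mathbf{0}^{(2)}$-computable \pf sequence of strings $\sigma_0,\sigma_1,\dots$ with $|\sigma_0|=c$ and $|\sigma_{i+1}|=b_i$ for each $i$. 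Set $\rho:=\sigma_0$, put $S:=\{\sigma_{i+1}\mid i\in\Nat\}$ --- a $\Sigma^0_3$ \pf set every member of which is incompatible with $\rho$, of measure $\mu(\dbra{S})=\alpha-2^{-c}$ --- and let $J$ be its upward closure. Finally, run the construction of Lemma~\ref{nl9d72crTM} on the $\Sigma^0_3$ upward closed set $J$, with the additional stipulation --- exactly as in Lemma~\ref{OXrsKZCS9} --- that $M(\sigma,n)$ is left undefined for every $\sigma$ compatible with $\rho$ and every $n$; as there, this has the effect that $M(X)$ is the empty function whenever $\rho\prec X$, while for $X\notin\dbra{\rho}$ all of the properties of $M(X)$ established in Lemma~\ref{nl9d72crTM} remain valid.

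Then I would finish by computing the measure. For $X\notin\dbra{\rho}$, Lemma~\ref{nl9d72crTM} gives that the domain of $M(X)$ is computable if and only if $X$ has a prefix in $J$, equivalently (since $J$ is the upward closure of $S$) if and only if $X$ has a prefix in $S$, \ie $X\in\dbra{S}$; and for $X\in\dbra{\rho}$ the domain of $M(X)$ is $\emptyset$, which is computable, so $X\in\comp{M}$. Hence $\comp{M}=\dbra{S}\cup\dbra{\rho}$, and this union is disjoint because $S$ is incompatible with $\rho$, so
\[
\mu\big(\comp{M}\big)=\mu\big(\dbra{S}\big)+\mu\big(\dbra{\rho}\big)=(\alpha-2^{-c})+2^{-c}=\alpha,
\]
as required.

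I expect no serious obstacle: the only point that needs care is precisely the one that opens this sketch --- passing from cofiniteness to computability moves the cylinder $\dbra{\rho}$ out of the complement of the target class and into it, which forces the change of hypothesis and the compensating shift from coding $\alpha$ to coding $\alpha-2^{-c}$. All the remaining ingredients (the $\Sigma^0_3$ coding via Kraft--Chaitin relative to $\mathbf{0}^{(2)}$, the suppression of requests on strings compatible with $\rho$, and the persistence of the conclusions of Lemma~\ref{nl9d72crTM} off $\dbra{\rho}$) are copied unchanged from the proof of Lemma~\ref{OXrsKZCS9}.
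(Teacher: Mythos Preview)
Your proposal is correct and follows essentially the same approach as the paper's own proof: both code $\alpha-2^{-c}$ into a $\Sigma^0_3$ prefix-free set $S$ via Kraft--Chaitin relative to $\mathbf{0}^{(2)}$ (reserving $\rho$ of length $c$), suppress the machine on strings compatible with $\rho$, and then observe that the cylinder $\dbra{\rho}$ contributes its full measure $2^{-c}$ to $\comp{M}$ because the empty domain is computable. Your write-up is in fact slightly more explicit than the paper's about why the Kraft--Chaitin weight condition holds (total weight $=\alpha<1$) and why the hypothesis changes from $\alpha+2^{-c}<1$ to $2^{-c}<\alpha$.
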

\begin{proof}
Let $(b_i)$ be a $\mathbf{0}^{(2)}$-computable 
sequence of positive integers such that $\alpha-2^{-c}=\sum_i 2^{-b_i}$. 
By the Kraft-Chaitin theorem relative to $\mathbf{0}^{(2)}$, there exists
a $\mathbf{0}^{(2)}$-computable sequence of strings $(\sigma_i)$ such that
$|\sigma_0|=c$, $|\sigma_{i+1}|=b_i$ for each $i$ and 
$S:=\{\sigma_{i+1}\ |\ i\in\Nat\}$ is a $\Sigma^0_3$ \pf set of strings.
Define $\rho:=\sigma_0$ and note that $\mu(\dbra{S})=\alpha-2^{-c}$.
Let $J$ be the upward closure of $S$ and construct an oracle
machine $M$ as in the proof of Lemma \ref{nl9d72crTM} based on the $\Sigma^0_3$ upward
closed set of strings $J$, with the additional restriction that $M(\sigma,n)$ is not
defined for any $\sigma$ which is compatible with $\rho$ and any $n$.
Then by the same arguments, all of the properties of $M(X)$ that are listed in the proof of 
Lemma \ref{nl9d72crTM} hold as long as $X$ is not prefixed by $\rho$.
If $\rho$ is a prefix of $X$, then $M(X)$ is the empty function, so its domain is 
computable
Hence
\[
\mu\big(\mathtt{COM}(M)\big)=
\mu\big(\mathtt{COM}(M)\cap (2^{\omega}-\dbra{\rho})\big)
+\mu\big(\mathtt{COM}(M)\cap \dbra{\rho}\big)=
\mu(\dbra{S})+2^{-c}=\alpha
\]
which concludes the proof.
\end{proof}

We are now ready to prove the `if' direction of Theorem \ref{Sgl7O4jKzk}.
We use Lemma \ref{OXrsKZCS9}  and Lemma \ref{2x3FJmdR7Y}
in order to produce  universal oracle machines with prescribed probabilities.

\begin{lem}
Let $\alpha\in (0,1)$ be a 3-random $\mathbf{0}^{(2)}$-left-c.e.\ real. Then there exist
universal oracle  machines $M$ and $N$ such that 
$\mu(\mathtt{COF}(M))=\alpha=\mu(\mathtt{COM}(N))$.
\end{lem}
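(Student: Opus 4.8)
The plan is to mimic the proof of Lemma \ref{aw4QkldwOa}, replacing the totality/infinitude probabilities by the cofiniteness/computability probabilities and using Lemmas \ref{OXrsKZCS9} and \ref{2x3FJmdR7Y} in place of Lemma \ref{DwQkO1LCfh}. First I would fix a universal oracle machine $V$ and let $\gamma=\mu(\mathtt{COF}(V))$; by Lemma \ref{TLBE5A9gU4}, since $\mathtt{COF}(V)$ is a $\Sigma^0_3$ class, $\gamma$ is a $\mathbf{0}^{(2)}$-left-c.e.\ real. Using \eqref{5zlG2HCdb} (the version for $\mathbf{0}^{(n-1)}$-left-c.e.\ reals with $n=3$) and the fact that $\alpha$ is $3$-random, I obtain a $c\in\Nat$ with $\alpha+2^{-c}<1$ such that $\beta:=\alpha-2^{-c}\gamma$ is a $\mathbf{0}^{(2)}$-left-c.e.\ real. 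Then by Lemma \ref{OXrsKZCS9} I fix an oracle machine $M_0$ and a string $\rho$ of length $c$ such that $\mu(\mathtt{COF}(M_0))=\beta$ and $M_0(\sigma,n)$ is undefined for every string $\sigma$ compatible with $\rho$ and every $n$.

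Next I assemble $M$ from $M_0$ and $V$: for $\sigma$ incompatible with $\rho$ set $M(\sigma,n)\simeq M_0(\sigma,n)$, and for all $\tau,n$ set $M(\rho\ast\tau,n)\simeq V(\tau,n)$. Since $V$ is universal and $M$ simulates it below $\rho$, $M$ is universal. Because $M_0$ is undefined on all strings compatible with $\rho$, the reals $X$ with $\rho\prec X$ contribute to $\mathtt{COF}(M)$ exactly as they do to $\mathtt{COF}(V)$ after shifting by $\rho$, while the reals incompatible with $\rho$ contribute as in $\mathtt{COF}(M_0)$; hence
\[
\mathtt{COF}(M)=\rho\ast\mathtt{COF}(V)\ \cup\ \mathtt{COF}(M_0),
\qquad
\rho\ast\mathtt{COF}(V)\ \cap\ \mathtt{COF}(M_0)=\emptyset,
\]
so $\mu(\mathtt{COF}(M))=2^{-c}\gamma+\beta=\alpha$. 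The machine $N$ for the computability probability is produced the same way, starting from $\delta:=\mu(\mathtt{COM}(V))$ (again $\mathbf{0}^{(2)}$-left-c.e.\ by Lemma \ref{TLBE5A9gU4}), choosing $c$ with $2^{-c}<\alpha$ and $\zeta:=\alpha-2^{-c}\delta$ a $\mathbf{0}^{(2)}$-left-c.e.\ real via \eqref{5zlG2HCdb}, invoking Lemma \ref{2x3FJmdR7Y} to get $N_0$ and $\eta$ with $\mu(\mathtt{COM}(N_0))=\zeta$, and combining $N_0$ with $V$ below $\eta$; the same disjointness computation gives $\mu(\mathtt{COM}(N))=2^{-c}\delta+\zeta=\alpha$.

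I expect the only real subtlety to be the bookkeeping for the decomposition of $\mathtt{COF}(M)$ and $\mathtt{COM}(M)$, i.e.\ verifying that grafting $V$ onto the cone $\dbra{\rho}$ (where $M_0$ is entirely undefined) changes neither the cofiniteness nor the computability status of $M(X)$ off that cone, and changes it on the cone exactly by the $\rho$-shift; this is the same argument as in Lemma \ref{yH3aERIEO3} and Lemma \ref{aw4QkldwOa} and is routine here. One small point to check is that the hypotheses $\alpha+2^{-c}<1$ (needed for Lemma \ref{OXrsKZCS9}) and $2^{-c}<\alpha$ (needed for Lemma \ref{2x3FJmdR7Y}) are simultaneously obtainable with the $c$ furnished by \eqref{5zlG2HCdb}: since $\alpha\in(0,1)$, all sufficiently large $c$ satisfy both, and \eqref{5zlG2HCdb} guarantees $\beta$ (resp.\ $\zeta$) is $\mathbf{0}^{(2)}$-left-c.e.\ for all $c$ past some threshold, so we may take $c$ large enough to meet every requirement at once.
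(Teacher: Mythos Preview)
Your argument for $\mathtt{COF}$ is correct and matches the paper's proof essentially verbatim.

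There is, however, a genuine gap in your treatment of $\mathtt{COM}$. You claim that the ``same disjointness computation'' works, i.e.\ that
\[
\mathtt{COM}(N)=\eta\ast\mathtt{COM}(V)\ \cup\ \mathtt{COM}(N_0),
\qquad
\eta\ast\mathtt{COM}(V)\ \cap\ \mathtt{COM}(N_0)=\emptyset.
\]
This is false. The machine $N_0$ produced by Lemma~\ref{2x3FJmdR7Y} is \emph{nowhere defined} on the cone $\dbra{\eta}$, so for every $X\in\dbra{\eta}$ the function $N_0(X)$ is the empty function, whose domain $\emptyset$ is computable. Hence $\dbra{\eta}\subseteq\mathtt{COM}(N_0)$: the cone already contributes full mass $2^{-c}$ to $\mu(\mathtt{COM}(N_0))=\zeta$. (Contrast this with $\mathtt{COF}$, where the empty domain is \emph{not} cofinite, so the cone contributes nothing --- that is why your $\mathtt{COF}$ argument is fine.) When you graft $V$ onto $\dbra{\eta}$ you therefore \emph{remove} mass $2^{-c}$ and \emph{add} mass $2^{-c}\delta$, giving
\[
\mu(\mathtt{COM}(N))=\big(\zeta-2^{-c}\big)+2^{-c}\delta=\alpha-2^{-c}\neq\alpha.
\]
The paper corrects for this by taking $\zeta:=\alpha-2^{-c}\delta+2^{-c}$ instead of $\alpha-2^{-c}\delta$; then $\mu(\mathtt{COM}(N))=2^{-c}\delta+\zeta-2^{-c}=\alpha$ as desired. (A minor related point: the hypothesis of Lemma~\ref{2x3FJmdR7Y} applied with target $\zeta$ is $2^{-c}<\zeta$, not $2^{-c}<\alpha$; with the corrected $\zeta$ this reads $2^{-c}\delta<\alpha$, which holds for all large $c$.)
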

\begin{proof}
Let $V$ be a universal oracle machine and let $\gamma=\mu(\cofin{V})$. 
By Lemma \ref{TLBE5A9gU4} the real  $\gamma$ is a $\mathbf{0}^{(2)}$-left-c.e.\ real.
By \eqref{5zlG2HCdb} and the discussion of Section \ref{lkNQieYqH} there exists $c\in\Nat$
such that $\alpha+2^{-c}<1$ and
the real $\beta:=\alpha - 2^{-c}\gamma$ is a $\mathbf{0}^{(2)}$-left-c.e.\ real.
By Lemma \ref{OXrsKZCS9} consider an oracle machine $F$ and a string $\rho$
of length $c$ such that 
the measure of $\mathtt{COF}(F)$ is $\beta$ and $F(\sigma,n)$ is not defined for any
$\sigma$ which is compatible with $\rho$ and any $n$. 
Define an oracle machine $M$ as follows.
For each string $\sigma$ which is incompatible with $\rho$ and any $n$ let 
$M(\sigma,n)\simeq F(\sigma,n)$.
Moreover for each $\tau$ and any $n$ let $M(\rho\ast\tau,n)\simeq V(\tau,n)$. Since $V$ is
universal, it follows that $M$ is also a universal oracle  machine. Moreover,
\[
\mathtt{COF}(M)=\rho\ast\mathtt{COF}(V)\cup \mathtt{COF}(F)
\hspace{0.5cm}\textrm{and}\hspace{0.5cm}
\rho\ast\mathtt{COF}(V)\cap \mathtt{COF}(F)=\emptyset,
\]
and so
\[
\mu(\mathtt{COF}(M))=2^{-|\rho|}\cdot\mu(\mathtt{COF}(V)) + \mu(\mathtt{COF}(F))
=2^{-c}\cdot \gamma+\beta =\alpha
\]
which concludes the proof of the first equality.

For the second equality, we let $\delta$ be the measure of the reals $X$ such that the domain of $V(X)$ is computable
and choose $c$ such that
such that $\alpha+2^{-c}<1$ and
the real $\alpha - 2^{-c}\delta$ is a \rcep real. Now let $\zeta = \alpha - 2^{-c} \delta + 2^{-c}$.  
Then by Lemma \ref{2x3FJmdR7Y}, we can choose an oracle 
 machine $G$ and a string $\eta$
of length $c$ such that 
the measure of $\mathtt{COM}(G)$ is $\beta$ and $G(\sigma,n)$ is not defined for any
$\sigma$ which is compatible with $\eta$ and any $n$. Then we define an
oracle  machine $N$ as follows.
For each string $\sigma$ which is incompatible with $\eta$ and any $n$ let 
$U(\sigma,n)\simeq G(\sigma,n)$.
Moreover for each $\tau$ and any $n$ let $U(\eta\ast\tau,n)\simeq V(\tau,n)$. Since $V$ is
universal, it follows that $U$ is also a universal oracle  machine.  Moreover
\[
\mu(\mathtt{COM}(N))=2^{-|\eta|}\cdot\mu(\mathtt{COM}(V)) + \mu(\mathtt{COM}(G)) - 2^{-c}
=2^{-c}\cdot \delta+\zeta - 2^{-c} =\alpha.
\]
which concludes the proof of the second equality. 
\end{proof}

\subsection{Outline of the proof of Theorem \ref{7FXYwDy4zY}}
A classic fact from computability theory says that
the property that $W_e$ (the $e$th \ce set) is Turing complete, is $\Sigma^0_4$-complete.
Theorem \ref{7FXYwDy4zY} is a version of this fact
in terms of measures of oracle Turing machines.		

\begin{repthm}{7FXYwDy4zY}		
The probability that a universal oracle machine computes a function whose domain computes the		
halting problem is		
a 4-random \lceppp real. Conversely, every 4-random \lceppp real in $(0,1)$ is the probability that		
a certain universal oracle machine computes a function whose domain computes the		
halting problem.		
\end{repthm}

At this point we have given enough arguments in order to illustrate the methodology
of obtaining characterizations of probabilities of universal oracle machines in terms of 
algorithmic randomness. For this reason, we give a mere outline of the
proof of Theorem \ref{7FXYwDy4zY}, which is entirely along the lines of the proof of
Theorem \ref{7MGSCRK2Yz} and Theorem \ref{Sgl7O4jKzk}
which were presented in excruciating detail.
The crucial ingredient is the following lemma, which allows the construction of oracle
machines with suitably prescribed probability of computing the halting problem.
\begin{lem}[Domain or range computing the halting problem]\label{1zgPV8YMh}
Given a $\Sigma^0_4$ upward closed set of strings, 
there exists an oracle  machine $M$ such that for each 
$X$ the domain (or range) of $M(X)$ 
computes the halting problem if and only if $X$ has a prefix in $J$.
\end{lem}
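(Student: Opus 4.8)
The plan is to mirror the three-stage strategy of Sections \ref{6iJldN1dR6}--\ref{Vavq5ziBpO}, with Lemma \ref{1zgPV8YMh} playing the role that Lemma \ref{nl9d72crTM} played for cofiniteness; only the internal coding becomes one level more involved, reflecting the fact that Turing completeness is $\Sigma^0_4$-complete whereas cofiniteness is $\Sigma^0_3$-complete. Throughout, the halting problem to be computed is $\emptyset'$, and the randomness level $4$ will come entirely from the depth of $J$, not from the coded object.

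First I would put $J$ into a normal form. Since $J$ is $\Sigma^0_4$ and upward closed, write $\sigma\in J\iff \exists t\,\forall u\,\exists v\,\forall w\,R_0(t,u,v,w,\sigma)$ with $R_0$ computable, and then, exactly as in the opening of the proof of Lemma \ref{nl9d72crTM}, absorb a bounded search $\exists i\le t$ and a bounded disjunction over the prefixes $\sigma\restr_i$ into the matrix. This yields a computable predicate $R$ whose associated $\Pi^0_3$ matrix $P(t,\sigma):=\forall u\,\exists v\,\forall w\,R(t,u,v,w,\sigma)$ is monotone in the witness $t$ and prefix-monotone (if $\sigma\preceq\tau$ and $P(t,\sigma)$ then $P(t,\tau)$), and satisfies $\sigma\in J\iff\exists t\,P(t,\sigma)$; these are the exact analogues of properties (a)--(c) there. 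Next, since the inner condition $\exists v\,\forall w\,R(t,u,v,w,\sigma)$ is $\Sigma^0_2$, I would use the $\Sigma^0_2$-completeness of the index set of finite \ce sets to fix a computable function $b$ with $W_{b(t,u,\sigma)}$ finite if and only if that condition holds; then $P(t,\sigma)$ is equivalent to ``$W_{b(t,u,\sigma)}$ is finite for every $u$,'' a $\Pi^0_3$ statement now presented through a uniformly given family of \ce sets.

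The heart of the argument is a coding gadget adapted from the standard proof that the index set of Turing-complete \ce sets is $\Sigma^0_4$-complete. For each string $\sigma$, viewed as a prefix of the oracle and with witness bound $t=|\sigma|$ as in Lemma \ref{nl9d72crTM}, I would run a module that attempts to code $\emptyset'$ into the block of $\mathrm{dom}(M(X))$ reserved for $\sigma$, using movable markers of the kind in Lemma \ref{nl9d72crTM} but organised in two levels. An outer marker codes $\emptyset'$ through the reset trigger ``$|\sigma|\in\emptyset'$,'' exactly as the single marker of Lemma \ref{nl9d72crTM} does, while additional resets are gated by the inner sets $W_{b(|\sigma|,u,\sigma)}$ through an expansionary, length-of-agreement mechanism that allows level $u$ to reset the outer marker only once all lower levels currently look finite. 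When $P(|\sigma|,\sigma)$ holds, every $W_{b(|\sigma|,u,\sigma)}$ is finite, each level interferes only finitely often, the outer marker settles, and its final position makes the settling time of $\emptyset'$ recoverable from the block; when some $W_{b(|\sigma|,u,\sigma)}$ is infinite that level resets the outer marker forever, so the block fills up completely (hence stays computable) and codes nothing. As in Lemma \ref{nl9d72crTM} the machine uses the oracle only to confirm $\sigma\prec X$ (so the use is $|\sigma|$), and $M$ is arranged to be a genuine oracle machine with $\mathrm{dom}(M(X))$ equal to its range. Using monotonicity of $P$ in $t$ and in the prefix together with the $t=|\sigma|$ device, one gets, along any $X$: if $X$ has a prefix in $J$ then $P(n,X\restr_n)$ holds for all large $n$, the corresponding modules settle and code $\emptyset'$, and $\mathrm{dom}(M(X))\geq_T\emptyset'$; if $X$ has no prefix in $J$ then $P(n,X\restr_n)$ fails for every $n$, every block fills up, $\mathrm{dom}(M(X))$ is computable, and so $\mathrm{dom}(M(X))\not\geq_T\emptyset'$. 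This is exactly the stated biconditional, and the argument for the range is identical since domain equals range.

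Granting the lemma, Theorem \ref{7FXYwDy4zY} then follows along the lines of Theorems \ref{7MGSCRK2Yz} and \ref{Sgl7O4jKzk}: feeding into Lemma \ref{1zgPV8YMh} a member $J$ of a universal \ml test relative to $\mathbf{0}^{(3)}$ makes the event ``$\mathrm{dom}(M(X))$ computes $\emptyset'$'' equal to $\dbra{J}$, whose measure is a 4-random \lceppp real by Lemma \ref{TLBE5A9gU4} and Section \ref{lkNQieYqH}; a shift by a universal machine as in Lemma \ref{yH3aERIEO3} transfers this to an arbitrary universal machine, and the converse is obtained as in Section \ref{Vavq5ziBpO} by prescribing the probability via a restricted version of the gadget together with \eqref{5zlG2HCdb}. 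The main obstacle is the gadget itself: carrying out the $\Pi^0_3$-gated, infinite-injury coding of $\emptyset'$ so that the biconditional is sharp---in particular guaranteeing genuine incompleteness (indeed computability of every block) when $X$ has no prefix in $J$, and genuine recoverability of $\emptyset'$ from a settled block when it does---while keeping the construction uniform in $\sigma$ and monotone along the prefix relation so that the pieces cohere into a single legitimate oracle machine. The witness-bounding trick $t=|\sigma|$ and the two monotonicity properties of $P$ are precisely what force the per-prefix modules to agree in the limit, and verifying this coherence is where the real work lies.
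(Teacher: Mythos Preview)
Your proposal is correct and takes essentially the same approach as the paper: the paper does not give a detailed proof of Lemma \ref{1zgPV8YMh} at all, merely stating that it is ``based on the classic argument showing that the index set of the complete \ce sets is $\Sigma^0_4$-complete'' and that the remaining steps ``follow faithfully the structure'' of Sections \ref{6iJldN1dR6}--\ref{Vavq5ziBpO}. Your sketch---normalising $J$ with the same monotonicity tricks as in Lemma \ref{nl9d72crTM}, then running a two-level marker construction adapted from the $\Sigma^0_4$-completeness proof for the complete \ce sets, and finally deriving Theorem \ref{7FXYwDy4zY} via a universal test, the coding shift, and \eqref{5zlG2HCdb}---is exactly the fleshing-out the paper has in mind.
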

The proof of \ref{1zgPV8YMh} is based on the classic argument showing that
the index set of the complete \ce sets is $\Sigma^0_4$-complete 
(\eg see \cite[Corollary XII 1.7]{MR882921}).
Given this result and using a member of a universal \ml test relative to $\mathbf{0}^{(3)}$ as a
$\Sigma^{0}_4$ set of strings $J$, we can show that the completeness probability of
a universal oracle machine is a 4-random  $\mathbf{0}^{(3)}$-left-\ce real.
For the converse one follows faithfully the structure of the argument that we 
developed in Section \ref{Vavq5ziBpO}. We use the relativized Kraft-Chaitin method along with 
\eqref{5zlG2HCdb} and the discussion of Section \ref{lkNQieYqH}
in order to construct a universal oracle machine  whose probability of computing a function with
domain computing the halting problem equals
a given 4-random $\mathbf{0}^{(3)}$-left-\ce real.

\subsection{Probabilities for  oracle and monotone machines that are not so random}\label{rT1udReGjP}
In this section we show that the closely related cofiniteness probability
considered in Proposition \ref{XqR4J6nv5M0}
is not as random as one might expect.
Given an oracle machine $M$, consider the class of reals such that $M(X)$ is a total 
function which is the characteristic sequence of a cofinite set. Then this 
is the intersection of a $\Pi^0_2$ class, imposing totality, and a $\Sigma^0_2$ class. Indeed,
this class is equal to
\[
\Big\{X\ |\ \forall n\exists s\ M(X,n)[s]\de\Big\}\cap 
\Big\{X\ |\ \exists t\ \forall s\ \forall n>t\  \big(M(X,n)[s]\un\ \vee\ M(X,n)[s]\de=1\big)\Big\}
\]
where clearly the first is a $\Pi^0_2$ class and the second is a $\Sigma^0_2$ class.
Therefore the measure of this class
of reals is the difference of two \lcep reals.
Rettinger (see \cite{Zheng2004,Rettinger2005} or \cite[Theorem 9.2.4 ]{rodenisbook}) proved that such a real is either
 \lcep or \rcep or it is not 2-random. In any of these cases, the probability that 
$M(X)$ is a total computable
function is not a 3-random real.

The same argument holds for the case of monotone machines where we look
at the outcome that the output is a computable stream or a stream with a tail of 1s.
We have thus proved Proposition \ref{XqR4J6nv5M0}.

\section{Probabilities of infinitary self-delimiting machines}\label{geZv34fjP6}
In this section we prove all the results that refer to 
infinitary self-delimiting machines. We devote Section  \ref{ELZgFIxGxP} to
defining and discussing this special machine model. We emphasize that although the general methodology 
is the same as the one we developed in Section \ref{oQfiYDwopG}, the arguments are special
to the rather different nature of the model of infinitary self-delimiting machines.

\subsection{Infinitary self-delimiting machines}\label{ELZgFIxGxP}
We define Chaitin's infinitary self-delimiting machines which were originally introduced in 
Chaitin \cite{CHAITIN1976233} and later studied in
\cite{DBLP:journals/jsyml/BecherG05,jsyml/BecherFGM06,tcs/BecherG07,DBLP:journals/jsyml/BecherG09}
and \cite{ndjfl/BecherFNP05}. This model is based on a partial computable function 
$M: 2^{<\omega}\times\Nat\to2^{<\omega}$ 
with certain properties, which can be used to define a self-delimiting infinitary model
$M^{\infty}$, which in turn can be seen as a \pf infinitary model 
$M^{\ast}$. 
%We thus illustrate this definition as follows.
%\[
%M\to M^{\infty}\to M^{\ast}
%\]

The properties we require for the partial computable function $M$ are:
\begin{enumerate}[\hspace{0.5cm}(a)]
\item if $M(\sigma,m)\de$ and $n<m$, then $M(\sigma,n)\de$ and $M(\sigma,n)\preceq M(\sigma,m)$;
\item if $M(\sigma,n)\de$, then  for all strings $\tau$, $M(\sigma\ast\tau,n)\de$ and  
$M(\sigma\ast\tau,n)=M(\sigma,n)$;
\item the relation $M(\sigma,n)\de$ is decidable.
\end{enumerate}
These conditions may seem non-standard from the point of view of classical computability theory.
However they can be understood if one considers the following definition of infinitary self-delimiting
computations that they facilitate.
\begin{defi}[Infinitary self-delimiting machines]\label{7osrvJOuxR}
If $M: 2^{<\omega}\times\Nat\to2^{<\omega}$ 
is a partial computable function
which satisfies conditions
(a)-(c) above, define the infinitary machine $M^{\infty}$ as follows:
\begin{enumerate}[\hspace{0.5cm}(i)]
\item $M^{\infty}(\sigma)\de$ if $M(\sigma,n)\de$ for all $n$;
\item If $M^{\infty}(\sigma)\de$, then 
$M^{\infty}(\sigma)$ is the supremum of all $M(\sigma,n), n\in\Nat$
\end{enumerate}
where $\sigma$ is a binary string and $n\in\Nat$. Note that $M^{\infty}(\sigma)$ could be a string or a stream.
\end{defi}

We can now see how the properties (a)-(c) on $M$ 
above give a self-delimiting quality to $M^{\infty}$.
Indeed, the properties on $M$ imply that 
\begin{equation*}
\parbox{13cm}{if $M^{\infty}(\sigma)\de$ then for any $\tau$ we have $M^{\infty}(\sigma\ast\tau)\de$ and the two outputs are equal.}
\end{equation*}
Note that if $M^{\infty}(\sigma)\de$ and it is finite, \ie a string, then 
$M^{\infty}(\sigma\ast\tau)$ is not allowed to be a proper extension of 
$M^{\infty}(\sigma)\de$. This is one of the main differences with the standard oracle Turing machine model
and the monotone machines.
If $M^{\infty}(\sigma)\de$ then the values of $M^{\infty}(\tau)$ for any extension $\tau$ of $\sigma$
are completely determined by $M^{\infty}(\sigma)$. 
The other difference is that the relation $M^{\infty}(\sigma)\de$ is \pz and in general not $\Sigma^0_1$ as
we might be used to in classical computability theory.
The domain of $M^{\infty}$ is denoted by $\mathtt{DOM}(M^{\infty})$ and consists of the strings $\sigma$ 
such that $M^{\infty}(\sigma)\de$.

In the same way that we view self-delimiting machines as \pf machines, we can do the same
with these infinitary self-delimiting machines.
Given $M$ and $M^{\infty}$ which is defined in terms of $M$, we can define the relation 
$M^{\ast}(\sigma)\de$ to mean that $\sigma$ is a minimal string such that $M^{\infty}(\sigma)\de$
(\ie  $M^{\infty}(\sigma)\de$ and  $M^{\infty}(\tau)\un$ for all proper prefixes of $\sigma$).
Moreover if $M^{\ast}(\sigma)\de$ then let $M^{\ast}(\sigma):=M^{\infty}(\sigma)\de$.
Then the domain of $M^{\ast}$,
denoted by $\mathtt{DOM}(M^{\ast})$ consists of the strings $\sigma$ 
such that $M^{\ast}(\sigma)\de$ and is a \pf set of strings. In this way, 
$M^{\ast}$ may be regarded as a  \pf machine of a higher type. 
Formally, $M^{\ast}: 2^{<\omega}\to 2^{<\omega}$ and
\[
M^{\ast}(\sigma)=\begin{cases}
M^{\infty}(\sigma), & \textrm{if for every $\tau\in 2^{<\omega}$, $M^{\infty}(\sigma)\de=M^{\infty}(\sigma\ast\tau)\de$}\\
\un, & \textrm{otherwise.}
\end{cases}
\]
Note that $\mathtt{DOM}(M^{\infty})$ is a \pz set of strings. On the other hand, 
$\mathtt{DOM}(M^{\ast})$ is merely $\Delta^0_2$
since to decide membership in $\mathtt{DOM}(M^{\ast})$ one has to ask a finite number of
 $\Pi^0_1$ questions, namely a finite number of questions about membership in 
 $\mathtt{DOM}(M^{\infty})$.
Note further that $\mathtt{DOM}(M^{\ast})\subseteq \mathtt{DOM}(M^{\infty})$ and the
corresponding $\Sigma^0_1$ classes are equal, namely
$\dbra{\mathtt{DOM}(M^{\infty})}=\dbra{\mathtt{DOM}(M^{\ast})}$, so that
\[
\mu\Big(\mathtt{DOM}(M^{\ast})\Big)=\mu\Big(\mathtt{DOM}(M^{\infty})\Big)=
\sum_{\sigma\in \mathtt{DOM}(M^{\ast})} 2^{-|\sigma|}.
\]
The transition from $M^{\infty}$ to $M^{\ast}$ is  trivial, so we will mainly work with
$M^{\infty}$ in the following. However, the results we present concerning 
$M^{\infty}$ also apply to $M^{\ast}$.
Universality for $M^{\infty}$ is defined as in most machine models.
Note that there is an effective list of all partial computable functions
$M: 2^{<\omega}\times\Nat\to2^{<\omega}$ with the properties (a)-(c) above. Such a list induces
an effective list $(M_e^{\infty})$ of all infinitary self-delimiting machines.
The relation $\simeq$ denotes the fact that either the expressions on either side of it are undefined
(or do not halt) or both of these expressions are defined and are equal. 

\begin{defi}[Universal infinitary self-delimiting machines]
Given an effective list $(M^{\infty}_e)$ of all infinitary self-delimiting machines, an 
infinitary self-delimiting machine $U^{\infty}$ is universal if 
there exists a computable function $e\mapsto\sigma_e$ such that
$U^{\infty}(\sigma_e\ast\tau)\simeq M^{\infty}_e(\tau)$
for all $e, \tau$.
\end{defi}

Theorem \ref{7MGSCRK2Yz} 
concerns the question of whether the outcome of a universal 
infinitary self-delimiting machine that runs on a {\em random} input is finite or
infinite. 
We conclude the present section with an analysis of these outcomes from a complexity point of view.
This discussion will be the basis for the arguments of Section \ref{geZv34fjP6}.
Given an infinitary self-delimiting machine $M^{\infty}$  we define:
\begin{itemize}
\item $\mathtt{INF}(M^{\infty})$ is the set of strings in $\mathtt{DOM}(M^{\infty})$ such that 
$M^{\infty}(\sigma)$ is a stream;
\item $\mathtt{FIN}(M^{\infty})=\mathtt{DOM}(M^{\infty})-\mathtt{INF}(M^{\infty})$.
\end{itemize}

We note that $\mathtt{INF}(M^{\infty})$
is a $\Pi^0_2$ set because to determine membership in $\mathtt{INF}(M^{\infty})$
it is enough to first decide membership
$\mathtt{DOM}(M^{\infty})$ and then 
ask a $\Pi^0_2$ question about whether the length of the output
is infinite or not. Similarly,  
 $\mathtt{FIN}(M^{\infty})$ is a $\Sigma^0_2$ set.
Moreover, by standard manipulations of quantifiers we have that
 $\mathtt{INF}(M^{\ast})$
is a $\Pi^0_2$ set and $\mathtt{FIN}(M^{\ast})$ is a $\Sigma^0_2$ set.
The reader can easily verify that these complexity bounds are the best possible in general
and exhibit machines such that these sets are complete for the arithmetical classes that they belong to.
Similar remarks can be made about the classes of reals represented by these sets of strings
(recall the notation $\dbra{S}$ for a set of strings $S$ from Section \ref{pm7oworwdR}).
For example, $\dbra{\mathtt{DOM}(M^{\infty})}$ is not only an open set but also a $\Sigma^0_1(\emptyset')$
class, and the same is true of $\dbra{\mathtt{FIN}(M^{\infty})}$, which of course equals 
$\dbra{\mathtt{FIN}(M^{\ast})}$.
Moreover $\dbra{\mathtt{INF}(M^{\infty})}$
is also an open set but merely a $\Sigma^0_1(\emptyset'')$ class since membership in
$\mathtt{INF}(M^{\infty})$ requires oracle $\emptyset''$.
Also note that $\dbra{\mathtt{INF}(M^{\infty})}$ is the difference of two 
$\Sigma^0_1(\emptyset')$ classes, namely 
$\dbra{\mathtt{DOM}(M^{\infty})}-\dbra{\mathtt{FIN}(M^{\infty})}$.
Similar remarks apply to the measures of these sets.
For example, 
the measures of $\mathtt{DOM}(M^{\infty})$, $\mathtt{FIN}(M^{\ast})$ are \lcep reals
while the measure of $\mathtt{INF}(M^{\infty})$ is the difference of two \lcep reals.
Let $\cofin{M^{\infty}}$ consist of the strings in 
$\dom{M^{\infty}}$ such that $M^{\infty}(\sigma)$
is equal to $\tau\ast 1^{\omega}$ for some string $\tau$.

\subsection{Randomness of finiteness probability}
In this section we prove the part of Theorem \ref{7MGSCRK2Yz} which refers to infinitary
self-delimiting machines. 
We emphasize that this is the content of Becher, Daicz, and Chaitin \cite{firstBC}. Our proof
here serves one main purpose: it facilitates the proof of the converse, which is one of our results
regarding infinitary self-delimiting machines. In addition, our proof is different and more concise than
in \cite{firstBC}, being based on concepts of definability rather than initial segment complexity.

Due to the nature of this model, we are not able to prove
Lemma \ref{w1HmrXQxym} for infinitary
self-delimiting machines. Instead, we show a slightly weaker version, namely that
given any upward closed $\Sigma^0_2$ set $U$ of strings
we can define a machine $M^{\infty}$ such that for each string $\sigma$ we have $M^{\infty}(\sigma)\de$
if and only if $\sigma$ has a (finite) covering consisting of strings in $U$. 
In terms of classes of reals, this means
that $M^{\infty}(\sigma)\de$ if and only if $\sigma\in \dbra{U}$.
In the following we let $\mathtt{EXT}(\sigma,n)$ denote the set of extensions of string $\sigma$
of length $n$.

\begin{lem}\label{tU86O3w2du}
Given an upward closed $\Sigma^0_2$ set $U$ of strings there exists a 
machine $M$ which never prints anything on its tape and such that, for all $\sigma$, 
%\begin{eqnarray*}
\begin{enumerate}[\hspace{1cm}(i)]
\item $\sigma\in U\Rightarrow \exists n\geq |\sigma|\ \forall \tau\in 
\mathtt{EXT}(\sigma,n),\ \tau\in  \mathtt{DOM}(M^{\infty})$
\item $\sigma\in \mathtt{DOM}(M^{\infty})\Rightarrow  \sigma\in U$.
\end{enumerate}
%\end{eqnarray*}
	
%Moreover, if $U$ is \pf then
%the same relations hold for $\mathtt{DOM}(M^{\ast})$ in place of 
%$\mathtt{DOM}(M^{\infty})$.
\end{lem}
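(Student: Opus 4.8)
The plan is to build $M$ from the canonical $\Sigma^0_2$ approximation $(V_s)$ to $U$ (see Section~\ref{pm7oworwdR}), with upward closures $\mathbf{V}_s$, using the \emph{length} of a string as the stage from which we start trusting the approximation. Concretely, I would set
\[
M(\sigma,n)\simeq\begin{cases}\text{the empty string} & \text{if } \sigma\in\mathbf{V}_s \text{ for all } s \text{ with } |\sigma|\le s\le n,\\ \un & \text{otherwise.}\end{cases}
\]
Since the output is always empty, $M$ never prints anything, and $M^\infty(\sigma)$ is finite whenever defined. First I would check that $M$ is a genuine infinitary self-delimiting machine, i.e. that it satisfies conditions (a)--(c) of Section~\ref{ELZgFIxGxP}. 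Condition (c) is immediate, since the window $[|\sigma|,n]$ is finite and membership in $\mathbf{V}_s$ is decidable; condition (a) holds because shrinking $n$ only shrinks the window; and condition (b) is the one place where the structure of the approximation enters: if $M(\sigma,n)\de$ then $\sigma\in\mathbf{V}_s$ for every $s\in[|\sigma|,n]\supseteq[|\sigma\ast\tau|,n]$, and since each $\mathbf{V}_s$ is upward closed this passes to $\sigma\ast\tau$, giving $M(\sigma\ast\tau,n)\de$ with the same (empty) value.

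Unwinding the definition, $M^\infty(\sigma)\de$ holds exactly when $\sigma\in\mathbf{V}_s$ for all $s\ge|\sigma|$; call this set $D$, so $\mathtt{DOM}(M^\infty)=D$. The set $D$ is $\Pi^0_1$ by inspection and upward closed because the $\mathbf{V}_s$ are. It then remains to verify (i) and (ii) against the defining property of the canonical approximation, namely that $\sigma\in U$ iff there is some $s_0$ with $\sigma\in\mathbf{V}_s$ for all $s>s_0$. For (ii): if $\sigma\in D$ then $\sigma\in\mathbf{V}_s$ for all $s\ge|\sigma|$, in particular for all $s>|\sigma|$, so $\sigma\in U$. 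For (i): given $\sigma\in U$, fix $s_0$ with $\sigma\in\mathbf{V}_s$ for all $s>s_0$ and put $n:=\max(|\sigma|,s_0+1)\ge|\sigma|$; then any $\tau\in\mathtt{EXT}(\sigma,n)$ satisfies, for every $s\ge|\tau|=n>s_0$, that $\sigma\in\mathbf{V}_s$ and hence $\tau\in\mathbf{V}_s$ by upward closure, so $\tau\in D=\mathtt{DOM}(M^\infty)$, which is exactly what (i) asserts.

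The main obstacle, such as it is, is choosing the threshold so that all the constraints hold at once: it must grow without bound along every branch, so that a sufficiently long extension of any $\sigma\in U$ falls into the domain (this forces (i)), while the induced domain must stay $\Pi^0_1$ and upward closed and the machine must still obey the $\sigma$-monotonicity condition (b). Taking the threshold to be $|\sigma|$ threads this needle, since $|\sigma|\to\infty$ along branches whereas the window $[|\sigma|,n]$ only shrinks under extension of $\sigma$. It is also worth recording \emph{why} only this weaker form of Lemma~\ref{w1HmrXQxym} is available: $\mathtt{DOM}(M^\infty)$ is necessarily $\Pi^0_1$, whereas the set of strings $\sigma$ with $\dbra{\sigma}\subseteq\dbra{U}$ is in general properly $\Sigma^0_2$, so one cannot hope for the clean equivalence at the level of strings, only at the level of the induced classes of reals --- and (i) together with (ii) is precisely what yields $\dbra{\mathtt{DOM}(M^\infty)}=\dbra{U}$.
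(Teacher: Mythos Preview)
Your proposal is correct and is essentially the same construction as the paper's: the paper builds $M$ stage by stage and then extracts as its key property (2) that $M(\tau,s)\de$ iff $\tau$ lies in the upward closure of the approximation at every stage $t\in[|\tau|,s]$, which is precisely your closed-form definition. Your presentation is arguably cleaner because you write down this closed form at the outset rather than deriving it from a stagewise construction, but the machine, the verification of (a)--(c), and the proofs of (i) and (ii) are the same.
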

\begin{proof}
Let $(U_s)$ be a canonical
$\Sigma^0_2$ approximation of $U$, as defined in
Section \ref{pm7oworwdR}.
We define $(\sigma,n)\mapsto M(\sigma,n)$ computably in stages.
Let $M(\sigma,0)$ be defined for all $\sigma$ and equal to the empty string.
At stage $s+1$ we first define $M(\sigma,s+1)$ for all $\sigma$ of length at most
$s$ and then define $M(\tau,s+1)$ for all $\tau$ of length $s+1$.
Therefore by the end of stage $s$, we have determined exactly all $M(\tau,t)$
for all $\tau$ of length at most $s$ and all $t\leq s$. Let $s\in\Nat$ and suppose 
inductively that the first $s$ stages of the construction have been completed. At
stage $s+1$, and each string $\sigma$ of length at most $s$ we first check
if $M(\sigma,s)\un$ and in that case we let $M(\sigma,s+1)\un$. Otherwise we check if
there exists a prefix of $\sigma$ in  $U_{s+1}$. In this case we let $M(\sigma,s+1)\de$ and equal to
the empty string. Otherwise we let $M(\sigma,s)\un$. For each string $\tau$ of length $s+1$ 
we perform a similar check.
We check if $\tau$ has a prefix in  $U_{s+1}$, and in that case we define
$M(\tau,t)\de=\lambda$ for each $t\leq s+1$, where $\lambda$ denotes
the empty string. Otherwise we let
$M(\tau,t)\un$ for all $t\leq s+1$. This completes stages $s+1$ and the inductive definition of $M$.

By a straightforward induction on the steps of the construction we have
\begin{enumerate}[\hspace{0.5cm}(1)]
\item $M$ has the properties (a)-(c) of Section \ref{ELZgFIxGxP}
and it never prints anything on the output tape;
\item for each $\tau,s$ we have $M(\tau,s)\de$ if and only if 
for each $|\tau|\leq t\leq s$
there exists $\rho\preceq\tau$
such that $\rho\in U_t$.
\end{enumerate}
We can now use property (2) in order to establish property (i) of the lemma.
Suppose that $\sigma\in U$. Then there exists a least stage $s$ such that for all $t\geq s$ 
we have $\sigma\in U_t$. Then by property (2), for all strings $\tau\in\extp{\sigma,s}$
we have $M(\tau,n)\de$ for all $n\in\Nat$. In other words, if $\sigma\in U$ then there exists some
$s$ such that for all  $\tau\in\extp{\sigma,s}$ we have $M^{\infty}(\tau)\de$.
For (ii), assume that  $\sigma\not\in U$. Then there certainly exists $s$ such that $\sigma \notin U_{s+1}$ and 
hence $\tau \notin U_{s+1}$ for any prefix $\tau$ of $\sigma$. It follows from the definition of $M$ that $M(\sigma,s+1) \uparrow$ 
and therefore $\sigma \notin \mathtt{DOM}(M^{\infty})$. This   completes the proof of the lemma.
\end{proof}

Becher, Daicz, and Chaitin \cite{firstBC} called a program $\sigma$ of $M^{\infty}$
{\em circular} if $M^{\infty}(\sigma)\de$ and $M^{\infty}(\sigma)$ is finite, \ie a string.
Note that all the programs in the domain of $M^{\infty}$ of Lemma \ref{tU86O3w2du} 
are circular.
This shows that the $\Sigma^0_2$ complexity is not hidden specifically in 
$\mathtt{FIN}(M^{\infty})$ but rather in the $\Pi^0_1$ definition of convergence of an infinitary
self-delimiting machine, \ie in the domain $\mathtt{DOM}(M^{\infty})$ itself. In particular, as a direct
consequence of Lemma \ref{tU86O3w2du} we have
\begin{equation}\label{d4p3siEeVR}
\parbox{11cm}{Given an upward closed $\Sigma^0_2$ set $S$ of strings there exists a 
machine $M$ such that $\mathtt{INF}(M^{\infty})$ is empty and
$\dbra{\mathtt{DOM}(M^{\infty})}=\dbra{S}$.}
\end{equation}
We remark that
the proof of  Lemma \ref{tU86O3w2du} can be modified in a straightforward way
so that the constructed machine
$M^{\infty}$ has output $M^{\infty}(\sigma)=0^{\omega}$ whenever $M^{\infty}(\sigma)\de$.
This modified construction shows the dual of \eqref{d4p3siEeVR}, namely that
given an upward closed $\Sigma^0_2$ set $S$ of strings there exists a 
machine $M$ such that $\mathtt{FIN}(M^{\infty})$ is empty and
$\dbra{\mathtt{DOM}(M^{\infty})}=\dbra{S}$. 

The next step is to establish the existence of a machine $M$ such that 
$\mathtt{FIN}(M^{\infty})$ is 2-random.

\begin{lem}\label{mwJDegabOa}
There exists an infinitary self-delimiting machine $M^{\infty}$ such that 
$\mathtt{INF}(M^{\infty})$ is empty and
the measure of 
$\mathtt{DOM}(M^{\infty})=\mathtt{FIN}(M^{\infty})$ is a 2-random \lcep real. 
\end{lem}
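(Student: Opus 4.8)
The plan is to mimic the proof of Lemma~\ref{UhzhB2hWw2}, using the consequence~\eqref{d4p3siEeVR} of Lemma~\ref{tU86O3w2du} in the role that Lemmas~\ref{w1HmrXQxym} and~\ref{qEkDLqgp25} played there. First I would fix a member $U$ of a universal \ml test relative to $\emptyset'$. Such a $U$ is a $\Sigma^0_1(\emptyset')$ class, hence of the form $\dbra{W}$ for some $\emptyset'$-c.e.\ (equivalently, $\Sigma^0_2$) set of strings $W$; replacing $W$ by its upward closure $S$ keeps the set $\Sigma^0_2$ (membership becomes a finite disjunction over the prefixes of a string) and does not change the associated open set, so $S$ is an upward closed $\Sigma^0_2$ set of strings with $\dbra{S}=U$.

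Next I would apply~\eqref{d4p3siEeVR} to $S$, obtaining an infinitary self-delimiting machine $M$ with $\mathtt{INF}(M^{\infty})=\emptyset$ and $\dbra{\mathtt{DOM}(M^{\infty})}=\dbra{S}=U$. Since $\mathtt{INF}(M^{\infty})$ is empty we get $\mathtt{DOM}(M^{\infty})=\mathtt{FIN}(M^{\infty})$, and, by the definition of the measure of a set of strings,
\[
\mu\big(\mathtt{DOM}(M^{\infty})\big)=\mu\big(\mathtt{FIN}(M^{\infty})\big)=\mu\big(\dbra{\mathtt{DOM}(M^{\infty})}\big)=\mu(U).
\]
Two things then remain to be checked about this number. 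For randomness: by the facts recalled in Sections~\ref{pm7oworwdR} and~\ref{lkNQieYqH}, the measure of any member of a universal \ml test relative to $\emptyset'$ is $2$-random, so $\mu(U)$ is $2$-random. For the complexity of its left cut: $\dbra{\mathtt{DOM}(M^{\infty})}$ is the class $\{X\mid \exists\sigma\prec X\ \forall n\ M(\sigma,n)\de\}$, which is $\Sigma^0_2$, so by Lemma~\ref{TLBE5A9gU4} its measure is a $\mathbf{0}'$-left-c.e.\ real (this was also noted directly in Section~\ref{ELZgFIxGxP}). Hence $\mu(\mathtt{DOM}(M^{\infty}))$ is a $2$-random \lcep real, which is what we want.

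Since~\eqref{d4p3siEeVR} already carries out the machine construction, there is essentially no hard step here. The only points that require a little care are the reduction of a $\emptyset'$-relative \ml test member to an upward closed $\Sigma^0_2$ set of strings with the same span, and the verification that the relevant measure is \emph{left}-c.e.\ relative to $\emptyset'$ — which is immediate once one writes $\dbra{\mathtt{DOM}(M^{\infty})}$ in $\Sigma^0_2$ form and invokes Lemma~\ref{TLBE5A9gU4}.
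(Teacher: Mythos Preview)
Your proposal is correct and follows essentially the same approach as the paper: take a member of a universal \ml test relative to $\emptyset'$, represent it as an upward closed $\Sigma^0_2$ set of strings, and apply~\eqref{d4p3siEeVR}. You supply more detail than the paper does (the explicit reduction to an upward closed $\Sigma^0_2$ set of strings and the verification that the measure is \lcep via Lemma~\ref{TLBE5A9gU4}), but the argument is the same.
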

\begin{proof}
Consider the member $S$ of a \ml test relative to $\emptyset'$. This can be represented 
as an upward closed $\Sigma^0_2$ set of strings, and its measure is 2-random.
Therefore the statement follows by a direct application of \eqref{d4p3siEeVR} to $S$.
\end{proof}

Finally we are ready to deduce the `only if' direction of 
the part of Theorem \ref{7MGSCRK2Yz} which refers to infinitary
self-delimiting machines, which was originally proved by
Becher, Daicz, and Chaitin \cite{firstBC}.

\begin{lem}[Universal finiteness probability]
If $U^{\infty}$ is a universal infinitary
self-delimiting machine then $\mathtt{FIN}(M^{\infty})$ is a 2-random \lcep real. 
\end{lem}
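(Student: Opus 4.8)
The plan is to mimic the proof of Lemma \ref{yH3aERIEO3} (the universal totality probability for oracle machines), adapting it to the infinitary self-delimiting setting. First I would invoke Lemma \ref{mwJDegabOa} to obtain an infinitary self-delimiting machine $M^{\infty}$ with $\mathtt{INF}(M^{\infty})$ empty and $\mu(\mathtt{FIN}(M^{\infty}))=\mu(\mathtt{DOM}(M^{\infty}))$ a 2-random \lcep real. Since $U^{\infty}$ is universal, there is a string $\sigma_e$ (a computable coding constant) such that $U^{\infty}(\sigma_e\ast\tau)\simeq M^{\infty}(\tau)$ for all $\tau$.

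Next I would decompose $\mathtt{FIN}(U^{\infty})$ according to whether or not a program has $\sigma_e$ as a prefix. Writing $\rho=\sigma_e$, the set of reals whose prefix-minimal program in $\mathtt{DOM}(U^{\infty})$ extends $\rho$ contributes, by the coding relation and the prefix-free structure of $\mathtt{DOM}(M^{\ast})$, a copy of $\dbra{\mathtt{FIN}(M^{\ast})}$ scaled by $2^{-|\rho|}$; the remaining contribution $P$ comes from programs incompatible with $\rho$. So $\dbra{\mathtt{FIN}(U^{\infty})}=\rho\ast\dbra{\mathtt{FIN}(M^{\ast})}\cup P$ with the union disjoint, hence $\mu(\mathtt{FIN}(U^{\infty}))=2^{-|\rho|}\cdot\mu(\mathtt{FIN}(M^{\infty}))+\mu(P)$. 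By the analysis in Section \ref{ELZgFIxGxP}, $\dbra{\mathtt{FIN}(U^{\infty})}$ and $P$ are $\Sigma^0_1(\emptyset')$ classes, so $\mu(\mathtt{FIN}(U^{\infty}))$ and $\mu(P)$ are \lcep reals by Lemma \ref{TLBE5A9gU4}. Then the Demuth-style closure result cited in Section \ref{lkNQieYqH}, relativized to $\emptyset'$, gives that a 2-random \lcep real plus another \lcep real is again 2-random and \lcep, so $\mu(\mathtt{FIN}(U^{\infty}))$ is a 2-random \lcep real.

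I would phrase the proof so that it refers to $M^{\ast}$ (the prefix-free presentation) when arguing about the disjoint scaled copy, since $\mathtt{DOM}(M^{\ast})$ is prefix-free and the measures of $M^{\ast}$ and $M^{\infty}$ versions coincide, as noted in Section \ref{ELZgFIxGxP}. The main obstacle — minor but worth care — is making precise the decomposition of $\dbra{\mathtt{FIN}(U^{\infty})}$: one must check that the reals counted in the $\rho\ast\dbra{\mathtt{FIN}(M^{\ast})}$ piece are exactly those whose halting in $U^{\infty}$ is \emph{witnessed} through the code $\rho$ and that the complementary piece $P$ is genuinely a $\Sigma^0_1(\emptyset')$ class with no double-counting, which is where the prefix-freeness of the domain is essential. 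Once that bookkeeping is settled, the randomness conclusion follows exactly as in Lemma \ref{yH3aERIEO3}.
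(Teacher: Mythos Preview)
Your proposal is correct and follows essentially the same route as the paper's proof: invoke Lemma \ref{mwJDegabOa} to get $M^{\infty}$, use the universal coding prefix to split $\dbra{\mathtt{FIN}(U^{\infty})}$ into a scaled copy of $\dbra{\mathtt{FIN}(M^{\infty})}$ and a remainder $P$ with \lcep measure, then apply the relativized Demuth closure from Section \ref{lkNQieYqH}. Your added care about passing to $M^{\ast}$ and the prefix-freeness bookkeeping is slightly more explicit than the paper's treatment, but it is the same argument.
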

\begin{proof}
Consider the machine $M^{\infty}$ of Lemma \ref{mwJDegabOa}, so that 
$\mu(\mathtt{FIN}(M))$ is a 2-random \lcep real. Since $U^{\infty}$ is universal, there exists
a string $\tau$ such that for all strings $\sigma$ we have 
$M^{\infty}(\sigma)\simeq U^{\infty}(\tau\ast\sigma)$.
We have
\[
\mathtt{FIN}(U^{\infty})=\tau\ast\mathtt{FIN}(M^{\infty})\cup 
\Big(\mathtt{FIN}(U^{\infty})\cap (2^{\omega}-\dbra{\tau}])\Big)
\hspace{0.2cm}\textrm{and} \hspace{0.2cm}
\tau\ast\mathtt{FIN}(M^{\infty})\cap \Big(\mathtt{FIN}(U^{\infty})\cap 
(2^{\omega}-\dbra{\tau})\Big)=\emptyset 
\]
Let $P=\mathtt{FIN}(U)\cap (2^{\omega}-\dbra{\tau})$ and note that this is a $\Sigma^0_2$ class, so
$\mu(P)$ is a \lcep real by Lemma \ref{TLBE5A9gU4}.
So $\mu(\mathtt{FIN}(U))=2^{-|\tau|}\cdot \mu(\mathtt{FIN}(M))+\mu(P)$ 
is a 2-random \lcep real as the sum of a
2-random \lcep real  and another \lcep real.
\end{proof}

\subsection{From random numbers to the finiteness probability}
In this section we follow the methodology which were 
developed in Section \ref{VvLYmV4IPS} for a different machine model.
We see that this is straightforward, given Lemma \ref{tU86O3w2du}
and the fact that $\dom{M^{\infty}}$ and $\fin{M^{\infty}}$ are  
$\Sigma^0_2$ sets.

\begin{lem}\label{izUAz38qL6}
If $\alpha<1$ is a \lcep real and $2^{-c}<1-\alpha$, 
then there exists an infinitary self-delimiting machine 
$M^{\infty}$ and a string $\rho$ of length $c$ such that 
$M^{\infty}(\sigma)$ is undefined for any string $\sigma$ which is a prefix or a suffix of $\rho$,
$\dom{M^{\infty}}=\fin{M^{\infty}}$, and the measure of the domain of $M^{\infty}$ is  
$\alpha$.
\end{lem}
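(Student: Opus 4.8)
The plan is to mimic the strategy of Lemmas~\ref{DwQkO1LCfh} and \ref{DODxKGG3H6}, but adapted to the infinitary self-delimiting model via Lemma~\ref{tU86O3w2du} (equivalently \eqref{d4p3siEeVR}). The one structural difference to keep in mind is that here the measure of the \emph{domain} — rather than of the complement of a coded open set — is what must equal $\alpha$, so the coding is done ``positively'': I shall exhibit an upward closed $\Sigma^0_2$ set $U$ of strings with $\mu(\dbra{U})=\alpha$ and such that no string of $U$ is compatible with the designated string $\rho$ of length $c$, and then feed $U$ into the construction of Lemma~\ref{tU86O3w2du}.

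First I would produce the set $U$. If $\alpha=0$ the statement is trivial (take $M$ nowhere defined), so assume $\alpha\in(0,1)$. Since $\alpha$ is a \lcep real there is a $\emptyset'$-computable sequence of positive integers $(b_i)$ with $\alpha=\sum_i 2^{-b_i}$, and by hypothesis $2^{-c}+\sum_i 2^{-b_i}=2^{-c}+\alpha<1$. Applying the Kraft-Chaitin theorem relative to $\emptyset'$ (exactly as in the proof of Lemma~\ref{PMFeQmge4u}) to the request sequence consisting of one request of length $c$ followed by requests of lengths $b_0,b_1,\dots$, I obtain a $\emptyset'$-computable prefix-free sequence of strings $\rho=\tau_0,\tau_1,\tau_2,\dots$ with $|\tau_0|=c$ and $|\tau_{i+1}|=b_i$. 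Set $T:=\{\tau_{i+1}\ |\ i\in\Nat\}$; then $T$ is a $\Sigma^0_2$ \pf set, $\mu(\dbra{T})=\alpha$, and since $T\cup\{\rho\}$ is prefix-free every string of $T$ is incompatible with $\rho$. Let $U$ be the upward closure of $T$, an upward closed $\Sigma^0_2$ set with $\dbra{U}=\dbra{T}$; a short argument using upward closure shows that still no string of $U$ is compatible with $\rho$, since a string both compatible with $\rho$ and extending some $\tau\in T$ would force $\tau$ itself to be compatible with $\rho$.

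Then I would apply Lemma~\ref{tU86O3w2du} to $U$. This yields an infinitary self-delimiting machine $M$ which never prints on its output tape — so $\mathtt{INF}(M^{\infty})=\emptyset$ and hence $\dom{M^{\infty}}=\fin{M^{\infty}}$ — with $\dbra{\mathtt{DOM}(M^{\infty})}=\dbra{U}$ and, by clause~(ii) of that lemma, $\mathtt{DOM}(M^{\infty})\subseteq U$. Consequently $\mu(\dom{M^{\infty}})=\mu(\dbra{U})=\mu(\dbra{T})=\alpha$, and since $U$ contains no string compatible with $\rho$, neither does $\mathtt{DOM}(M^{\infty})$; that is, $M^{\infty}(\sigma)\un$ for every $\sigma$ that is a prefix or an extension of $\rho$. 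This gives all the required properties.

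I expect no serious obstacle: the argument is a routine transfer of the relativized Kraft-Chaitin coding used earlier into the infinitary self-delimiting setting. The only point demanding care is that the reserved cone $\dbra{\rho}$ must be carved out of the coded set $U$ (rather than out of its complement, as in Lemma~\ref{DwQkO1LCfh}), so that the inclusion $\mathtt{DOM}(M^{\infty})\subseteq U$ automatically keeps $M^{\infty}$ undefined on strings compatible with $\rho$; and one should check the measure bookkeeping, namely that the \lcep representation $\alpha=\sum_i 2^{-b_i}$ is exact, so that $\mu(\dbra{T})$ equals $\alpha$ on the nose.
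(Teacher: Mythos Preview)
Your proposal is correct and follows essentially the same route as the paper: apply Kraft--Chaitin relative to $\emptyset'$ to the weight sequence $2^{-c},2^{-b_0},2^{-b_1},\dots$ to carve out a \pf $\Sigma^0_2$ set $T$ of measure $\alpha$ that is incompatible with the reserved string $\rho$, and then feed (the upward closure of) $T$ into Lemma~\ref{tU86O3w2du}. The only cosmetic difference is that the paper secures $M^{\infty}(\sigma)\un$ for $\sigma$ compatible with $\rho$ by choosing a canonical $\Sigma^0_2$ approximation that never contains such strings, whereas you obtain the same conclusion more directly from clause~(ii) of Lemma~\ref{tU86O3w2du}, namely $\dom{M^{\infty}}\subseteq U$; your explicit passage to the upward closure is also a small improvement in precision over the paper's phrasing.
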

\begin{proof}
Given $\alpha$ and $c$, let $(\alpha_s)$ be a $\emptyset'$-computable 
increasing sequence of rationals converging to $\alpha$ with $\alpha_0=0$. 
We apply the  Kraft-Chaitin algorithm with sequence of requests
$2^{-c}$, $\alpha_1-\alpha_0$, $\alpha_2-\alpha_1, \dots$ and we get
a $\emptyset'$-computable enumeration of a \pf set of strings whose measure is
$2^{-c}+\alpha$. Let $\rho$ be the first of these strings and let $U$ contain the rest of them.
Then we may choose a canonical $\Sigma^0_2$ approximation $(U_s)$ to $U$ such that no string that
is compatible with $\rho$ belongs to $U_s$ for any $s$. Then we can
apply Lemma \ref{tU86O3w2du} to $U$ with this specific 
$\Sigma^0_2$ approximation $(U_s)$ 
and we obtain a machine $M$ such that
$M^{\infty}(\sigma)$ is not defined for any string $\sigma$ compatible with $\rho$.
Moreover we get that the measure of $\dom{M^{\infty}}$ is equal to the measure of 
$U$, which is $\alpha$.
\end{proof}

We are now ready to deduce the `if' direction of 
the part of Theorem \ref{7MGSCRK2Yz} which refers to infinitary
self-delimiting machines. Note that this is the converse of the result of
Becher, Daicz, and Chaitin \cite{firstBC}.

\begin{lem}
Let $\alpha$ be a 2-random \lcep real. Then there exists
a universal infinitary self-delimiting machine $M^{\infty}$ 
such that $\mu(\dom{M^{\infty}})=\alpha$.
\end{lem}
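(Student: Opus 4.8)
The plan is to mimic the proof of Lemma~\ref{aw4QkldwOa}, with Lemma~\ref{izUAz38qL6} playing the role that Lemma~\ref{DwQkO1LCfh} played there. We may assume $\alpha\in(0,1)$, in keeping with the statement of Theorem~\ref{7MGSCRK2Yz}. First I would fix a universal infinitary self-delimiting machine $V^{\infty}$ and set $\gamma=\mu\big(\dom{V^{\infty}}\big)$; by the remarks in Section~\ref{ELZgFIxGxP} this is a \lcep real. Since $\alpha$ is $2$-random and \lcep and $\gamma$ is \lcep, \eqref{5zlG2HCdb} (taken with $n=2$) supplies some $c_0$ such that $\alpha-2^{-c}\cdot\gamma$ is a \lcep real for every $c>c_0$. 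I would then pick $c>c_0$ large enough that in addition $2^{-c}<\alpha$ and $2^{-c}<1-\alpha$, and set $\beta:=\alpha-2^{-c}\cdot\gamma$. Then $\beta$ is a \lcep real with $0<\beta<1$ and $2^{-c}<1-\alpha\le 1-\beta$, so Lemma~\ref{izUAz38qL6} applies to $\beta$ and $c$: it yields an infinitary self-delimiting machine $N^{\infty}$ and a string $\rho$ of length $c$ such that $\dom{N^{\infty}}=\fin{N^{\infty}}$, such that $N^{\infty}(\sigma)\un$ for every string $\sigma$ compatible with $\rho$, and such that $\mu\big(\dom{N^{\infty}}\big)=\beta$; inspecting that proof we may moreover take $\rho=0^{c}$.

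Next I would splice $V^{\infty}$ below $\rho$ into $N^{\infty}$ to form $M^{\infty}$. At the level of the underlying partial computable functions this means $M(\rho\ast\tau,n)\simeq V(\tau,n)$ for all $\tau,n$, while $M(\sigma,n)\simeq N(\sigma,n)$ whenever $\sigma$ is incompatible with $\rho$ or is a proper prefix of $\rho$ (so that every proper prefix of $\rho$ stays outside the domain). Since the conditions ``$\sigma$ is incompatible with $\rho$'' and ``$\sigma$ extends $\rho$'' are each closed under passing to extensions, and since $V$ and $N$ individually satisfy conditions (a)-(c) of Section~\ref{ELZgFIxGxP}, the function $M$ satisfies (a)-(c) as well; the only place where this is not completely immediate is the join between the proper prefixes of $\rho$ and $\rho$ itself, and this is handled by first normalizing $V$ and $N$ so that their value at argument $0$ is the empty string on every input, a change that affects neither the associated $\infty$-machine nor its universality. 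Thus $M^{\infty}$ is a legitimate infinitary self-delimiting machine, and it inherits universality from $V^{\infty}$: if $e\mapsto\sigma_e$ witnesses the universality of $V^{\infty}$, then $e\mapsto\rho\ast\sigma_e$ witnesses that of $M^{\infty}$, with $\{\rho\ast\sigma_e\ |\ e\in\Nat\}$ still prefix-free.

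Finally I would compute the measure. A string lies in $\dom{M^{\infty}}$ exactly when it has the form $\rho\ast\tau$ with $\tau\in\dom{V^{\infty}}$, or it is incompatible with $\rho$ and lies in $\dom{N^{\infty}}$; since $N^{\infty}$ diverges on every string compatible with $\rho$, the strings of the second kind are precisely the elements of $\dom{N^{\infty}}$. Hence $\dom{M^{\infty}}=\big(\rho\ast\dom{V^{\infty}}\big)\cup\dom{N^{\infty}}$, a disjoint union whose two pieces also generate disjoint classes of reals, so that
\[
\mu\big(\dom{M^{\infty}}\big)=2^{-|\rho|}\cdot\mu\big(\dom{V^{\infty}}\big)+\mu\big(\dom{N^{\infty}}\big)=2^{-c}\cdot\gamma+\beta=\alpha,
\]
as desired. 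All of the substance of the argument is already contained in Lemma~\ref{izUAz38qL6} and in \eqref{5zlG2HCdb}; the single point that genuinely requires care here is verifying that the spliced $M$ still obeys conditions (a)-(c) at the boundary string $\rho$, and that is the only real obstacle. Everything else is a routine repetition of the bookkeeping already performed for oracle machines in Lemma~\ref{aw4QkldwOa}.
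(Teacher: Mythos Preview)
Your proof is correct and follows essentially the same approach as the paper: fix a universal $V^{\infty}$ with $\gamma=\mu(\dom{V^{\infty}})$, use \eqref{5zlG2HCdb} to make $\beta=\alpha-2^{-c}\gamma$ \lcep with $\beta+2^{-c}<1$, invoke Lemma~\ref{izUAz38qL6} to obtain $N^{\infty}$ of domain-measure $\beta$ together with a reserved string $\rho$ of length $c$, and splice $V^{\infty}$ under $\rho$ into $N^{\infty}$. Your explicit verification that the spliced underlying function still satisfies conditions (a)--(c) at the boundary string $\rho$, via the harmless normalization $V(\cdot,0)=N(\cdot,0)=\lambda$, is in fact more careful than the paper's version, which simply asserts the construction (and, as written, interchanges the roles of $V$ and $N$ in the definition of $M^{\infty}$).
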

\begin{proof}
Let $V^{\infty}$ be a universal infinitary self-delimiting machine 
and let $\gamma=\mu(\dom{V^{\infty}})$. 
Then  $\gamma$ is a \lcep real.
Let $c$ be a constant such that the real $\beta=\alpha - 2^{-c}\gamma$ is a \lcep real
and $\beta+2^{-c}$ is less than 1.
By Lemma \ref{izUAz38qL6}, consider an 
infinitary self-delimiting machine $N^{\infty}$ and a string $\rho$
of length $c$ such that 
the measure of $\dom{N^{\infty}}$ is $\beta$ and $\dom{N^{\infty}}$ does not contain any
string which is compatible with $\rho$. Define an 
infinitary self-delimiting machine $M^{\infty}$ as follows.
For each string $\sigma$ which is incompatible with $\rho$ let $M^{\infty}(\sigma)\simeq V(\sigma)$.
Moreover for each $\tau$ let $M^{\infty}(\rho\ast\tau)\simeq N(\tau)$. Then
\[
\dom{M^{\infty}}=\dom{V^{\infty}}\cup \rho\ast\dom{N^{\infty}}
\hspace{0.5cm}\textrm{and}\hspace{0.5cm}
\dom{V^{\infty}}\cap \rho\ast\dom{N^{\infty}}=\emptyset
\]
and so
\[
\mu(\dom{M^{\infty}})=\mu(\dom{V^{\infty}}) + 2^{-|\rho|}\cdot\mu(\dom{N^{\infty}})
=\beta+2^{-c}\cdot \gamma =\alpha
\]
which concludes the argument.
\end{proof}

\subsection{Restricted infinite models and higher randomness restrictions}
The model of infinitary self-delimiting machines that we described in Section 
\ref{ELZgFIxGxP} was introduced in 
Chaitin \cite{CHAITIN1976233} and
studied in Becher, Daicz and Chaitin \cite{firstBC}.
Although one can exhibit 2-random probabilities in this model, there is a fundamental reason
why higher randomness is not attainable as the measure of a subset of the domain of such a machine.
The reason for this is that the domain can be seen as a \pf $\Delta^0_2$ set of strings.
The following lemma can be used in order to give a formal proof of this fact.

\begin{lem}\label{SoFOwQ5ys}
There is no \ce \pf set of strings that contains a 
$\Sigma^0_1(\emptyset')$ subset of 2-random measure.
More generally, 
for each $n\in\Nat$ and any  $\Sigma^0_{n+1}$ \pf set of strings there is no 
$\Sigma^0_{n+2}$ subset of this set which has $(n+2)$-random measure.
%Let $n\in\Nat$. Given any  $\Sigma^0_1(\emptyset^{(n)})$ \pf set of strings there is no 
%$\Sigma^0_1(\emptyset^{(n+1)})$ subset of this set which has $(n+2)$-random measure.
\end{lem}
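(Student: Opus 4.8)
The plan is to reduce the general statement to the case $n=0$ and then prove that case by writing $\mu(Q)$ as a difference of two (genuine, oracle‑free) left‑c.e.\ reals and invoking a dichotomy of Rettinger and Zheng. For the reduction, recall that by Post's theorem a $\Sigma^0_{n+1}$ set of strings is exactly a $\mathbf 0^{(n)}$-c.e.\ set, a $\Sigma^0_{n+2}$ set is exactly a $\Sigma^0_2(\mathbf 0^{(n)})$ set, and $(n+2)$-randomness is $2$-randomness relative to $\mathbf 0^{(n)}$; so the general statement is the $n=0$ statement relativized to $\mathbf 0^{(n)}$, and it suffices to prove the $n=0$ statement in a form that relativizes, which it will. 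Thus we may assume $P$ is a c.e.\ prefix-free set, $Q\subseteq P$ is $\Sigma^0_2$, and we suppose toward a contradiction that $\mu(Q)$ is $2$-random.

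The starting point is the decomposition $\mu(Q)=\mu(P)-\mu(P\setminus Q)$. The crucial feature is that $\mu(P)=\sum_{\sigma\in P}2^{-|\sigma|}$ is an honest left-c.e.\ real — not merely a $\mathbf 0'$-left-c.e.\ real — because $P$ is enumerated without an oracle and, being prefix-free, satisfies $\mu(P)\le 1$; fix a computable increasing rational sequence $p_s\nearrow\mu(P)$. By Lemma \ref{TLBE5A9gU4}, $\mu(Q)$ is $\mathbf 0'$-left-c.e.\ and $\mu(P\setminus Q)$ is $\mathbf 0'$-right-c.e. The naive approach—using $\mathbf 0'$ to compute $\mu(P)$, choosing $s$ with $\mu(P)-\mu(P_s)<2^{-k}$, and determining the finite set $Q\cap P_s$—only produces a $\mathbf 0''$-test and so only shows $\mu(Q)$ is not $3$-random, because deciding $Q\cap P_s$ needs $\mathbf 0''$. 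The point of the lemma is that this extra level is illusory, and the reason is the c.e.-ness of $P$: if $(B_s)$ is a computable approximation with $Q=\liminf_sB_s$ and $B_s\subseteq P_s$, then each $Q_t:=\{\sigma\in P:\sigma\in B_{s'}\text{ for all }s'\ge t\}$ is a $\Pi^0_1$ subset of $P$, so $P\setminus Q_t$ is c.e.\ and $\mu(P\setminus Q_t)$ is an honest left-c.e.\ real, uniformly in $t$, decreasing to $\mu(P\setminus Q)$. Tracking the associated rational approximations and using the prefix-free bound $\sum_{\sigma\in P}2^{-|\sigma|}\le 1$ to cap the cumulative mass movement (each string is permanently rejected once it fails a check), one obtains a computable approximation of $\mu(P\setminus Q)$ of finite total variation; hence $\mu(P\setminus Q)$, and therefore $\mu(Q)=\mu(P)-\mu(P\setminus Q)$, is a difference of two left-c.e.\ reals. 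By the Rettinger–Zheng dichotomy (see \cite{Zheng2004,Rettinger2005} or \cite[Theorem 9.2.4]{rodenisbook}) such a real is left-c.e., right-c.e., or not $1$-random; in all three cases it fails to be $2$-random (a left-c.e.\ or right-c.e.\ real is computable from $\emptyset'$, hence $\Delta^0_2$, hence not $2$-random; and not-$1$-random certainly implies not-$2$-random), which is the desired contradiction.

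The step I expect to be the main obstacle is precisely the claim that $\mu(P\setminus Q)$ (equivalently $\mu(Q)$) is a difference of left-c.e.\ reals: converting the evident $\mathbf 0''$-level bookkeeping into an honest difference-of-left-c.e.-reals presentation. The heart of it is that, although $P\setminus Q$ is only $\Pi^0_2$, it sits inside the c.e.\ set $P$ and its measure is the decreasing limit of the uniformly left-c.e.\ reals $\mu(P\setminus Q_t)$, while the prefix-free constraint bounds the total amount of ``mass'' that can enter and leave across the whole double approximation; making this bounded-variation argument precise (and checking that $Q$ can be taken inside $P$ and that the uniformities align for Rettinger–Zheng) is the technical core. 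That prefix-freeness is genuinely needed is seen from $\Omega^{\emptyset'}$: it is a $2$-random $\mathbf 0'$-left-c.e.\ real equal to the measure of a $\Sigma^0_2$ prefix-free set of strings, so this set cannot be contained in any c.e.\ prefix-free set — exactly the content of the lemma. If the bounded-variation argument proves awkward, a fallback is to argue directly that the assumption ``$\mu(Q)$ is $2$-random'' yields, via the relativized Kučera–Slaman characterization together with the Demuth and Downey–Hirschfeldt–Nies closure properties from Section \ref{lkNQieYqH}, that $\mu(P)$ itself would be $2$-random, contradicting that $\mu(P)$ is left-c.e.
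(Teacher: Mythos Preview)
Your main line of argument has a genuine gap: the claim that $\mu(Q)$ (equivalently $\mu(P\setminus Q)$) is a difference of two unrelativized left-c.e.\ reals is false in general. Take $P=\{0^{2n}1:n\in\Nat\}$, which is c.e.\ and prefix-free, and let $A\subseteq\Nat$ be any properly $\Sigma^0_2$ set (for instance the index set $\mathrm{Fin}$). Set $Q=\{0^{2n}1:n\in A\}$, a $\Sigma^0_2$ subset of $P$. Then $\mu(Q)=\sum_{n\in A}2^{-(2n+1)}$, whose binary expansion has a zero in every even position and $\chi_A(n)$ in position $2n+1$; there are no carries, so $\mu(Q)\equiv_T A$. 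Since $A\notin\Delta^0_2$, the real $\mu(Q)$ is not $\Delta^0_2$ and therefore cannot be d.c.e. The prefix-free bound $\sum_{\sigma\in P}2^{-|\sigma|}\le 1$ does not rescue the bounded-variation argument: it caps the \emph{total mass}, but a single string $\sigma$ may enter and leave the approximating sets $B_s$ unboundedly often, contributing $2^{-|\sigma|}$ to the variation at every flip. Your sets $Q_t$ do give uniformly left-c.e.\ reals $\mu(P\setminus Q_t)$ decreasing to $\mu(P\setminus Q)$, but a decreasing limit of uniformly left-c.e.\ reals is only $\mathbf 0'$-right-c.e.\ in general, not d.c.e.; the counterexample shows this cannot be improved. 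Applying Rettinger--Zheng at the relativized level is vacuous here, since $\mu(Q)$ is already $\mathbf 0'$-left-c.e.

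The paper's proof does use a bounded-variation idea, but carried out \emph{relative to} $\emptyset'$ and aimed at producing a $\emptyset'$-Martin-L\"of test rather than a d.c.e.\ presentation. The key move you are missing is this: because $\mu(P)$ is a genuine left-c.e.\ real, the oracle $\emptyset'$ can, for each $n$, compute a \emph{finite} subset $D_n\subseteq P$ with $\mu(P)-\mu(D_n)<2^{-n-1}$. Now take a $\emptyset'$-enumeration $(\tau_i)$ of $Q$ and place intervals of a fixed tiny length $\delta_n$ around the partial sums $\mu(\{\tau_i:i<s\})$. The union of these intervals has measure at most $(1+|D_n|)\delta_n$ (from the finitely many steps where $\tau_s\in D_n$) plus $\mu(P\setminus D_n)<2^{-n-1}$ (from the remaining steps, whose total mass is small). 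Choosing $\delta_n$ against $|D_n|$ makes the first term at most $2^{-n-1}$, yielding a $\emptyset'$-ML test that covers $\mu(Q)$. So the finiteness that makes the argument work is obtained not from a global bounded-variation estimate but from the $\emptyset'$-computable approximation of $\mu(P)$; this is exactly where the hypothesis that $P$ is c.e.\ (not merely $\Sigma^0_2$) enters. Your fallback via Demuth/DHN does not go through as stated either: writing $\mu(P)=\mu(Q)+\mu(P\setminus Q)$ mixes a $\mathbf 0'$-left-c.e.\ real with a $\mathbf 0'$-\emph{right}-c.e.\ real, and the cited closure properties concern sums of reals on the same side.
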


\begin{proof}
Let $S$ be a c.e.\ \pf set of strings and let $V$ be a $\Sigma^0_1(\emptyset')$ subset of $S$.
We construct a \ml test $(U_n)$ relative to $\emptyset'$ such that $\mu(V) \in\cap_n U_n$.
Note that  $\mu(V)$ is a \lcep real. 
Given $n$ we use $\emptyset'$ to compute a finite subset $D_n$ of $S$
such that $\mu(S)-\mu(D_n)<2^{-n-1}$.
Then let $\epsilon_n$ be a rational which is less than $2^{-|\rho|}$ for any $\rho\in D_n$.
Let $\delta_n=2^{-n-1}\cdot\epsilon_n/(1+\epsilon_n)$ and note that the number $d_n$ of strings in $D_n$ is
at most $\mu(S)/\epsilon_n\leq 1/\epsilon_n$. Moreover note that $\delta_n<\epsilon_n$.
Let $(\tau_i)$ be a $\emptyset'$-computable enumeration of $V$ and let
$V_s=\{\tau_i\ |\ i< s\}$. For each $n,s$ define
\[
J(n,s)=(\mu(V_s), \mu(V_s)+\delta_n)
\]
and let $U_n=\cup_s J(n,s)$. Since the intervals $J(n,s)$ have fixed length $\delta_n$ it follows that
$\mu(V)\in U_n$ for each $n$. Moreover the intervals $J(n,s)$ are uniformly computable in $\emptyset'$
so $(U_n)$ are uniformly $\Sigma^0_1$ in $n$. It remains to verify that $\mu(U_n)\leq 2^{-n}$ for each $n$.

Now $U_n$ may be decomposed as the union of $J(n,0)$ together with $\cup_s (J(n,s+1) \setminus J(n,s)$). We have  
$\mu(J(n,0) = \delta_n$ and, for each $s$, $\mu(J(n,s+1) \setminus J(n,s)) = \min\{\delta_n, 2^{-|\tau_s|}\}$.
The values of $s$ may be considered in two cases. First, there are $d_n$ values for which $\tau_s \in D_n$; together with 
the measure $\delta_n$ from $J(n,0)$, this gives measure $\leq (1+d_n) \delta_n \leq 2^{-n-1}$, since $d_n \leq 1/\epsilon_n$. 
Second, there are all the values of $s$ for which $\tau_s \notin D_n$. Here the total measure is  
$\sum_{\tau_s \in (V \setminus D_n)} 2^{-|\tau_s|} \leq \mu(S) - \mu(D_n) \leq 2^{-n-1}$. It follows that $\mu(U_n) \leq 2^{-n}$. as desired.
Hence $(U_n)$ is a \ml test relative to $\emptyset'$ and
$\mu(V)\in U_n$ for all $n$, which means that $\mu(V)$ is not 2-random.
\end{proof}

Let us discuss the implications of this result. In the following informal discussion 
we implicitly assume that 
{\em probability of a property of a machine} with a notion of a {\em domain} refers to the measure
of a subset of its domain.
By the first clause of 
Lemma \ref{SoFOwQ5ys}, any \pf machine model where the relation of convergence 
(namely the domain) is 
$\Sigma^0_1$  cannot exhibit properties whose probability is a second-order $\Omega$ number,
\ie a 2-random \lcep real. This holds because the measure of a $\Sigma^0_2$ \pf set of strings is
always a \lcep real.
Chaitin's model for infinite computations, as we discussed, can be seen as a higher type \pf machine,
where the domain is a \pz \pf set of strings. This non-standard infinitary feature allowed the presence of
properties that occur with 2-random probability, as we saw in the previous sections. However
the second clause of Lemma \ref{SoFOwQ5ys} for $n=1$ says that this model cannot exhibit
properties whose probability is a 3-random \lcepp real. Indeed, convergence in this model is \pz and so it is
$\Sigma^0_2$.
This observation gives a limit to the complexity of relative $\Omega$ numbers we can exhibit in this model, this limit
being $\Omega$ numbers of the second level, \ie relative to the second iteration of the halting problem. 

Let $\cofin{M^{\infty}}$ be the set of $\sigma$ in the domain of $M^{\infty}$ such that 
$M^{\infty}(\sigma)$ is a stream with a tail of 1s.
Note that $\cofin{M^{\infty}}$ is a $\Sigma^0_3$ set of strings, and it is not hard to see
that it is  $\Sigma^0_3$-complete whenever $M^{\infty}$ is universal.
In particular, the measure of 
$\cofin{M^{\infty}}$ is always a \lcepp real. So the above observations imply that for
every infinitary self-delimiting machine 
$M^{\infty}$ the measure of $\cofin{M^{\infty}}$ is never a 3-random real.
Similarly, Lemma \ref{SoFOwQ5ys} implies
Lemma \ref{5JCUWCcstk}.

Becher and Chaitin \cite{fuin/BecherC02} study a self-delimiting model for infinite computations 
which is close to the model of \cite{firstBC}, but which avoids the above problem, as the
convergence notion is now $\Pi^0_2$.
Using this modified model,  Becher and Chaitin exhibit probabilities that are random relative to $\emptyset''$.

%\bibliographystyle{alpha}
%\bibliography{ranaspro}
\end{document}